\documentclass[12pt,a4paper]{article}

\usepackage[T1]{fontenc}
\usepackage[utf8]{inputenc}
\usepackage{amsmath,amssymb,amsthm,mathtools}
\usepackage{bm}
\usepackage{braket}
\usepackage{graphicx}
\usepackage{tikz}
\usetikzlibrary{arrows.meta}
\usepackage{booktabs}
\usepackage[margin=1.1in]{geometry}
\usepackage[colorlinks,linkcolor=blue,citecolor=blue]{hyperref}
\usepackage{cleveref}

\theoremstyle{plain}
\newtheorem{theorem}{Theorem}[section]
\newtheorem{proposition}[theorem]{Proposition}
\newtheorem{lemma}[theorem]{Lemma}
\newtheorem{corollary}[theorem]{Corollary}
\theoremstyle{definition}
\newtheorem{definition}[theorem]{Definition}
\newtheorem{example}[theorem]{Example}
\theoremstyle{remark}
\newtheorem{remark}[theorem]{Remark}

\crefname{theorem}{theorem}{theorems}
\crefname{lemma}{lemma}{lemmas}
\Crefname{lemma}{Lemma}{Lemmas}
\Crefname{theorem}{Theorem}{Theorems}
\Crefname{proposition}{Proposition}{Propositions}
\Crefname{corollary}{Corollary}{Corollaries}
\Crefname{definition}{Definition}{Definitions}
\Crefname{remark}{Remark}{Remarks}
\crefname{proposition}{proposition}{propositions}
\crefname{corollary}{corollary}{corollaries}
\crefname{definition}{definition}{definitions}
\crefname{remark}{remark}{remarks}

\newcommand{\D}{\mathbb{D}}
\newcommand{\C}{\mathbb{C}}
\newcommand{\Z}{\mathbb{Z}}

\newcommand{\N}{\mathbb{N}}
\newcommand{\Ak}{A_k}
\newcommand{\Fock}{\mathcal{F}}
\newcommand{\Tk}{T_k}
\newcommand{\wk}{w_k}
\newcommand{\Ft}{\widetilde{F}}
\newcommand{\Finf}{F_\infty}
\newcommand{\ip}[2]{\langle #1 , #2 \rangle}
\newcommand{\avg}[1]{\langle #1 \rangle}
\newcommand{\norm}[1]{\lVert #1 \rVert}
\newcommand{\abs}[1]{\lvert #1 \rvert}

\newcommand{\ord}{\operatorname{ord}}
\newcommand{\dist}{\operatorname{dist}}
\newcommand{\Ree}{\operatorname{Re}}
\newcommand{\Imm}{\operatorname{Im}}
\newcommand{\Nz}{\mathcal{N}}   
\DeclareMathOperator{\Var}{Var}
\DeclareMathOperator{\Betadist}{Beta}
\newcommand{\rme}{\mathrm{e}}
\newcommand{\rmi}{\mathrm{i}}
\newcommand{\rmd}{\mathrm{d}}
\newcommand{\dA}{\,\rmd A}

\newcommand{\dth}{\,\rmd\theta}

\begin{document}
\sloppy   

\title{Stellar rank under the contraction of $SU(1,1)$\\
       to the Heisenberg--Weyl group}

\author{Chon-Fai Kam$^{1,2}$}
\date{}
\maketitle

\begin{center}
$^{1}$\,Dipartimento di Fisica e Chimica ``Emilio Segr\`e'', Universit\`a degli
Studi di Palermo,\\ Via Archirafi 36, I-90123 Palermo, Italy\\[2pt]
$^{2}$\,DSIMB, Inserm UMR\_S 1134 BIGR, Universit\'e Paris Cit\'e,\\
B\^atiment Lavoisier, 149 rue de S\`evres, 75015 Paris, France\\[6pt]
E-mail: \texttt{dubussygauss@gmail.com}
\end{center}
\medskip

\begin{abstract}
Under the contraction of $SU(1,1)$ to the Heisenberg--Weyl group, the zeros of
the Husimi function of a state redistribute according to a single scale, and
we determine which of them survive. The stellar rank of a single-mode bosonic
pure state, the number of these zeros, characterises the Gaussian states on
the plane through the vanishing of the rank. On the Poincar\'e disk, where the
same construction applies to the discrete series of $SU(1,1)$, it does not,
because the zero set leaves a zero-free factor undetermined. We therefore ask
how the zero structure behaves in the contraction limit rather than how it
should be classified at fixed Bargmann index. The entire difference between
the two Hilbert spaces reduces to one weight sequence
$w_k(m)=(2k)^m/(2k)_m$, whose monotonicity in $k$ drives every estimate below.
The rescaled states form a normal family, every nonzero limit point lies in
the Segal--Bargmann space, and those with finitely many zeros are exactly the
finite-rank Fock states. The rank is upper semicontinuous, with equality
precisely when all zeros scale as $O(1/\sqrt{2k})$, the scale of the
coherent-state labels. A two-sided Harnack estimate pins the profile on
compact sets. Separately, a family of rank $r$ degenerates if and only if the amplitudes on the lowest
$r+1$ weight vectors tend to zero. Mass escapes at the edge of the rescaled
disk instead, by a first-order transition of the semiclassical Husimi
potential at an explicit critical coupling.
Finally, the contraction limit retains exactly the lowest $2r+3$ amplitudes,
and this number cannot be lowered.
\end{abstract}

\medskip
\noindent{\it Keywords\/}: stellar rank, group contraction, $SU(1,1)$
coherent states, weighted Bergman space, Husimi function, non-Gaussian states

\section{Introduction}
\label{sec:intro}

A pure state of a single bosonic mode is represented in the Segal--Bargmann
space by an entire function, and the zeros of that function, equivalently the
zeros of the Husimi $Q$ function, form a discrete invariant. Chabaud, Markham
and Grosshans call their number the stellar rank and show that it is a
resource monotone for Gaussian operations, that it is unchanged by them, and
that it equals the minimal number of photon additions needed to prepare the
state \cite{Chabaud2020}. The bottom of the resulting hierarchy is the
statement that a state has rank zero if and only if it is Gaussian.

The mechanism behind it is a piece of classical function theory. A zero-free
entire function is
$\rme^{g}$, and the reproducing kernel of the Segal--Bargmann space bounds it
by $\rme^{\abs{z}^2/2}$. The Borel--Carath\'eodory inequality turns that bound
on $\Ree\,g$ into a bound on $\abs{g}$ and forces $\deg g\le2$, so the
resulting family $\rme^{az^2+bz+c}$ with $\abs{a}<1/2$ is exactly the set of
Gaussian states, which is Hudson's characterisation of the states with
non-negative Wigner function \cite{Hudson1974}. The rank has since been
developed as an operational resource, with certification protocols
\cite{Chabaud2021} and bounds on state conversion \cite{Hahn2026}; the sharp
form of the rank-$r$ wavefunction, $P(x)\rme^{-ax^2+bx+c}$, is established in
\cite{ComplexZeros2025}. Vourdas states this in physical form, observing that the
factors $\rme^{Q_0}$, $\rme^{Q_1}$ and $\rme^{Q_2}$ permitted by Hadamard's
theorem are the vacuum, a coherent state and a squeezed state, and noting
already there that the zeros do not determine the state, because Hadamard's
factorisation admits a zero-free exponential factor
\cite[section 5.3]{Vourdas2006}.

The same construction applies wherever a coherent-state family provides a
holomorphic representation, and in particular to the discrete series of
$SU(1,1)$, whose phase space is the Poincar\'e disk. Klimov, Mu\~noz, Leuchs,
Gazeau and S\'anchez-Soto construct the covariant family of $s$-ordered
quasiprobabilities there, prove that the Wigner function of every Perelomov
coherent state is strictly positive, and define the stellar rank through the
zeros of the Husimi function on the disk \cite{Klimov2026}. They show that
zeros imply Wigner negativity, and they also observe that the
converse fails, the
Barut--Girardello coherent state being zero-free on the disk while its Wigner
function is not positive, and they conclude that the zero set does not
determine the nonclassical structure of a state, because it does not see the
zero-free factor.

The phenomenon was noticed earlier, for $k=1/2$, where the representation
space is the Hardy space of the disk and Vourdas, Brif and Mann record it in
the language of inner and outer functions: the analytic representation of a
Barut--Girardello state is $\rme^{u^*z}$, which is an outer function and
therefore has no zeros in the disk, while the inner and outer parts of a
general state are separated by a factorisation in which only the inner part
carries zeros \cite{VBM1996}. The observation that the zeros leave the outer
and singular factors undetermined appears again in
\cite[section 10.4]{Vourdas2006}. The recent work extends the phenomenon to
all $k$ and supplies the quasiprobability framework in which the failure
acquires an operational meaning.

We do not attempt to repair the classification at fixed $k$. The question here
is what the zeros do under the group contraction $SU(1,1)\to H_1$, which
deforms the discrete series into the Schr\"odinger representation as
$k\to\infty$. The disk becomes the plane, the Bergman weight becomes the
Gaussian weight, and a criterion that is incomplete at every finite $k$ becomes
complete in the limit. Our results describe how, and in particular which zeros
survive the passage and which are lost.

The limit $k\to\infty$ is of course a familiar one. In the Berezin
formalism on the hyperbolic plane, $1/k$ plays the role of Planck's constant
and commutators of operator symbols tend to Poisson brackets
\cite[section 9.3]{Vourdas2006}. That limit acts on symbols of operators,
while the limit taken here acts on states and on the zeros of their Husimi
functions,
and neither statement implies the other, as \cref{sec:setting} makes
precise.

More substantially, Klimov and coauthors close \cite{Klimov2026} by arguing
that hyperbolic phase space supplies a self-consistent language for
quantumness which, in their words, ``does not need to borrow criteria designed
for flat or spherical geometries''. We agree, and the results below are an
argument for that position rather than against it.

The completeness of the flat criterion rests on the exponent two in the
Gaussian weight, which is what the Borel--Carath\'eodory step needs, and that
exponent is manufactured by the contraction and is absent at every finite $k$,
where the disk is a bounded domain and the zero-free class is
infinite-dimensional. The flat criterion is therefore
a feature of a limit rather than a standard that hyperbolic phase space fails
to meet, and \cref{sec:discussion} returns to this.

Group contraction in an analytic representation goes back at least to Ricci,
who contracts
the polynomial representations of $SU(2)$ to the Segal--Bargmann space,
rescaling the variable by $\varepsilon^{-1/2}$ and obtaining convergence of
the representations from the boundedness of the Fock norms together with the
Cauchy integral formula \cite{Ricci1986}. That argument is the same one that
gives normality in \cref{lem:normal}. Ricci does not consider zeros, and the
Majorana constellation had not then entered use as a description of a
representation, so the question could not be posed in his setting. The
reference is cited in the stellar-rank literature \cite{ChabaudMehraban2022}.

The zeros themselves reached phase space through quantum chaos, where Leboeuf
and Voros introduce the term stellar representation for the zeros of the Husimi
function and show that a pure state on a toric or spherical phase space is
determined by finitely many of them \cite{LeboeufVoros1990}, the companion
paper extending the construction to phase spaces generated by other Lie
groups \cite{LeboeufSaraceno1990}. Their statement
is for a fixed group, and no limit between groups is taken.

The two lines have met once, at the level of ensembles. Hannay computes the exact statistics of the Majorana zeros of a random spin-$j$
state and shows that as $j\to\infty$ they converge to the zeros of the planar
chaotic analytic function \cite{Hannay1996,Hannay1998}, the statistics of the
underlying random polynomials having been studied by Bogomolny, Bohigas and
Leboeuf \cite{BBL1996}. Bleher, Shiffman and Zelditch
prove the corresponding universality for random holomorphic sections on a
compact complex manifold, with the Bargmann--Fock kernel as the fixed point of
the rescaling \cite{BSZ2000}. Both statements concern correlation functions of
a Gaussian ensemble, and a state drawn from such an ensemble has infinite
stellar rank almost surely. The finite-rank states, which are the subject
here, form a set of measure zero that these results do not see.

The ensemble picture is nonetheless sharp at one point of the family, and the
contrast is worth drawing. The random model corresponding to $\Ak$ is the
hyperbolic Gaussian analytic function with intensity $L=2k$, and at $L=1$,
that is $k=1/2$, Peres and Vir\'ag prove that its zero set is a determinantal
point process, and the only member of the family for which this holds
\cite{PeresVirag2005}; see \cite{HKPV2009} for the general theory. The moduli of the zeros are then
independent, and the probability that a random state has
no zero in a disk of radius $r$ is the Euler product
$\prod_{m\ge1}(1-r^{2m})$. For other values of $k$ the corresponding hole
probabilities are known only through bounds
\cite{SodinTsirelson2005,BNPS2018}. Zero-freeness is thus completely
solved in the ensemble sense at that one value of $k$. What we describe is the
set of zero-free states, on which that probability measure puts no mass.

The paper is organised around one identity and its consequences. After the
rescaling $\Tk F(z)=F(z/\sqrt{2k})$, the norm of $\Ak$ and the norm of $\Fock$
are two quadratic forms on the same space of power series, differing by the
single sequence $\wk(m)=(2k)^m/(2k)_m$ of \cref{prop:identity}. That sequence
takes values in $(0,1]$, equals one on the two lowest levels, decreases in $m$
and increases in $k$ to one. Its monotonicity in $k$ replaces every estimate
that would otherwise have to be uniform.

\Cref{sec:setting} sets up the two spaces and establishes that identity,
and shows that the scale $\sqrt{2k}$ is forced: it is the only rescaling whose
limiting weight is a nondegenerate Gaussian, and centring at the origin costs
no generality because the whole setting is $SU(1,1)$ covariant.
\Cref{sec:collapse} proves that the rescaled family is normal, that every
nonzero limit point lies in the Segal--Bargmann space, and that those with
finitely many zeros have the form $P\rme^{az^2+bz+c}$ with $\abs{a}<1/2$, so
that they are Fock states of finite stellar rank. \Cref{ex:cat} shows that the
finiteness hypothesis cannot be dropped. \Cref{sec:rank} shows that the rank is
upper semicontinuous and is preserved exactly when every zero is
$O(1/\sqrt{2k})$, the scale that also carries the coherent-state labels, and
that for unbounded rank the statement survives as convergence of the zero
counting measures.

\Cref{sec:rigidity} turns to degeneration, where a two-sided Harnack estimate
pins the modulus of the rescaled function on compact sets by a single number,
the geometric mean on a circle, once a Blaschke factor has removed the zeros.
By a separate argument that does not use that estimate, a family of rank $r$
degenerates if and only if the amplitudes on the lowest $r+1$ weight vectors
tend to zero. At $r=0$ this is the vanishing of the vacuum
fidelity. \Cref{sec:escape} shows that mass can be lost without the limit
vanishing: there is a family of unit norm whose limit has norm $2^{-1/2}$, so that half
of its mass is not accounted for by any compact set. The transition is
governed by a semiclassical potential on the disk with a degenerate double
well at an explicit coupling, and the same computation explains why only the
exponents $n\le2$ survive the contraction. Finally, \cref{sec:limitmap}
determines how much of a family the limit remembers. The answer is the lowest
$2r+3$ amplitudes and no fewer, and every admissible limit is attained by a
section that is holomorphic in the parameters.

Everything above is proved. The one point at which we stop short is the
behaviour exactly at the critical coupling $\lambda^*=0.9784$ of the escape
transition, where the retained fraction is a definite number that we determine
numerically but do not derive.
Throughout, states are pure and single-mode, the representation is a single
irreducible discrete-series representation of $SU(1,1)$, and $k>1/2$. The case
$k=1/2$ is the Hardy space and is noted separately where it differs. Mixed
states, several modes, and the continuous and principal series lie outside the
scope.

\section{The two spaces and the contraction}
\label{sec:setting}

This section fixes the two Hilbert spaces, the map between them, and the one
sequence that measures their difference. Nothing here is new, with the
exception of the way we arrange it, but the arrangement decides what the rest
of the paper has to prove, so we set it out in full.

\subsection{Bergman and Fock spaces}

\begin{definition}\label{def:spaces}
For $k>1/2$ let $\Ak$ denote the space of holomorphic functions
$F(\zeta)=\sum_{m\ge0}c_m\zeta^m$ on the unit disk $\D$ with
\begin{equation}\label{eq:Aknorm}
  \norm{F}_k^2 \;=\; \sum_{m\ge0}\abs{c_m}^2\,\frac{m!}{(2k)_m}\;<\;\infty ,
  \qquad (2k)_m:=\frac{\Gamma(2k+m)}{\Gamma(2k)} .
\end{equation}
This is a weighted Bergman space, with reproducing kernel
$K_k(\zeta,\zeta')=(1-\overline{\zeta'}\zeta)^{-2k}$ and weight
$(1-\abs{\zeta}^2)^{2k-2}$. Let $\Fock$ denote the Segal--Bargmann space of
entire functions $G(z)=\sum_{m\ge0}b_mz^m$ with
$\norm{G}_\Fock^2=\sum_m\abs{b_m}^2m!$ and kernel $\rme^{\overline{z'}z}$.
\end{definition}

The space $\Ak$ carries the discrete-series representation of $SU(1,1)$ with
Bargmann index $k$, whose coherent states in the sense of Perelomov
\cite{Perelomov1972} are the orbit of the lowest weight vector. For the
construction of coherent states from a dynamical group, and for the
coset-space picture that underlies it, we refer to \cite{Gilmore1972,ZFG1990} and to the
textbook treatment in \cite{KZF2023}. The space $\Fock$ carries the
Schr\"odinger representation of the Heisenberg--Weyl group $H_1$. A state
$\ket\psi=\sum_m\psi_m\ket{k,k+m}$ of the discrete series corresponds to the
function $F$ whose coefficients are read off \eqref{eq:Aknorm}, namely
\begin{equation}\label{eq:psic}
  \psi_m=c_m\Bigl[\frac{m!}{(2k)_m}\Bigr]^{1/2},
\end{equation}
so that $\norm{F}_k$ is the norm of the state.

Three sets of coefficients appear throughout and are easy to confuse, so we
fix them now and use them consistently. The letters $c_m$ always denote Taylor
coefficients on the disk, $b_m$ always denote Taylor coefficients on the plane
after the rescaling of \cref{def:contraction}, and $\psi_m$ always denote
amplitudes, that is, coefficients in an orthonormal basis. The disk variable is
$\zeta$ and the plane variable is $z$.

The Husimi function of a state is
$Q_F(\zeta)=(1-\abs{\zeta}^2)^{2k}\abs{F(\zeta)}^2$, and the $SU(1,1)$
invariant measure on $\D$ is
$\rmd\nu_k=\tfrac{2k-1}{\pi}(1-\abs{\zeta}^2)^{-2}\dA$.

We call $F$ the \emph{stellar function} of the state, and the \emph{stellar
rank} $r(F)$ is the number of zeros of $F$ in its domain of holomorphy,
counted with multiplicity. Since $Q_F$ and $F$ have the same zeros, this agrees with the definition
through the Husimi function used in \cite{Chabaud2020} on the plane and in
\cite{Klimov2026} on the disk. On the plane a state has rank zero if and only
if it is Gaussian \cite{Chabaud2020}, whereas the corresponding statement on the disk
fails, which is the point of departure of this paper.

\begin{lemma}\label{lem:kernelbound}
For every $F\in\Ak$ and every $\zeta\in\D$,
\begin{equation}\label{eq:covbound}
  (1-\abs{\zeta}^2)^{k}\,\abs{F(\zeta)} \;\le\; \norm{F}_k .
\end{equation}
\end{lemma}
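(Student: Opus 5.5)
The plan is a direct Cauchy--Schwarz estimate against the coefficient form of the norm in \eqref{eq:Aknorm}, which avoids any appeal to the integral representation of the Bergman norm. For $F(\zeta)=\sum_m c_m\zeta^m$ I will write
\[
  c_m\zeta^m=\Bigl(c_m\Bigl[\tfrac{m!}{(2k)_m}\Bigr]^{1/2}\Bigr)\Bigl(\Bigl[\tfrac{(2k)_m}{m!}\Bigr]^{1/2}\zeta^m\Bigr).
\]
The first factor is exactly the amplitude $\psi_m$ of \eqref{eq:psic}. Cauchy--Schwarz then gives
\[
  \abs{F(\zeta)}^2\le\Bigl(\sum_m\abs{\psi_m}^2\Bigr)\Bigl(\sum_m\tfrac{(2k)_m}{m!}\abs{\zeta}^{2m}\Bigr)=\norm{F}_k^2\sum_m\tfrac{(2k)_m}{m!}\abs{\zeta}^{2m}.
\]

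The second step is to identify the remaining series through the binomial series $(1-x)^{-2k}=\sum_m\frac{(2k)_m}{m!}x^m$, valid for $\abs{x}<1$, applied with $x=\abs{\zeta}^2$. The series therefore equals $K_k(\zeta,\zeta)=(1-\abs{\zeta}^2)^{-2k}$. Multiplying through by $(1-\abs{\zeta}^2)^{2k}$ and taking square roots gives \eqref{eq:covbound}.

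Equivalently, this is the reproducing-kernel bound $\abs{F(\zeta)}=\abs{\ip{F}{K_k(\cdot,\zeta)}_k}\le\norm{F}_k\,K_k(\zeta,\zeta)^{1/2}$. I would mention this as a remark, since it also shows the bound is attained on the coherent states $K_k(\cdot,\zeta)$.

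No step is a real obstacle. The only points to check are the convergence of the series, which holds for $\abs{\zeta}<1$ (convergence of the $\psi$-series is the hypothesis $F\in\Ak$), and the correct Pochhammer identity. The hypothesis $k>1/2$ is not needed for this estimate.
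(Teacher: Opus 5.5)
Your proposal is correct and is essentially the paper's proof: the paper applies Cauchy--Schwarz to $F(\zeta)=\ip{F}{K_k(\cdot,\zeta)}$ and uses $K_k(\zeta,\zeta)=(1-\abs{\zeta}^2)^{-2k}$. Your coefficient-level Cauchy--Schwarz is the same estimate written in the orthonormal basis $e^{(k)}_m$, and your binomial-series step verifies the kernel diagonal $\sum_m\abs{e^{(k)}_m(\zeta)}^2=(1-\abs{\zeta}^2)^{-2k}$, which the paper takes from its definition of $\Ak$.
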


\begin{proof}
Evaluation at $\zeta$ is $F\mapsto\ip{F}{K_k(\cdot,\zeta)}$, so
Cauchy--Schwarz gives
\[
  \abs{F(\zeta)}\le\norm{F}_k\,K_k(\zeta,\zeta)^{1/2}
  =\norm{F}_k\,(1-\abs{\zeta}^2)^{-k} . \qedhere
\]
\end{proof}

We record \eqref{eq:covbound} in this form rather than as a bound on
$\abs{F(\zeta)}$ alone because the left-hand side is $Q_F^{1/2}$, which is
$SU(1,1)$ covariant. \Cref{lem:cocycle} below makes the covariance precise,
and \cref{prop:scale} and the whole of \cref{sec:rigidity} rest on it.

\subsection{The contraction and the weight sequence}

A contraction of Lie groups in the sense of In\"on\"u and Wigner
\cite{InonuWigner1953} deforms one group into another of the same dimension, and the associated deformation of
representations turns the discrete series of $SU(1,1)$ into the
Schr\"odinger representation of $H_1$. At the level of generators the
statement is that $(2k)^{-1/2}K_-\to a$ and $(2k)^{-1/2}K_+\to a^\dagger$ as
$k\to\infty$. See Kastrup \cite[section 2.4]{Kastrup2005}, who traces the
scaling to Barut and Girardello \cite{BG1971}. On functions this amounts to a
single rescaling of the variable.

\begin{definition}\label{def:contraction}
For $F\in\Ak$ set $(\Tk F)(z):=F(z/\sqrt{2k})$, a function holomorphic on
$\abs{z}<\sqrt{2k}$. We write $\Ft_k:=\Tk F_k$ and denote its Taylor
coefficients by $b^{(k)}_m$, suppressing the superscript when no confusion can
arise.
\end{definition}

The rescaling sends the Bergman weight to the Fock weight and the Bergman
kernel to the Fock kernel, both by the elementary limit
$(1-x/n)^{n}\to\rme^{-x}$ applied with $n=2k$,
\begin{equation}\label{eq:weightkernel}
  \Bigl(1-\frac{\abs{z}^2}{2k}\Bigr)^{2k}\longrightarrow\rme^{-\abs{z}^2},
  \qquad
  \Bigl(1-\frac{\overline{z'}z}{2k}\Bigr)^{-2k}\longrightarrow
  \rme^{\overline{z'}z},
\end{equation}
and the disk $\abs{\zeta}<1$ becomes the disk $\abs{z}<\sqrt{2k}$, which
exhausts $\C$. The norm transforms as follows.

\begin{proposition}\label{prop:identity}
With $b_m$ the Taylor coefficients of $\Tk F$,
\begin{equation}\label{eq:key}
  \norm{F}_k^2 \;=\; \sum_{m\ge0}\abs{b_m}^2\,m!\,\wk(m),
  \qquad
  \wk(m):=\frac{(2k)^m}{(2k)_m}=\prod_{j=0}^{m-1}\Bigl(1+\frac{j}{2k}\Bigr)^{-1},
\end{equation}
and the amplitudes satisfy $\psi_m=b_m\sqrt{m!\,\wk(m)}$.
\end{proposition}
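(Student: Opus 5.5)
The identity is a direct coefficient comparison, so I will simply substitute the rescaling into the defining norm \eqref{eq:Aknorm}. First I express the Taylor coefficients of $\Tk F$ through those of $F$. Since $(\Tk F)(z)=F(z/\sqrt{2k})=\sum_m c_m (2k)^{-m/2} z^m$ on $\abs{z}<\sqrt{2k}$, uniqueness of Taylor coefficients gives $b_m=c_m(2k)^{-m/2}$, equivalently $c_m=b_m(2k)^{m/2}$. Substituting into \eqref{eq:Aknorm} yields
\[
  \norm{F}_k^2=\sum_{m\ge0}\abs{c_m}^2\frac{m!}{(2k)_m}
  =\sum_{m\ge0}\abs{b_m}^2\,m!\,\frac{(2k)^m}{(2k)_m}.
\]
This is the first claim with $\wk(m)=(2k)^m/(2k)_m$. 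Every term is nonnegative, so the rearrangement needs no convergence argument beyond the termwise equality: both sides are the same series.

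Next I verify the product form of $\wk(m)$. From the definition $(2k)_m=\Gamma(2k+m)/\Gamma(2k)$ and the functional equation of $\Gamma$, one has $(2k)_m=\prod_{j=0}^{m-1}(2k+j)$, the rising factorial. Dividing by $(2k)^m=\prod_{j=0}^{m-1}2k$ factor by factor gives $\prod_{j=0}^{m-1}(1+j/(2k))^{-1}$. For $m=0$ both sides are the empty product $1$.

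Finally, for the amplitudes I insert $c_m=b_m(2k)^{m/2}$ into \eqref{eq:psic}:
\[
  \psi_m=b_m(2k)^{m/2}\Bigl[\frac{m!}{(2k)_m}\Bigr]^{1/2}
  =b_m\sqrt{m!\,\wk(m)}.
\]
This step takes the positive square root, which is legitimate because $2k>1$ makes every factor positive.

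I do not expect a real obstacle here. The only point needing care is the bookkeeping between the three coefficient systems $c_m$, $b_m$ and $\psi_m$, in particular getting the sign of the power of $2k$ right in $b_m=c_m(2k)^{-m/2}$.
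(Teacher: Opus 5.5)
Your proposal is correct and follows the paper's own proof in \cref{app:A2}: substitute $b_m=(2k)^{-m/2}c_m$ into \eqref{eq:Aknorm}, obtain the product form of $\wk(m)$ from $(2k)_m=\prod_{j=0}^{m-1}(2k+j)$ as in \cref{app:A1}, and insert the same substitution into \eqref{eq:psic} to get the amplitudes. One small point: the positivity you need for the square root holds for every $k>0$, so it does not depend on $2k>1$.
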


\begin{proof}
$b_m=(2k)^{-m/2}c_m$. Substitute into \eqref{eq:Aknorm} and use
$\wk(m)=(2k)^m/(2k)_m$. The second identity then follows from
\eqref{eq:psic}. The details are in \cref{app:A2}.
\end{proof}

It is tempting to read \eqref{eq:key} as a comparison of the two norms, and
this is nearly right, but $\Tk F$ need not lie in $\Fock$: for $F\in\Ak$ it is
holomorphic only on $\abs{z}<\sqrt{2k}$, and even when it continues to an
entire function its Fock norm may be infinite. The accurate statement is that
the two norms are quadratic forms on one and the same space of formal power
series. Writing
\begin{equation}\label{eq:twoforms}
  Q_k[b]:=\sum_m\abs{b_m}^2m!\,\wk(m),
  \qquad
  Q_\infty[b]:=\sum_m\abs{b_m}^2m! ,
\end{equation}
we have $\norm{F}_k^2=Q_k[b]$, while finiteness of $Q_\infty[b]$ characterises
membership of $\Fock$. The whole difference between the two spaces is carried
by the sequence $\wk$, whose properties we now list.

\begin{lemma}\label{lem:weight}
For every $k>0$ the sequence $\wk$ satisfies
\begin{enumerate}
\item[(i)] $0<\wk(m)\le1$, with $\wk(0)=\wk(1)=1$ and $\wk(m)<1$ for $m\ge2$;
\item[(ii)] $\wk(m)$ is nonincreasing in $m$, and strictly decreasing for
  $m\ge1$;
\item[(iii)] $\wk(m)$ is strictly increasing in $k$ for each $m\ge2$;
\item[(iv)] $\wk(m)\uparrow1$ as $k\to\infty$, for each fixed $m$.
\end{enumerate}
\end{lemma}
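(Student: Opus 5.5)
All four properties follow from the product representation in \eqref{eq:key},
\[
  \wk(m)=\prod_{j=0}^{m-1}\Bigl(1+\frac{j}{2k}\Bigr)^{-1},
\]
which we take as the starting point. It comes from writing $(2k)_m=\prod_{j=0}^{m-1}(2k+j)$ and dividing each factor by $2k$. Every factor $(1+j/(2k))^{-1}$ lies in $(0,1]$, and it equals $1$ exactly when $j=0$. The plan is to read off (i)--(iv) directly from this product, factor by factor. No earlier result beyond the definition of the Pochhammer symbol is needed.

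For (i), positivity holds because the product is finite and every factor is positive. The empty product gives $\wk(0)=1$, and $\wk(1)$ is the single factor at $j=0$, which is $1$. For $m\ge2$ the product contains the factor at $j=1$, namely $(1+1/(2k))^{-1}<1$, so $\wk(m)<1$. For (ii), consider the ratio
\[
  \frac{\wk(m+1)}{\wk(m)}=\Bigl(1+\frac{m}{2k}\Bigr)^{-1}.
\]
It equals $1$ at $m=0$ and is strictly less than $1$ for $m\ge1$. This gives monotonicity, with strict decrease from $m=1$ onward.

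For (iii), fix $m\ge2$. Each factor $(1+j/(2k))^{-1}$ with $j\ge1$ is strictly increasing in $k>0$, since $j/(2k)$ strictly decreases. The $j=0$ factor is constant. A finite product of positive factors that are nondecreasing in $k$, at least one of them strictly increasing, is itself strictly increasing. For (iv), each factor tends to $1$ as $k\to\infty$ with $m$ fixed, and a finite product of such factors also tends to $1$. Combined with (iii), and trivially for $m\in\{0,1\}$, the convergence is monotone from below, which is the meaning of $\uparrow$.

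There is no real obstacle in this argument. The only point needing care is to keep the statements uniform in the index. In (ii) the strictness begins at $m=1$, because the step from $m=0$ to $m=1$ is an equality. In (iii) the hypothesis $m\ge2$ is exactly what guarantees a nonconstant factor. The proof uses only $k>0$, so it holds for all $k>0$ and not merely for $k>1/2$, as the statement asserts.
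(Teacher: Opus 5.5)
Your proposal is correct and matches the paper's own proof (\cref{app:A1}) essentially step for step. Both read (i)--(iv) off the product $\prod_{j<m}(1+j/2k)^{-1}$, use the ratio $w_k(m+1)/w_k(m)=(1+m/2k)^{-1}$ for (ii), use the presence of a factor with $j\ge1$ when $m\ge2$ for (iii), and get monotone convergence from below in (iv) by combining (iii) with the finiteness of the product.
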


\begin{proof}
Each assertion reduces to a statement about a single factor of the product in
\eqref{eq:key}. The proofs, and a quantitative form of the decay in $m$, are
in \cref{app:weight}.
\end{proof}

In the language of \eqref{eq:twoforms}, \cref{lem:weight} says that
$Q_k\le Q_\infty$ for every $k$, that the two forms agree on the span of the
two lowest levels, and that $Q_k\uparrow Q_\infty$ pointwise. The contraction
is a monotone deformation of one quadratic form into another on a fixed space,
proceeding one coefficient at a time.

Property (iii) is what makes the estimates below soft. Wherever a bound
uniform in $k$ would otherwise be required, monotonicity in $k$ allows us to compare
$\wk$ with $w_{k_0}$ at a fixed $k_0\le k$ and then pass to the limit by
monotone convergence. \Cref{prop:fatou} is the clearest instance, and
\cref{thm:section} uses the same mechanism in the opposite direction.

\subsection{Covariance and the uniqueness of the scale}

The rescaling of \cref{def:contraction} is centred at the origin, and the
origin has no distinguished role in $\D$, since $SU(1,1)$ acts transitively.
Before anything else we therefore check that nothing is lost by centring, and
that the power $\sqrt{2k}$ is forced rather than chosen.

\begin{lemma}\label{lem:cocycle}
Let $g=\begin{psmallmatrix}a&b\\ \bar b&\bar a\end{psmallmatrix}$ with
$\abs{a}^2-\abs{b}^2=1$ act on $\D$ by
$g\cdot\zeta=(a\zeta+b)/(\bar b\zeta+\bar a)$. Then
\begin{equation}\label{eq:cocycle}
  1-\abs{g\cdot\zeta}^2=\frac{1-\abs{\zeta}^2}{\abs{\bar b\zeta+\bar a}^2}.
\end{equation}
Consequently $(U_gF)(\zeta):=(\bar b\zeta+\bar a)^{-2k}F(g\cdot\zeta)$ defines
a unitary representation of $SU(1,1)$ on $\Ak$, the Husimi amplitude
$(1-\abs{\zeta}^2)^k\abs{F(\zeta)}$ of \eqref{eq:covbound} is invariant
pointwise, in the sense that its value at $\zeta$ for $U_gF$ equals its value
at $g\cdot\zeta$ for $F$, and $U_g$ preserves the number of zeros in $\D$.
\end{lemma}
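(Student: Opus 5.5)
The plan has four parts: verify the cocycle identity \eqref{eq:cocycle} directly, use it to derive the pointwise invariance of the Husimi amplitude, deduce unitarity from the integral form of the norm, and read off the statement about zeros.

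For \eqref{eq:cocycle}, I expand $\abs{\bar b\zeta+\bar a}^2-\abs{a\zeta+b}^2$. The cross terms $\bar a b\bar\zeta$, $a\bar b\zeta$ and their conjugates cancel. What remains is $(\abs{a}^2-\abs{b}^2)-(\abs{a}^2-\abs{b}^2)\abs{\zeta}^2=1-\abs{\zeta}^2$. Dividing by $\abs{\bar b\zeta+\bar a}^2$ gives the identity.

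Since $\abs{\bar b\zeta+\bar a}\ge\abs{a}-\abs{b}\abs{\zeta}>0$ on $\D$, the map $g$ preserves $\D$. The factor $(\bar b\zeta+\bar a)^{-2k}$ is then a well-defined holomorphic function on the simply connected domain $\D$, using a fixed branch of the logarithm. Raising \eqref{eq:cocycle} to the power $k$ and multiplying by $\abs{F(g\cdot\zeta)}$ gives
\[
(1-\abs{\zeta}^2)^k\abs{(U_gF)(\zeta)}=(1-\abs{g\cdot\zeta}^2)^k\abs{F(g\cdot\zeta)},
\]
which is the pointwise invariance.

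For unitarity I use the integral form of the norm from \cref{def:spaces}. Up to a positive constant,
\[
\norm{F}_k^2=\int_\D\abs{F}^2(1-\abs{\zeta}^2)^{2k}\,\rmd\nu_k .
\]
This equals $\int_\D Q_F\,\rmd\nu_k$. Checking it on monomials, the Beta integral gives $m!/(2k)_m$ times $\norm{1}_k^2$; I fix the constant so that it matches \eqref{eq:Aknorm}. The integrand of $\norm{U_gF}_k^2$ is $Q_F(g\cdot\zeta)$ by the invariance above. The measure $\nu_k$ is invariant under Möbius maps, because the Jacobian $\abs{\bar b\zeta+\bar a}^{-4}$ cancels against \eqref{eq:cocycle} squared. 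A change of variables therefore gives $\norm{U_gF}_k=\norm{F}_k$, so $U_g$ is an isometry on $\Ak$.

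Two further points complete the representation property.
\begin{itemize}
\item \emph{Surjectivity.} $U_{g^{-1}}$ inverts $U_g$.
\item \emph{Homomorphism.} Write $j(g,\zeta)=\bar b\zeta+\bar a$. The chain rule $j(gh,\zeta)=j(g,h\cdot\zeta)\,j(h,\zeta)$ gives $U_hU_g=U_{gh}$ up to a branch phase. If the paper wants a genuine rather than projective representation for non-half-integer $2k$, this means working on the universal cover. I would note this and not dwell on it.
\end{itemize}
I expect the step needing most care to be this branch and phase bookkeeping, rather than any real estimate.

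Finally, the prefactor $(\bar b\zeta+\bar a)^{-2k}$ never vanishes on $\D$. So the zeros of $U_gF$ are exactly the preimages under the biholomorphism $\zeta\mapsto g\cdot\zeta$ of the zeros of $F$. A biholomorphism preserves multiplicities, so the number of zeros in $\D$ is unchanged.
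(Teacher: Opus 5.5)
Your proposal is correct and takes essentially the same route as the paper's proof in \cref{app:B1}: expand the numerator of $1-\abs{g\cdot\zeta}^2$, raise the identity to the power $k$ to get pointwise invariance of the Husimi amplitude, integrate its square against the invariant measure $\rmd\nu_k$ to get the isometry with $U_{g^{-1}}$ as inverse, and count zeros through the bijection $\zeta\mapsto g\cdot\zeta$. Your extra checks make explicit what the paper leaves implicit: the Beta-integral identification $\norm{F}_k^2=\int_\D Q_F\,\rmd\nu_k$, whose constant is exactly one with the paper's normalisation of $\nu_k$, and the Jacobian cancellation. Two small wording slips: your computation $U_hU_g=U_{gh}$ shows that $g\mapsto U_g$ is an anti-homomorphism, so $g\mapsto U_{g^{-1}}$ is the homomorphism; and the branch ambiguity concerns non-integer $2k$ (that is, $k$ not a half-integer), not ``non-half-integer'' $2k$.
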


\begin{proof}
Expand both moduli and use $\abs{a}^2-\abs{b}^2=1$. See \cref{app:jensen}.
\end{proof}

\begin{proposition}\label{prop:scale}
Consider rescalings of the form $\zeta=g_k(\lambda_ku)$ with $g_k\in SU(1,1)$
and $\lambda_k\to0$. Replacing $F_k$ by $U_{g_k}^{-1}F_k$ changes neither the
norm, nor the number of zeros in $\D$, nor the Husimi amplitude, and reduces
the rescaling to the centred one. For the centred rescaling,
\begin{equation}\label{eq:weightlimit}
  (1-\lambda_k^2\abs{u}^2)^{2k}\longrightarrow
  \exp\Bigl(-\lim_k 2k\lambda_k^2\,\abs{u}^2\Bigr),
\end{equation}
and the limit weight is nondegenerate if and only if
$2k\lambda_k^2\to c\in(0,\infty)$, that is, $\lambda_k\asymp1/\sqrt{2k}$. If
$2k\lambda_k^2\to\infty$ the limit weight vanishes off the origin, and if
$2k\lambda_k^2\to0$ it is identically one and imposes no confinement.
\end{proposition}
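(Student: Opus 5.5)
The proof has two parts: a reduction to the centred rescaling using \cref{lem:cocycle}, and an elementary limit computation for the weight.

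For the reduction, set $G_k:=U_{g_k}^{-1}F_k=U_{g_k^{-1}}F_k$. By \cref{lem:cocycle}, $U_{g_k}$ is unitary, so $\norm{G_k}_k=\norm{F_k}_k$. It also preserves the number of zeros in $\D$. Indeed, $(\bar b\zeta+\bar a)^{-2k}$ never vanishes on $\D$ because $\abs{a}>\abs{b}$, and $\zeta\mapsto g\cdot\zeta$ is a biholomorphism of $\D$ preserving multiplicities. The pointwise invariance of the Husimi amplitude in \cref{lem:cocycle} gives
\[
(1-\abs{\zeta}^2)^k\abs{G_k(\zeta)}=(1-\abs{g_k\cdot\zeta}^2)^k\abs{F_k(g_k\cdot\zeta)} .
\]
Evaluating at $\zeta=\lambda_ku$, the Husimi amplitude of $F_k$ along the rescaled chart $\zeta=g_k(\lambda_ku)$ equals that of $G_k$ along the centred chart $\zeta=\lambda_ku$. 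So every quantity in the statement is the same for the pair $(F_k,g_k)$ as for $(G_k,\mathrm{id})$, and it suffices to treat $g_k=\mathrm{id}$. The only care needed here is to note that $U_g^{-1}=U_{g^{-1}}$, which follows from the homomorphism property in \cref{lem:cocycle}. The rest of this step is bookkeeping.

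For the centred rescaling, fix $u$ and write
\[
(1-\lambda_k^2\abs{u}^2)^{2k}=\exp\bigl(2k\log(1-\lambda_k^2\abs{u}^2)\bigr) .
\]
Since $\lambda_k\to0$, we have $\lambda_k^2\abs{u}^2\to0$. The expansion $\log(1-x)=-x+O(x^2)$ then gives
\[
2k\log(1-\lambda_k^2\abs{u}^2)=-2k\lambda_k^2\abs{u}^2\bigl(1+O(\lambda_k^2\abs{u}^2)\bigr).
\]
The relative error tends to zero whatever the behaviour of $2k\lambda_k^2$. This is the one point needing attention: the error term is multiplicative, so it is harmless even when $2k\lambda_k^2\to\infty$. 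Passing to the limit, which may be taken along subsequences in $[0,\infty]$, yields \eqref{eq:weightlimit}.

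The trichotomy is then read off from $\exp(-c\abs{u}^2)$.
\begin{itemize}
\item If $c\in(0,\infty)$, the weight is a nondegenerate Gaussian. The substitution $u\mapsto u/\sqrt c$ turns it into the Fock weight $\rme^{-\abs{u}^2}$, so $\lambda_k\asymp1/\sqrt{2k}$.
\item If $c=\infty$, then for $u\neq0$ the exponent tends to $-\infty$, so the weight tends to $0$, while it equals $1$ at $u=0$.
\item If $c=0$, the weight tends to $1$ at every $u$, uniformly on compact sets, so it imposes no confinement.
\end{itemize}
Conversely, a nondegenerate limit forces every subsequential limit of $2k\lambda_k^2$ to lie in $(0,\infty)$ and to be unique. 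This gives the ``only if'' direction.
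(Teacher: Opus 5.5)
Your argument is correct and is the paper's own: reduction through \cref{lem:cocycle}, then $\log(1-x)=-x+O(x^2)$. Writing the error multiplicatively is slightly better than the paper's additive bound $O\bigl((2k\lambda_k^2)\lambda_k^2\bigr)$, which vanishes only when $2k\lambda_k^2$ stays bounded and so does not by itself cover the case $2k\lambda_k^2\to\infty$.

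There is one bookkeeping slip. With the convention of \cref{lem:cocycle}, the Husimi amplitude of $U_{g_k}^{-1}F_k=U_{g_k^{-1}}F_k$ at $\zeta$ equals that of $F_k$ at $g_k^{-1}\cdot\zeta$, not at $g_k\cdot\zeta$. The identity you display, which is the one the reduction actually needs, therefore holds for $U_{g_k}F_k$. The same inversion appears in the statement's wording, and it is harmless once $g_k$ is replaced by $g_k^{-1}$.
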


The proof is in \cref{app:B1}. The reduction it performs is not an appeal to a
normalisation. The
hypotheses and the conclusions of every statement in this paper are invariant
under $U_g$, so a family and its image under a $k$-dependent group element
stand or fall together, and the centred case is the general one.

\subsection{The limit $k\to\infty$}

The Bargmann index is not a free parameter. It is fixed by the realisation of
the algebra, and only some realisations admit large $k$.
\Cref{tab:realizations} lists three. The third gives the cleanest physical
reading, in which $k\to\infty$ is a large-angular-momentum limit and is the
non-compact counterpart of the large-spin limit $j\to\infty$ for $SU(2)$. In
the two-mode realisation the same limit is one of large photon-number
imbalance between the modes, and that realisation is the one underlying the
$SU(1,1)$ interferometers of \cite{YMK1986}.

\begin{table}[t]
\centering
\begin{tabular}{lll}
\toprule
Realisation & $k$ & meaning of $k\to\infty$ \\
\midrule
single mode, $K_+=\tfrac12(a^\dagger)^2$ & $1/4$, $3/4$
  & not available${}^{\dagger}$ \\
two modes, $K_+=a_1^\dagger a_2^\dagger$ & $\tfrac12(\abs{n_1-n_2}+1)$
  & large photon-number imbalance \\
radial part, isotropic oscillator & $\tfrac12(\ell+3/2)$
  & large angular momentum \\
\bottomrule
\end{tabular}
\caption{Three realisations of the discrete series, and what the contraction
limit means in each. ${}^{\dagger}$The first row lies outside the standing assumption $k>1/2$ and
is listed only to record that the contraction is unavailable there.}
\label{tab:realizations}
\end{table}

The first row explains why the contraction is unavailable for single-mode
squeezing. There $K_+=\tfrac12(a^\dagger)^2$ gives Casimir $-3/16$ and acts within the
even and odd photon-number sectors separately, so the Bargmann index is pinned
at $1/4$ or $3/4$ by the algebra itself \cite[section 5]{BVM1996}, with no
parameter left to send to infinity. The
contraction is a statement about a family of representations, and the
single-mode realisation supplies only two of them.

The second row is the one to keep in mind. There the sectors are labelled by
the photon-number difference $m=n_1-n_2$, which is conserved, and the
Perelomov coherent states are the two-mode squeezed states. The two-mode
squeezed vacuum itself sits at $m=0$, that is $k=1/2$, the Hardy case. Letting
$k\to\infty$ means preparing the two modes with a large fixed imbalance, and
the contraction then says that a two-mode squeezed state whose parameter is
scaled as $\zeta_0=\alpha/\sqrt{2k}$ becomes indistinguishable from a Glauber
coherent state with label $\alpha$. Weak squeezing on top of a large imbalance
looks like a displacement.

None of this is the Berezin semiclassical limit recalled in \cref{sec:intro},
which acts on symbols of operators, whereas the rescaling of
\cref{def:contraction} acts on states.

\begin{remark}[The Hardy case]\label{rem:hardy}
At $k=1/2$ the weight $(1-\abs{\zeta}^2)^{2k-2}$ is not integrable and the
factor $2k-1$ in $\rmd\nu_k$ vanishes, so the measure-theoretic side of the
setting degenerates. The resolution of the identity behind $\rmd\nu_k$ holds
for $k>1/2$, with the limit $k\to1/2$ to be taken after the angular
integration \cite[section 2.1]{BVM1996}; an analytic continuation gives a
weak resolution valid for $k<1/2$ as well \cite[section 4]{BVM1996}, which we
do not need here. What survives is everything that uses only the monomial
norms and the reproducing kernel. There $(2k)_m=m!$, so \eqref{eq:Aknorm}
reads $\norm{F}^2_{1/2}=\sum_m\abs{c_m}^2$ and $\Ak$ is the Hardy space
$H^2(\D)$ with kernel $(1-\overline{\zeta'}\zeta)^{-1}$, while the weight
sequence $\wk$ of \eqref{eq:key} is defined for every $k>0$ and
\cref{lem:weight} holds unchanged. Consequently
\cref{sec:collapse,sec:rank}, the rigidity of
\cref{lem:harnack,thm:harnack-r} and the whole of \cref{sec:limitmap} apply
verbatim at $k=1/2$, as does the statement that the zero-free class is
infinite-dimensional, since $H^\infty(\D)\subset H^2(\D)$.

The covariance of \cref{lem:cocycle} also survives, although its proof does
not: the unitarity of $U_g$ is verified there by integrating against
$\rmd\nu_k$, which is unavailable at $k=1/2$. What replaces it is the boundary
form of the Hardy norm. For $2k=1$ the identity
$\abs{\bar b\zeta+\bar a}^{-2}\,\rmd\theta=\rmd\theta'$ on $\abs{\zeta}=1$,
where $\theta'$ is the argument of $g\cdot\zeta$, gives
$\frac1{2\pi}\int\abs{(U_gF)(\rme^{\rmi\theta})}^2\rmd\theta
=\frac1{2\pi}\int\abs{F}^2\rmd\theta'$ directly, so $U_g$ is unitary and
\cref{prop:scale} keeps its centring reduction.

Two things do not carry over. \Cref{lem:beta} needs $2k-1>0$, so the Husimi
measure $\mu_k$, and with it parts (iii) and (iv) of \cref{thm:escape} and all
of \cref{sec:escape}, require $k>1/2$. And the pathology of Bergman zero sets
invoked after \cref{lem:harnack} is absent, since Hardy-space zero sets do
satisfy the Blaschke condition, although the rigidity argument uses neither
fact. The value $k=1/2$ is also where the random model becomes determinantal,
as recalled in \cref{sec:intro}.
\end{remark}

\section{The collapse theorem}
\label{sec:collapse}

The contraction rescales a family of states, one for each $k$, into a family
of functions on larger and larger disks. This section establishes that the
family has limit points, that the limits lie in $\Fock$, and that they are
finite-rank Fock states whenever they have finitely many zeros. The three
steps are independent and use different properties of $\wk$ from
\cref{lem:weight}.

\subsection{Normality}

The rescaled functions live on disks $\abs{z}<\sqrt{2k}$ that exhaust $\C$, so
a limit can be discussed only on compact sets, and only after a bound that
does not deteriorate with $k$. The reproducing kernel supplies such a bound.

\begin{lemma}[Normality]\label{lem:normal}
Let $\norm{F_k}_k\le1$. Then
\begin{equation}\label{eq:rescaledbound}
  \abs{\Ft_k(z)}\le\Bigl(1-\frac{\abs{z}^2}{2k}\Bigr)^{-k}
  \qquad\text{for }\abs{z}^2<2k,
\end{equation}
and the right-hand side is strictly decreasing in $k$ with limit
$\rme^{\abs{z}^2/2}$. Consequently, for every $R>0$ and every $k_0>R^2/2$,
\begin{equation}\label{eq:M}
  \sup_{\abs{z}\le R}\abs{\Ft_k(z)}\ \le\ M
  :=\Bigl(1-\frac{R^2}{2k_0}\Bigr)^{-k_0}
  \qquad\text{for all }k\ge k_0 .
\end{equation}
Consequently $\{\Ft_k\}_{k\ge k_0}$ is uniformly bounded on compact subsets of
$\abs{z}<\sqrt{2k_0}$, and by a diagonal argument over an exhaustion of $\C$ it
is a normal family on $\C$.
\end{lemma}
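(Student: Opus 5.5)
The bound \eqref{eq:rescaledbound} comes straight from \cref{lem:kernelbound}. Apply \eqref{eq:covbound} to $F_k$ at the point $\zeta=z/\sqrt{2k}$, which lies in $\D$ precisely when $\abs{z}^2<2k$. Since $\norm{F_k}_k\le1$, this gives
\[
  \abs{\Ft_k(z)}=\abs{F_k(z/\sqrt{2k})}\le\Bigl(1-\frac{\abs{z}^2}{2k}\Bigr)^{-k}.
\]
Nothing more is needed for the pointwise estimate.

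The substantive step is the monotonicity in $k$ of the right-hand side. Fix $x=\abs{z}^2\ge0$ and set $t=x/(2k)\in(0,1)$. The logarithm of the bound is
\[
  -k\log(1-t)=\frac{x}{2}\cdot\frac{-\log(1-t)}{t}.
\]
The series $-\log(1-t)/t=\sum_{j\ge0}t^j/(j+1)$ has positive coefficients, so it is strictly increasing in $t$ for $t\in(0,1)$ when $x>0$. Since $t$ decreases strictly as $k$ grows, the bound strictly decreases in $k$. As $t\to0$ the series tends to $1$, so the limit is $\rme^{x/2}=\rme^{\abs{z}^2/2}$. This is the same elementary limit as in \eqref{eq:weightkernel}. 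At $x=0$ the bound is identically $1$, so strictness there is vacuous and I would say so.

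Next comes \eqref{eq:M}. For $\abs{z}\le R$ and $k\ge k_0>R^2/2$, the point $z$ lies in the domain of $\Ft_k$. The bound is increasing in $\abs{z}$ and, by the previous step, nonincreasing in $k$. Hence
\[
  \abs{\Ft_k(z)}\le\Bigl(1-\frac{R^2}{2k}\Bigr)^{-k}\le\Bigl(1-\frac{R^2}{2k_0}\Bigr)^{-k_0}=M.
\]
Every compact subset of $\abs{z}<\sqrt{2k_0}$ lies in some closed disk $\abs{z}\le R$ with $R^2<2k_0$, so the family $\{\Ft_k\}_{k\ge k_0}$ is uniformly bounded there.

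For normality on $\C$, take an exhaustion by radii $R_n=n$ and choose $k_n>n^2/2$. Montel's theorem, or Cauchy estimates plus Arzel\`a--Ascoli, gives for each $n$ a subsequence of any given sequence $k_j\to\infty$ converging locally uniformly on $\abs{z}<n$. Taking nested subsequences and then the diagonal yields a subsequence converging locally uniformly on all of $\C$. Each $\Ft_k$ is only defined for $\abs{z}<\sqrt{2k}$, so "normal on $\C$" should be read in this exhaustion sense, and I would state that convention explicitly.

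The only genuine point is the monotonicity of the bound in $k$, handled by the series argument above. The rest is bookkeeping.
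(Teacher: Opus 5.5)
Your proposal is correct and follows the paper's proof essentially line by line: the pointwise bound is \eqref{eq:covbound} evaluated at $\zeta=z/\sqrt{2k}$, monotonicity in $k$ comes from the positive-coefficient series $-\log(1-u)/u=\sum_{n\ge1}u^{n-1}/n$, the value at $k_0$ bounds the whole tail, and Montel's theorem with a diagonal argument gives normality. You also note explicitly that the bound is increasing in $\abs{z}$, which is needed to pass from the pointwise estimate to the supremum over $\abs{z}\le R$, and that strictness is vacuous at $z=0$; the paper leaves both points implicit.
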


\begin{proof}
The bound \eqref{eq:rescaledbound} is \eqref{eq:covbound} at
$\zeta=z/\sqrt{2k}$. For the monotonicity, fix $t=\abs{z}^2>0$ and put
$u=t/2k\in(0,1)$, so that
\[
  -k\log\Bigl(1-\frac{t}{2k}\Bigr)=\frac{t}{2}\cdot\frac{-\log(1-u)}{u}
  =\frac{t}{2}\sum_{n\ge1}\frac{u^{n-1}}{n} .
\]
The series has positive coefficients, so it is strictly increasing in $u$,
and $u$ decreases as $k$ increases. Expanding, the right-hand side is
$\tfrac{t}{2}+\tfrac{t^2}{8k}+O(k^{-2})$, so the limit $u\to0$ gives $t/2$ and
the approach is from above at rate $1/k$. Since
\eqref{eq:rescaledbound} holds for every $k$ and its right-hand side
decreases, the value at $k_0$ bounds the whole tail, which is \eqref{eq:M}.
Montel's theorem then gives normality.
\end{proof}

From here on we work with limit points along subsequences. Whether the family
converges as a whole is a separate question, settled in \cref{thm:unique}.

The mechanism of \cref{lem:normal}, a norm bound converted into a locally
uniform bound and then into normality, is the one Ricci uses in the compact
case $SU(2)\to H_1$ \cite[Proposition 1]{Ricci1986}, where the rescaled
objects are the Maclaurin sections of a function in $\Fock$ and the bound
comes from their Fock norms. The zeros play no role there, and the question
of what happens to them was not available at the time.

\subsection{Lower semicontinuity of the Fock norm}

A limit point of a norm-bounded family need not have the same norm, and in
\cref{sec:escape} we exhibit families whose limits have strictly smaller norm,
the deficit being mass that escapes to infinity under the rescaling. What does
survive is an inequality, and it is here that monotonicity of $\wk$ in $k$
enters for the first time.

\begin{proposition}\label{prop:fatou}
Suppose $\norm{F_k}_k\le1$ and $\Ft_k\to\Finf$ locally uniformly along a
subsequence. Then $\Finf\in\Fock$ and
\begin{equation}\label{eq:lsc}
  \norm{\Finf}_\Fock\ \le\ \liminf_k\ \norm{F_k}_k .
\end{equation}
\end{proposition}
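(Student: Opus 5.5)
The plan is to work coefficient by coefficient, using the identity of \cref{prop:identity} together with the monotonicity of $\wk$ in $k$ from \cref{lem:weight}(iii). Write $b^{(k)}_m$ for the Taylor coefficients of $\Ft_k$ and $b_m$ for those of $\Finf$. Locally uniform convergence on a fixed disk $\abs{z}\le R$ gives, via the Cauchy integral formula, $b^{(k)}_m\to b_m$ for each fixed $m$ along the subsequence. This is the only use of the convergence hypothesis. Since every $\Ft_k$ with $2k>R^2$ is holomorphic on $\abs{z}\le R$, the coefficients are eventually well defined and the limit is legitimate.

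Next, fix a truncation level $N$ and a reference index $k_0$, and take $k\ge k_0$. The terms of \eqref{eq:key} are nonnegative, and $\wk(m)\ge w_{k_0}(m)$ for $k\ge k_0$ by \cref{lem:weight}(iii). Hence
\[
  \sum_{m=0}^{N}\abs{b^{(k)}_m}^2\,m!\,w_{k_0}(m)\;\le\;\sum_{m\ge0}\abs{b^{(k)}_m}^2\,m!\,\wk(m)\;=\;\norm{F_k}_k^2 .
\]
The left-hand side is a finite sum, so letting $k\to\infty$ along the subsequence gives $\sum_{m\le N}\abs{b_m}^2m!\,w_{k_0}(m)\le\liminf_k\norm{F_k}_k^2$.

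Then send $k_0\to\infty$. By \cref{lem:weight}(iv), $w_{k_0}(m)\uparrow1$ for each $m$. Since the sum is finite this needs only termwise limits, and it gives $\sum_{m\le N}\abs{b_m}^2m!\le\liminf_k\norm{F_k}_k^2$. Finally let $N\to\infty$ by monotone convergence of the partial sums to get $Q_\infty[b]\le\liminf_k\norm{F_k}_k^2\le1$. Because $\Finf$ is entire, being a locally uniform limit on an exhaustion of $\C$, finiteness of $Q_\infty[b]$ means exactly that $\Finf\in\Fock$. Taking square roots yields \eqref{eq:lsc}.

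The step needing care is the order of limits. We truncate first, take $k\to\infty$ at a fixed $k_0$, and only then remove $k_0$ and $N$. Taking $k\to\infty$ directly in the infinite sum would require uniform control of the tails, which is exactly what fails when mass escapes, as in \cref{sec:escape}. The monotone comparison with $w_{k_0}$ is what avoids needing that uniformity. Everything else is routine.
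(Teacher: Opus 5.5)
Your proposal is correct and matches the paper's proof essentially step for step. Both use coefficientwise convergence via the Cauchy formula, truncate at level $N$, and compare with $w_{k_0}$ using the monotonicity in $k$. Both then let $k\to\infty$ in the finite sum and remove $k_0$ and then $N$ by monotone convergence, in the same order of limits.
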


\begin{proof}
Local uniform convergence gives $b^{(k)}_m\to b_m$ for each fixed $m$, by the
Cauchy integral formula on a fixed circle. Fix $k_0>1/2$ and $N\in\N$. For
every $k\ge k_0$, \cref{lem:weight}(iii) gives $w_{k_0}(m)\le\wk(m)$, so
\[
  \sum_{m\le N}\abs{b^{(k)}_m}^2m!\,w_{k_0}(m)
  \ \le\ \sum_{m\le N}\abs{b^{(k)}_m}^2m!\,\wk(m)
  \ \le\ \sum_{m\ge0}\abs{b^{(k)}_m}^2m!\,\wk(m)
  \ =\ \norm{F_k}_k^2 ,
\]
the last step being \eqref{eq:key}. The left-hand side is a finite sum, so
letting $k\to\infty$ along the subsequence,
\[
  \sum_{m\le N}\abs{b_m}^2m!\,w_{k_0}(m)\ \le\ \liminf_k\norm{F_k}_k^2 .
\]
The right-hand side does not involve $k_0$ or $N$. Letting $k_0\to\infty$ and
using $w_{k_0}(m)\uparrow1$ from \cref{lem:weight}(iv), then letting
$N\to\infty$, monotone convergence gives
$\sum_{m\ge0}\abs{b_m}^2m!\le\liminf_k\norm{F_k}_k^2$, which is finite. Hence
$\Finf\in\Fock$ and \eqref{eq:lsc} holds.
\end{proof}

The proof uses no estimate that is uniform in $k$. Both limits are monotone,
one by \cref{lem:weight}(iii) and one by \cref{lem:weight}(iv), and the
argument is Fatou's lemma applied to a doubly indexed family. This is the
pattern referred to after \cref{lem:weight}, and it recurs in
\cref{thm:section} with the roles of the two limits exchanged.

\subsection{Structure of the limit points}

The limits can now be identified, using two classical facts. Hurwitz's theorem
controls how zeros behave in the limit, and the
Borel--Carath\'eodory inequality converts a bound on the real part of an
entire function into a bound on its modulus, which is what turns the growth
estimate \eqref{eq:rescaledbound} into a statement about degree.

\begin{theorem}[Collapse]\label{thm:collapse}
Let $F_k\in\Ak$ with $\norm{F_k}_k\le1$. Then $\{\Ft_k\}$ is a normal family on
$\C$, and every locally uniform limit point $\Finf$ satisfies the following.
\begin{enumerate}
\item[(i)] Either $\Finf\equiv0$, or $\Finf\in\Fock$ with
  $\norm{\Finf}_\Fock\le\liminf_k\norm{F_k}_k$.
\item[(ii)] If $\Finf\not\equiv0$ has finitely many zeros, then
\begin{equation}\label{eq:limitform}
  \Finf(z)=P(z)\,\rme^{az^2+bz+c},
  \qquad \deg P=r(\Finf),\quad \abs{a}<\tfrac12 ,
\end{equation}
that is, $\Finf$ is a Fock-space state of stellar rank $\deg P$.
\end{enumerate}
\end{theorem}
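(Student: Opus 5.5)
Normality is \cref{lem:normal}, and part (i) is \cref{prop:fatou} applied to any nonzero limit point. So the content lies in (ii). The plan is to combine the growth bound inherited from \eqref{eq:rescaledbound} with Hadamard factorisation and the Borel--Carath\'eodory inequality, and then to use Fock membership to pin down the quadratic coefficient.

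First, the limit inherits growth. Letting $k\to\infty$ in \eqref{eq:rescaledbound} at each fixed $z$, and using that the right-hand side decreases to $\rme^{\abs{z}^2/2}$, gives $\abs{\Finf(z)}\le\rme^{\abs{z}^2/2}$ on all of $\C$. Suppose $\Finf\not\equiv0$ has finitely many zeros, counted with multiplicity. Let $P$ be the monic polynomial with exactly those zeros, so $\deg P=r(\Finf)$. Then $\Finf/P$ is entire and zero-free, hence equals $\rme^{g}$ with $g$ entire. Off a neighbourhood of the zeros, say for $\abs{z}\ge R_0$, we have $\abs{P(z)}\ge1$. Taking logarithms gives $\Ree g(z)\le\abs{z}^2/2$ for large $\abs{z}$, and hence $\Ree g(z)\le\abs{z}^2/2+C$ on all of $\C$ by the maximum principle on the disk $\abs{z}\le R_0$. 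Borel--Carath\'eodory on circles of radius $2R$ then gives $\max_{\abs{z}=R}\abs{g}=O(R^2)$. By Cauchy's estimates $g$ is a polynomial of degree at most two, $g(z)=az^2+bz+c$, which is \eqref{eq:limitform} apart from the constraint on $a$.

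The remaining step, and the one needing care, is $\abs{a}<\tfrac12$. The pointwise bound alone only yields $\abs{a}\le\tfrac12$, because along the ray where $az^2$ is real and positive the growth is $\rme^{\abs{a}\abs{z}^2}$. Strictness should come from $\Finf\in\Fock$, which part (i) guarantees. I would compute the Fock norm in polar coordinates. With $\abs{a}=\tfrac12$, rotate so that $a=\tfrac12$ and write $z=x+\rmi y$. Then $\abs{\rme^{z^2/2+bz}}^2\rme^{-\abs{z}^2}=\rme^{-2y^2+2\Ree(bz)}$. This is independent of $x$ except for a linear term, so its integral against $\abs{P}^2\,\rmd x\,\rmd y$ diverges along the real direction. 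Either $\Ree b$ produces exponential growth in $x$, or it is zero and the integrand is polynomially growing in $x$. The case $\abs{a}>\tfrac12$ is already excluded by the growth bound. Hence $\norm{\Finf}_\Fock<\infty$ forces $\abs{a}<\tfrac12$.

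The resulting function is a polynomial times a Gaussian. By \cite{Chabaud2020} this is a Fock state of stellar rank $\deg P$, which completes (ii). The only delicate points are therefore the Borel--Carath\'eodory step near the zeros of $P$, handled by restricting to large radii, and the boundary case $\abs{a}=\tfrac12$, which is excluded by Fock membership rather than by growth.
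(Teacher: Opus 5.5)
Your proof is correct and follows the paper's argument step for step: normality and (i) come from \cref{lem:normal} and \cref{prop:fatou}, then the limiting bound $\abs{\Finf(z)}\le\rme^{\abs{z}^2/2}$, the factorisation $\Finf=P\rme^{g}$, Borel--Carath\'eodory with Cauchy estimates to get $\deg g\le2$, and Fock membership to rule out $\abs{a}=\tfrac12$. The one difference is in that last step: after rotating, you compute $\abs{\rme^{z^2/2+bz}}^2\rme^{-\abs{z}^2}=\rme^{-2y^2+2\Ree(bz)}$ directly, which handles an arbitrary linear term $b$ and polynomial $P$ together and is slightly more complete than the paper, which treats $\rme^{z^2/2}$ through its coefficients and only remarks that a polynomial prefactor does not change the divergence.
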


\begin{proof}
Normality is \cref{lem:normal}, and (i) is \cref{prop:fatou}.

For (ii), let $d=r(\Finf)<\infty$ and factor $\Finf=PH$ with $P$ a polynomial
of degree $d$ carrying the zeros and $H$ entire and zero-free. Since $\C$ is
simply connected, $H=\rme^{g}$ for some entire $g$.

Passing to the limit in \eqref{eq:rescaledbound} gives
$\abs{\Finf(z)}\le\rme^{\abs{z}^2/2}$ for all $z$. Choose $R_0$ larger than the
modulus of every zero of $P$, so that $\abs{P(z)}\ge c\abs{z}^{d}$ on
$\abs{z}\ge R_0$ for some $c>0$, so
\[
  \Ree\,g(z)=\log\abs{H(z)}\le\frac{\abs{z}^2}{2}-d\log\abs{z}-\log c
  \le\frac{\abs{z}^2}{2}
\]
for $\abs{z}$ large. The Borel--Carath\'eodory inequality \cite[section 5.5]{Titchmarsh1939} on
the circle $\abs{z}=t$ with outer radius $2t$ gives
$M_g(t)\le 2\cdot\tfrac12(2t)^2+3\abs{g(0)}=4t^2+3\abs{g(0)}$, so
$M_g(t)=O(t^2)$, and the Cauchy estimates
$\abs{g_n}\le M_g(t)t^{-n}$ force $g_n=0$ for $n\ge3$. Hence
$g(z)=az^2+bz+c$.

It remains to see that $\abs{a}<\tfrac12$. Writing $z=\rho\rme^{\rmi\theta}$
and choosing $\theta$ so that $az^2$ is positive real, the integrand of
$\norm{\Finf}^2_\Fock$ behaves as
$\abs{P}^2\rme^{(2\abs{a}-1)\rho^2}$ along that ray, which is not integrable
unless $\abs{a}\le\tfrac12$. At $\abs{a}=\tfrac12$ the exponent vanishes on
the ray and the Fock norm still diverges: for $\Finf=\rme^{z^2/2}$ the
coefficients give
$\norm{\Finf}^2_\Fock=\sum_j\binom{2j}{j}4^{-j}$, whose terms are asymptotic
to $(\pi j)^{-1/2}$, and a polynomial prefactor does not change this. Hence
$\abs{a}<\tfrac12$, and \eqref{eq:limitform} is exactly the normal form of a
Fock state of stellar rank $\deg P$ \cite{Chabaud2020}.
\end{proof}

The Borel--Carath\'eodory step cannot be omitted. Absence of zeros alone
yields $H=\rme^{g}$ with $g$ entire and nothing more, and transcendental $g$ of
small order has to be excluded. Only a bound on $\abs{g}$ does that, and the
growth estimate delivers a bound on $\Ree\,g$, so the two must be connected.

The strictness of $\abs{a}<\tfrac12$ in \eqref{eq:limitform} is a fact about
the Fock weight and not a general feature. Replacing the weight $h_n=n!$ by
$h_n=n!/(n+1)^3$ leaves the order and the type of the space unchanged, but the
terms of $\norm{\rme^{z^2/2}}^2$ then decay as $j^{-7/2}$ instead of
$j^{-1/2}$ and the series converges, so the corresponding constraint is
$\abs{a}\le\tfrac12$ with the boundary attained. Whether the inequality is
strict has to be decided weight by weight. The same question arises for the
Barut--Girardello weight in \cref{sec:discussion}, where the constraint reads
$\abs{b}<1$ and is again strict.

One consequence of \cref{lem:weight}(i) belongs here as well, because it fixes
the scale that governs the whole of \cref{sec:rank}.

\begin{proposition}\label{prop:exact2}
For every $k>1/2$ the map $\Tk$ restricted to
$\operatorname{span}\{1,\zeta\}\subset\Ak$ is an isometry onto
$\operatorname{span}\{1,z\}\subset\Fock$, and this two-dimensional subspace is
the largest on which $\Tk$ is isometric.
\end{proposition}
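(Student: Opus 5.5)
The first claim follows from \cref{prop:identity} together with \cref{lem:weight}(i). For $F=c_0+c_1\zeta$ the rescaled function is $\Tk F=b_0+b_1z$ with $b_0=c_0$ and $b_1=c_1/\sqrt{2k}$. By \eqref{eq:key},
\[
\norm{F}_k^2=\abs{b_0}^2\wk(0)+\abs{b_1}^2\wk(1)=\abs{b_0}^2+\abs{b_1}^2=\norm{\Tk F}_\Fock^2,
\]
because $\wk(0)=\wk(1)=1$. The map $c\mapsto b$ is a bijection between the coefficient pairs, so $\Tk$ sends $\operatorname{span}\{1,\zeta\}$ onto $\operatorname{span}\{1,z\}$. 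Since it is linear and preserves norms, polarisation shows it also preserves inner products, and it is therefore an isometry onto its image.

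For maximality, I would read the statement as follows. If $V\subset\Ak$ is a subspace on which $\Tk$ maps into $\Fock$ isometrically, that is, $Q_k[b]=Q_\infty[b]$ for every $F\in V$, then $V\subseteq\operatorname{span}\{1,\zeta\}$. The argument rests on the difference of the two forms in \eqref{eq:twoforms},
\[
Q_\infty[b]-Q_k[b]=\sum_{m\ge2}\abs{b_m}^2\,m!\,\bigl(1-\wk(m)\bigr).
\]
This is a sum of nonnegative terms. By \cref{lem:weight}(i) each coefficient $1-\wk(m)$ is strictly positive for $m\ge2$. So if a single $F\in V$ is isometric, with $\Tk F\in\Fock$ and $Q_k[b]=Q_\infty[b]$, then every term vanishes, which forces $b_m=0$ for all $m\ge2$. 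Hence $F\in\operatorname{span}\{1,\zeta\}$.

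This argument shows more than is needed: even the set of individual vectors whose norm is preserved is exactly $\operatorname{span}\{1,\zeta\}$. Any subspace on which $\Tk$ is isometric, in particular any containing $\operatorname{span}\{1,\zeta\}$, is therefore contained in it and equals it.

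The only delicate point is interpretive. We must make sure that ``isometric'' is taken in the sense of norm preservation into $\Fock$, and that vectors with $\Tk F\notin\Fock$ are handled, since $Q_\infty[b]$ may be infinite. Such vectors trivially fail the isometry, because $Q_k[b]<\infty=Q_\infty[b]$. I would state this case explicitly, and note that strict positivity of $1-\wk(m)$ for each $m\ge2$ uses strict monotonicity in $m$ from \cref{lem:weight}(ii), which is valid for every $k>0$.
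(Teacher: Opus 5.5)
Your proof is correct and is essentially the paper's own argument. \cref{prop:exactmax} in \cref{app:A3} writes $\norm{\Tk v}_\Fock^2-\norm{v}_k^2=\sum_m\abs{v_m}^2\bigl[\wk(m)^{-1}-1\bigr]$ in the orthonormal bases; this is your $\sum_{m\ge2}\abs{b_m}^2m!\,\bigl(1-\wk(m)\bigr)$ under $v_m=b_m\sqrt{m!\,\wk(m)}$, and both conclude from $\wk(0)=\wk(1)=1$ and $\wk(m)<1$ for $m\ge2$. Your explicit handling of vectors with $\Tk F\notin\Fock$, and your vector-by-vector formulation that needs no closedness, are small refinements the paper leaves implicit.
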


\begin{proof}
See \cref{app:A3}. Both assertions follow from $\wk(0)=\wk(1)=1$ together with
$\wk(m)<1$ for $m\ge2$.
\end{proof}

Concretely, take the rank-one state with stellar function
$F_k(\zeta)=\zeta-\alpha/\sqrt{2k}$, whose single zero sits at the scale that
\cref{thm:rank-eq} will identify as the critical one. Its amplitudes are
$\psi_0=-\alpha/\sqrt{2k}$ and $\psi_1=1/\sqrt{2k}$, so after normalisation
they are proportional to $(-\alpha,1)$, and the Fock state with stellar
function $z-\alpha$ has amplitudes $(-\alpha,1)$ as well. The two states
coincide at every $k$, and not asymptotically. In this sense the scale
$1/\sqrt{2k}$ for zeros is an identity at the bottom of the ladder, and
\cref{thm:rank-eq} extends it upward in the only form available there, as a
statement about limits.

Exactness fails first at level two, and it fails at the rate
$1-\wk(2)=(2k+1)^{-1}$. Every rate of convergence in $k$ that appears below
traces back to this.

\section{Rank under contraction}
\label{sec:rank}

\Cref{thm:collapse} identifies the limits but says nothing about which of them
occur, and in particular nothing about how the rank of $F_k$ relates to the
rank of the limit. That relation is the subject of this section, and it turns
on a single scale.

Throughout, $r_k$ denotes the number of zeros of $F_k$ in $\D$ counted with
multiplicity, and $\zeta^{(k)}_1,\dots,\zeta^{(k)}_{r_k}$ denote those zeros.
Since $\Ft_k(z)=F_k(z/\sqrt{2k})$, the zeros of $\Ft_k$ sit at
\begin{equation}\label{eq:zeroscale}
  z^{(k)}_j=\sqrt{2k}\,\zeta^{(k)}_j .
\end{equation}
A zero anchored at a fixed point of $\D$ therefore recedes to infinity in the
rescaled picture, at rate $\sqrt{2k}$, while a zero that itself shrinks like
$1/\sqrt{2k}$ stays where it is. The rank of the limit counts the second kind
and misses the first.

\subsection{Semicontinuity and the surviving scale}

\begin{theorem}[Semicontinuity]\label{thm:semicont}
Let $\norm{F_k}_k\le1$ and let $\Finf\not\equiv0$ be a limit point of
$\{\Ft_k\}$. Then
\begin{equation}
  r(\Finf)\ \le\ \liminf_k\, r_k .
\end{equation}
\end{theorem}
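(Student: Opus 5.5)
The plan is to use Hurwitz's theorem on a large disk, with the count of zeros of $\Ft_k$ there bounded by $r_k$. Fix a subsequence along which $\Ft_k\to\Finf$ locally uniformly on $\C$. If $\liminf_k r_k=\infty$ there is nothing to prove. Otherwise we pass to a further subsequence along which $r_k$ attains the liminf, so the limit point is the same. Since the zeros of $\Ft_k$ in $\abs{z}<\sqrt{2k}$ are exactly the points \eqref{eq:zeroscale}, the number of zeros of $\Ft_k$ in any disk $\abs{z}<R$ with $R<\sqrt{2k}$ is at most $r_k$.

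Next, let $N$ be any finite number not exceeding $r(\Finf)$; if $r(\Finf)=\infty$, $N$ is arbitrary. Since $\Finf\not\equiv0$, its zeros are isolated. We choose $R>0$ such that $\Finf$ has at least $N$ zeros, counted with multiplicity, in $\abs{z}<R$ and no zero on $\abs{z}=R$. This is possible because the moduli of the zeros form a discrete set, so all but countably many radii avoid them.

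Hurwitz's theorem, or equivalently the argument principle applied to the uniformly convergent quotients $\Ft_k'/\Ft_k$ on the circle $\abs{z}=R$, then shows that for all large $k$ in the subsequence with $2k>R^2$, $\Ft_k$ has exactly as many zeros in $\abs{z}<R$ as $\Finf$. That number is at least $N$. Hence $N\le r_k$ for all large $k$, so $N\le\liminf_k r_k$. Letting $N$ increase to $r(\Finf)$ gives the claim, and it also shows that $r(\Finf)$ is finite whenever the liminf is.

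The only delicate point is to make sure the disk $\abs{z}<R$ lies inside the domain of $\Ft_k$ and that the convergence is uniform on its closure. Both follow for $k>R^2/2$ from \cref{lem:normal} and the chosen subsequence. The hypothesis $\Finf\not\equiv0$ is exactly what Hurwitz's theorem requires, since the identically zero limit would make the zero count meaningless. The inequality can be strict because zeros with $\abs{z^{(k)}_j}\to\infty$ escape every fixed disk, which is the mechanism \eqref{eq:zeroscale} describes.
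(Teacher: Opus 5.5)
Your argument is correct and is essentially the paper's proof. Both apply Hurwitz's theorem on a disk $\abs{z}<R$ whose boundary circle avoids the zeros of $\Finf$, use \eqref{eq:zeroscale} to bound the zero count there by $r_k$ for $k>R^2/2$, and then let $R\to\infty$ (your choice of $R$ depending on $N\uparrow r(\Finf)$ is the same limit written differently). As in the paper, the $\liminf$ has to be read along the convergent subsequence, since over the whole family the inequality can fail when different subsequences have different limits, so your passage to a further subsequence realising the $\liminf$ is harmless but unnecessary.
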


\begin{proof}
Since $\Finf\not\equiv0$ its zeros are isolated, so all but countably many
$R>0$ satisfy $\Finf\ne0$ on $\abs{z}=R$. Fix such an $R$. For $k>R^2/2$ the
disk $\abs{\zeta}<R/\sqrt{2k}$ lies inside $\D$, so by \eqref{eq:zeroscale} the
number of zeros of $\Ft_k$ in $\abs{z}<R$ is at most $r_k$. Hurwitz's theorem,
applied on $\abs{z}<R$ along the convergent subsequence, gives that $\Finf$ and
$\Ft_k$ have equally many zeros there for large $k$. Hence the number of zeros
of $\Finf$ in $\abs{z}<R$ is at most $\liminf_k r_k$, and letting $R\to\infty$
through admissible values completes the proof.
\end{proof}

The inequality can be strict, and the next result says exactly when. We first
record the mechanism, which is Hurwitz's theorem localised at each zero.

\begin{lemma}\label{lem:zeroconv}
Let $\Ft_k\to\Finf\not\equiv0$ locally uniformly, let $w$ be a zero of $\Finf$
of multiplicity $\nu$, and let $\varepsilon>0$ be small enough that
$\Finf$ has no other zero in $\abs{z-w}\le\varepsilon$. Then for all large $k$
the function $\Ft_k$ has exactly $\nu$ zeros in $\abs{z-w}<\varepsilon$. If in
addition $K\subset\C$ is compact and $\Finf$ has no zero in $K$, then $\Ft_k$
has no zero in $K$ for all large $k$.
\end{lemma}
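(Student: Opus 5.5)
The plan is to derive both assertions from the argument principle, in the form of Rouché's theorem, applied on a fixed compact set. The local uniform convergence $\Ft_k\to\Finf$ enters only through a uniform bound on that set. One point needs a word first. Each $\Ft_k$ is defined only on $\abs{z}<\sqrt{2k}$. Any fixed compact set, in particular a closed disk $\abs{z-w}\le\varepsilon$ or the given $K$, lies inside that domain once $k$ is large, so every statement below is about $k$ beyond such a threshold.

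For the first assertion, set $\Gamma=\{\abs{z-w}=\varepsilon\}$. By the choice of $\varepsilon$, $\Finf$ has no zero on $\Gamma$, and $\Gamma$ is compact, so $\delta:=\min_\Gamma\abs{\Finf}>0$. Uniform convergence on $\Gamma$ gives $\abs{\Ft_k-\Finf}<\delta\le\abs{\Finf}$ on $\Gamma$ for all large $k$. Rouché's theorem then shows that $\Ft_k$ and $\Finf$ have the same number of zeros, counted with multiplicity, in $\abs{z-w}<\varepsilon$. By hypothesis $\Finf$ has there only the zero $w$, of multiplicity $\nu$, so $\Ft_k$ has exactly $\nu$ zeros.

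For the second assertion, $\Finf$ is continuous and zero-free on the compact set $K$, so $\eta:=\min_K\abs{\Finf}>0$. Uniform convergence on $K$ gives $\sup_K\abs{\Ft_k-\Finf}<\eta$ for large $k$. Hence $\abs{\Ft_k}\ge\abs{\Finf}-\abs{\Ft_k-\Finf}>0$ on $K$. This step needs no argument principle at all, and it holds for arbitrary compact $K$, not only for disks.

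Neither step is a real obstacle. The only care required is the bookkeeping just described: each $\Ft_k$ lives on a growing disk, so "for all large $k$" must be taken to mean beyond both the domain threshold $2k>(\abs{w}+\varepsilon)^2$ (respectively $2k>\max_K\abs{z}^2$) and the threshold coming from uniform convergence. If the convergence holds only along a subsequence, as in \cref{thm:semicont}, the same argument applies verbatim along that subsequence.
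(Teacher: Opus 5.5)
Your proof is correct, and the first assertion is handled exactly as in the paper: Rouch\'e's theorem on the circle $\abs{z-w}=\varepsilon$, using $\inf\abs{\Finf}>0$ there. For the second assertion the paper covers $K$ by finitely many disks and reapplies the Rouch\'e argument on each, with disks that must be chosen so that $\Finf$ is zero-free on their closures. Your direct bound $\abs{\Ft_k}\ge\abs{\Finf}-\abs{\Ft_k-\Finf}>0$ on $K$ is simpler and needs no covering. Your remark about the domain threshold, that $2k$ must exceed $\max_K\abs{z}^2$, makes explicit a point the paper leaves implicit.
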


\begin{theorem}[Equality]\label{thm:rank-eq}
Let $\norm{F_k}_k\le1$, suppose $r_k\equiv r<\infty$, and let
$\Ft_k\to\Finf\not\equiv0$ along a subsequence. Then
\begin{enumerate}
\item[(i)] $r(\Finf)=r$ if and only if
  $\sup_k\max_{j\le r}\sqrt{2k}\,\abs{\zeta^{(k)}_j}<\infty$, that is, if and
  only if every zero of $F_k$ is $O(1/\sqrt{2k})$;
\item[(ii)] in general, after passing to a further subsequence along which
  each $z^{(k)}_j$ either converges in $\C$ or tends to infinity, which is
  possible because $r$ is finite, $r(\Finf)$ equals the number of indices $j$
  in the first class, counted with multiplicity.
\end{enumerate}
\end{theorem}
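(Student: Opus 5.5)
I will prove (ii) first and deduce (i) from it. Work along the given subsequence with $\Ft_k\to\Finf\not\equiv0$. Since $r$ is finite, a finite diagonal extraction gives a further subsequence along which, for each $j\le r$, the rescaled zero $z^{(k)}_j=\sqrt{2k}\,\zeta^{(k)}_j$ of \eqref{eq:zeroscale} either converges to some $w_j\in\C$ or satisfies $\abs{z^{(k)}_j}\to\infty$. Call the first class $J_b$ (bounded) and the second $J_\infty$. The claim of (ii) is $r(\Finf)=\abs{J_b}$.

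For the lower bound $r(\Finf)\ge\abs{J_b}$ I use Hurwitz in its local form, \cref{lem:zeroconv}. Let $w$ be any point with $m_w:=\#\{j\in J_b: w_j=w\}>0$. If $\Finf(w)\ne0$, then $\Finf$ is zero-free on some closed disk $K=\{\abs{z-w}\le\varepsilon\}$. The second half of \cref{lem:zeroconv} then says $\Ft_k$ has no zero in $K$ for large $k$, which contradicts $z^{(k)}_j\to w$. Hence $\Finf(w)=0$, say with multiplicity $\nu$. Choose $\varepsilon$ so small that $w$ is the only zero of $\Finf$ in the closed $\varepsilon$-disk and no other limit $w_{j'}\ne w$ lies in it. The first half of the lemma says $\Ft_k$ has exactly $\nu$ zeros there for large $k$. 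Those zeros include the $m_w$ indices converging to $w$, so $m_w\le\nu$. Summing over the finitely many distinct $w_j$ gives $\abs{J_b}\le r(\Finf)$.

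For the upper bound, fix an admissible $R$ as in the proof of \cref{thm:semicont}, meaning $\Finf\ne0$ on $\abs{z}=R$, and take $R$ larger than every $\abs{w_j}$. For $k>R^2/2$ every zero of $\Ft_k$ in $\abs{z}<R$ is some $z^{(k)}_j$, by \eqref{eq:zeroscale}. For large $k$ the indices in $J_\infty$ have $\abs{z^{(k)}_j}>R$, so the zeros of $\Ft_k$ in $\abs{z}<R$ number at most $\abs{J_b}$. Hurwitz on $\abs{z}<R$ then bounds the zero count of $\Finf$ there by $\abs{J_b}$. Letting $R\to\infty$ through admissible values gives $r(\Finf)\le\abs{J_b}$, and combined with the lower bound this proves (ii). The one delicate point is multiplicity bookkeeping when several $z^{(k)}_j$ share a limit, and the local count of \cref{lem:zeroconv} handles exactly this.

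For (i), suppose first that $\sup_k\max_j\abs{z^{(k)}_j}<\infty$. Then every index is in $J_b$ along any further subsequence, so (ii) gives $r(\Finf)=r$. Note that $r(\Finf)$ depends only on $\Finf$, not on the subsequence. Conversely, suppose the supremum is infinite. Here there is a subtlety: the supremum is over all $k$, while $\Finf$ is the limit along the given subsequence only. The intended reading is the supremum along that subsequence, and I will state this explicitly. On that reading, extract a further subsequence on which some $\abs{z^{(k)}_j}\to\infty$, and refine it as in (ii). Then $J_\infty\ne\emptyset$, so $r(\Finf)=\abs{J_b}<r$. This contrapositive completes (i).
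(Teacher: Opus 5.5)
Your argument is correct and is essentially the paper's: both rest on Hurwitz's theorem, in the form of \cref{lem:zeroconv}, on a large admissible disk $\abs{z}<R$ with $R>\max_j\abs{w_j}$, and then let $R\to\infty$. The paper observes that with this choice of $R$ the function $\Ft_k$ has \emph{exactly} $\abs{J_b}$ zeros in that disk for large $k$, so your separate local lower bound is redundant though harmless; your remark that the supremum in (i) must be read along the given subsequence is right, and the paper's converse argument uses that reading tacitly.
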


\begin{proof}
For (ii), let $s$ be the number of indices whose rescaled zeros converge, with
limits $w_1,\dots,w_s$ listed with multiplicity, and choose $R$ larger than
$\max_i\abs{w_i}$ with $\Finf\ne0$ on $\abs{z}=R$. For large $k$ the remaining
$r-s$ rescaled zeros lie outside $\abs{z}\le R$, so $\Ft_k$ has exactly $s$
zeros in $\abs{z}<R$, and \cref{lem:zeroconv} gives that $\Finf$ has $s$ zeros
there. The same argument applies to every larger admissible radius, and the
count does not increase, so $\Finf$ has exactly $s$ zeros in $\C$.

For (i), if all rescaled zeros are bounded then $s=r$ in (ii) and
$r(\Finf)=r$. Conversely, if some $\abs{z^{(k)}_{j_0}}$ is unbounded, pass to
a further subsequence along which it tends to infinity. Then for every
admissible $R$ the function $\Ft_k$ has at most $r-1$ zeros in $\abs{z}<R$ for
large $k$, so by \cref{lem:zeroconv} the same holds for $\Finf$, and letting
$R\to\infty$ gives $r(\Finf)\le r-1$.
\end{proof}

Since $r(\Finf)$ does not depend on the further subsequence in (ii), neither
does the number of bounded rescaled zeros, although this is not obvious from
the definition.

The scale in \cref{thm:rank-eq} is the same one that labels the coherent
states, since the Perelomov state $\ket{k,\zeta_0}$ contracts to the Glauber state
$\ket\alpha$ precisely when $\zeta_0=\alpha/\sqrt{2k}$. A zero and a
coherent-state label are carried by the contraction in the same way, and
\cref{prop:exact2} shows that at rank one the correspondence is exact at every
$k$.

\subsection{Unbounded rank and counting measures}

The statement extends to families whose rank is unbounded, provided one
replaces the integer by the counting measure of the zeros. Weak-$*$
convergence of measures is the natural formulation because it is insensitive
to mass at infinity, which is exactly the mass that \cref{thm:rank-eq}
discards.

\begin{theorem}[Counting measures]\label{thm:measures}
Let $\Nz_k:=\sum_j\delta_{z^{(k)}_j}$ be the counting measure of the rescaled
zeros, a Radon measure on $\C$. If $\Ft_k\to\Finf\not\equiv0$ locally
uniformly, then
\begin{equation}
  \int_\C\varphi\,\rmd\Nz_k\ \longrightarrow\ \int_\C\varphi\,\rmd\Nz_{\Finf}
  \qquad\text{for every }\varphi\in C_c(\C),
\end{equation}
where $\Nz_{\Finf}$ is the counting measure of the zeros of $\Finf$.
\end{theorem}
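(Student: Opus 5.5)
The plan is to reduce the statement to \cref{lem:zeroconv} by a partition of unity, in the form of a uniform zero count on a suitable compact set. First, one small point needs checking. The measure $\Nz_k$ counts the zeros of $\Ft_k$ in the disk $\abs{z}<\sqrt{2k}$, by \eqref{eq:zeroscale}. Given $\varphi\in C_c(\C)$, its support lies in some disk $\abs{z}\le R$. For $k>R^2/2$ that disk sits inside the domain of $\Ft_k$, so $\int\varphi\,\rmd\Nz_k=\sum\varphi(z)$, the sum running over the zeros of $\Ft_k$ in $\abs{z}\le R$ with multiplicity. This sum is finite, because $\Ft_k$ is holomorphic and not identically zero near that closed disk for large $k$; that is forced by Hurwitz, since $\Finf\not\equiv0$. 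The rank $r_k$ itself may be infinite, but that does not matter here.

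Next I would localise. Since $\Finf\not\equiv0$, it has finitely many zeros $w_1,\dots,w_s$ in the closed disk $\abs{z}\le R$, with multiplicities $\nu_1,\dots,\nu_s$. Enlarging $R$ slightly, I may assume no zero lies on $\abs{z}=R$. Fix $\eta>0$. Using uniform continuity of $\varphi$, choose $\varepsilon>0$ with three properties: the disks $D_i=\{\abs{z-w_i}<\varepsilon\}$ are disjoint, each $\overline{D_i}$ contains no other zero of $\Finf$, and $\abs{\varphi(z)-\varphi(w_i)}<\eta$ on $D_i$.

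Now set $K=\{\abs{z}\le R\}\setminus\bigcup_i D_i$. This set is compact and $\Finf$ has no zero on it. The second part of \cref{lem:zeroconv} says that for large $k$ the function $\Ft_k$ has no zero in $K$. The first part says it has exactly $\nu_i$ zeros in each $D_i$. Hence, for large $k$,
\[
  \Bigl|\int\varphi\,\rmd\Nz_k-\sum_i\nu_i\varphi(w_i)\Bigr|\le\eta\sum_i\nu_i .
\]
There is one subtlety here. The disks $D_i$ may stick out of $\abs{z}\le R$, and zeros lying in $D_i$ but outside the support of $\varphi$ still contribute $\varphi=0$. The estimate $\abs{\varphi-\varphi(w_i)}<\eta$ holds on all of $D_i$, so this is harmless. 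The sum $\sum_i\nu_i\varphi(w_i)$ equals $\int\varphi\,\rmd\Nz_{\Finf}$, because zeros of $\Finf$ outside $\abs{z}\le R$ contribute nothing. Since $\eta$ is arbitrary, the convergence follows.

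The only real obstacle is the bookkeeping around the boundary of the support and the domain issue. It is handled by taking $k>R^2/2$ and using the uniform-continuity choice of $\varepsilon$ on the whole of each $D_i$. If \cref{lem:zeroconv} is not yet proved at this point, it is the standard argument principle. One writes $\frac{1}{2\pi\rmi}\oint\Ft_k'/\Ft_k$ on $\abs{z-w}=\varepsilon$; this converges to the same integral for $\Finf$, because $\Ft_k'\to\Finf'$ uniformly there and $\Finf$ is bounded below on the circle. The zero-free statement on $K$ follows from $\min_K\abs{\Finf}>0$ and uniform convergence.
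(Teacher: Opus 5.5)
Your proof is correct and is essentially the paper's argument: you localise around the finitely many zeros of $\Finf$ near the support of $\varphi$, apply \cref{lem:zeroconv} on the small disks $D_i$ and on the compact complement, and control the error by uniform continuity to get the bound $\eta\sum_i\nu_i$. The differences are cosmetic. The paper collects the zeros in $\abs{z}\le\rho+1$, uses the compact set $\{\abs{z}\le\rho+\tfrac12\}\setminus\bigcup_iD_i$, and proves the lemma by Rouch\'e rather than by the argument principle. Your observation that taking $k>R^2/2$ makes $\int\varphi\,\rmd\Nz_k$ a finite sum, even when $r_k$ is infinite, is a useful clarification that the paper leaves implicit.
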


The proof is \cref{lem:zeroconv} applied on small disks around the zeros of
$\Finf$ and on the complement of their union, with the details in
\cref{app:jensen}. We note that it uses neither the Poincar\'e--Lelong formula
nor $L^1_{\mathrm{loc}}$ convergence of $\log\abs{\Ft_k}$, which is the route
one would take if the zero sets were being handled through potential theory.
The elementary argument suffices because $\Finf\not\equiv0$ forces the zeros
to be locally finite.

In particular, when $r_k\equiv r<\infty$ and $\Finf\not\equiv0$,
\begin{equation}\label{eq:deficit}
  r-r(\Finf)\ =\ \lim_{R\to\infty}\ \lim_k\ \Nz_k\bigl(\{\abs{z}\ge R\}\bigr),
\end{equation}
so the rank deficit is exactly the mass of $\Nz_k$ that escapes to infinity.

The two theorems above concern families of fixed finite rank. The opposite
situation, in which the rank of $F_k$ grows without bound while the limit has
infinitely many zeros, is equally natural and is best seen on an example.

\begin{example}[Perelomov cat states]\label{ex:cat}
Take $2k\in\N$ and let
\begin{equation}
  \ket{\Psi_k}\propto\ket{k,\zeta_0}+\ket{k,-\zeta_0},
  \qquad
  F_k(\zeta)\propto(1-\zeta_0\zeta)^{-2k}+(1+\zeta_0\zeta)^{-2k} .
\end{equation}
The zeros of $F_k$ solve
$\bigl[(1+\zeta_0\zeta)/(1-\zeta_0\zeta)\bigr]^{2k}=-1$, that is
\begin{equation}\label{eq:catzeros}
  \zeta_n=\frac{\rmi}{\zeta_0}\,\tan\frac{\pi(2n+1)}{4k},
  \qquad n\in\Z ,
\end{equation}
of which only those with $\abs{\zeta_n}<1$ lie in the disk. The index runs
over a complete residue system modulo $2k$, and the zeros come in conjugate
pairs $\zeta_n,\zeta_{-1-n}$ on the imaginary axis. Since $2k$ is an
integer the equation has finitely many solutions there, so $r_k<\infty$ for
every $k$.

Now contract with $\zeta_0=\alpha/\sqrt{2k}$, the scale of
\cref{thm:rank-eq}. The rescaled zeros are
$z_n=\sqrt{2k}\,\zeta_n=\rmi\,(2k/\alpha)\tan[\pi(2n+1)/4k]$, and for fixed $n$ the argument of the tangent is $O(1/k)$, so
$\tan x=x+O(x^3)$ gives
\begin{equation}\label{eq:catlimit}
  z_n\ \longrightarrow\ \frac{\rmi\pi(2n+1)}{2\alpha},
\end{equation}
which are exactly the zeros of the Glauber cat
$\ket\alpha+\ket{-\alpha}$, whose stellar function is
$\cosh(\alpha z)$ up to a Gaussian factor. The condition $\abs{\zeta_n}<1$
reads $\tan[\pi(2n+1)/4k]<\abs{\alpha}/\sqrt{2k}$, and since the tangent is
increasing with argument $O(1/\sqrt{k})$ in the relevant range this is
$\pi(2n+1)/4k<\abs{\alpha}/\sqrt{2k}$ up to a relative error $O(1/k)$, that
is $n<\abs{\alpha}\sqrt{2k}/\pi-\tfrac12$. The number of admissible
$n$ therefore grows as $\abs{\alpha}\sqrt{2k}/\pi$, and counting the zeros in conjugate
pairs,
\begin{equation}\label{eq:catcount}
  r_k\ \sim\ \frac{2\sqrt2\,\abs{\alpha}}{\pi}\,\sqrt{k}\ \longrightarrow\ \infty .
\end{equation}
\Cref{tab:cat} confirms both statements numerically.
\end{example}

\begin{table}[t]
\centering
\begin{tabular}{rrlllll}
\toprule
$k$ & $r_k$ & $\Imm z_0$ & $\Imm z_1$ & $\Imm z_2$ & $\Imm z_3$
  & $r_k\big/\bigl[\tfrac{2\sqrt2\abs{\alpha}}{\pi}\sqrt{k}\bigr]$\\
\midrule
$20$      & $6$   & $1.047736$ & $3.156208$ & $5.304330$ & ---        & $0.9935$ \\
$100$     & $14$  & $1.047219$ & $3.142174$ & $5.238681$ & $7.337777$ & $1.0367$ \\
$10^3$    & $42$  & $1.047198$ & $3.141598$ & $5.236015$ & $7.330457$ & $0.9835$ \\
$10^4$    & $136$ & $1.047198$ & $3.141593$ & $5.235988$ & $7.330384$ & $1.0071$ \\
$10^5$    & $428$ & $1.047198$ & $3.141593$ & $5.235988$ & $7.330383$ & $1.0022$ \\
\midrule
limit     & $\infty$ & $1.047198$ & $3.141593$ & $5.235988$ & $7.330383$ & $1$ \\
\bottomrule
\end{tabular}
\caption{The Perelomov cat of \cref{ex:cat} at $\alpha=3/2$. The rescaled
zeros converge to $\pi(2n+1)/2\abs{\alpha}$ on the imaginary axis, which are
the zeros of the Glauber cat, and the number of zeros inside the disk grows
as predicted by \eqref{eq:catcount}.}
\label{tab:cat}
\end{table}

The example shows that \cref{thm:semicont} is not vacuous in the other
direction either. Here the rank of $F_k$ is finite for every $k$ and the rank
of the limit is infinite, so the inequality
$r(\Finf)\le\liminf_kr_k$ holds with both sides infinite and carries no
information. What survives is \cref{thm:measures}, which continues to describe
the zeros as measures.

One might hope that a zero at an intermediate scale survives under some
intermediate rescaling. It does not, because if
$\abs{\zeta^{(k)}_j}\asymp k^{-\gamma}$ with $0<\gamma<1/2$, the rescaled zero
sits at $\sqrt{2k}\,\abs{\zeta^{(k)}_j}\asymp k^{1/2-\gamma}$ and escapes by
\cref{thm:rank-eq}, and by \cref{prop:scale} no other rescaling recovers it,
because the limit weight is degenerate unless $2k\lambda_k^2$ has a finite
nonzero limit.

Nothing intermediate survives at the level of states either. For the rank-one
family $F_k(\zeta)=\zeta-\delta_k$ with
$1/\sqrt{2k}\ll\delta_k\ll1$ the amplitudes are $\psi_0=-\delta_k$ and
$\psi_1=1/\sqrt{2k}$, so after normalisation the state converges in the
Hilbert norm to the lowest weight vector, with no residual structure to
extract by any further rescaling.

\subsection{The contraction as a window}

The two mechanisms of this section, loss of rank in \cref{thm:rank-eq} and
gain of rank in \cref{ex:cat}, are one mechanism seen from two sides. In the
rescaled variable $z=\sqrt{2k}\,\zeta$ the image of the disk is
$\abs{z}<\sqrt{2k}$, a window that opens onto the plane as $k$ grows. A zero
anchored at a fixed point of $\D$ moves outward with the window edge and
leaves through it, while a zero anchored at a fixed point of the plane is overtaken
by the edge and enters. Which happens is decided by the size of a zero
relative to $\sqrt{2k}$, and by nothing else.

\Cref{fig:window} shows both mechanisms on a single family of rank two, one of
whose zeros is at the critical scale and one of which is not. In the disk the
two look alike; only after the rescaling does the difference appear.

The cat states of \cref{ex:cat} illustrate the same picture from the other
side. There the zeros of the limit form a fixed lattice on the imaginary axis
with spacing $\pi/\abs{\alpha}$, given by \eqref{eq:catlimit}, and the number
visible inside the window of radius $\sqrt{2k}$ is $2\abs{\alpha}\sqrt{2k}/\pi$,
which reproduces \eqref{eq:catcount}. The formula for $r_k$ is thus a
statement about how much of a fixed configuration the window admits, and not
about the state changing shape.

\begin{figure}[t]
\centering
\begin{tikzpicture}
  \begin{scope}[x=1.75cm,y=1.75cm]
    \draw[gray!50] (0,0) circle (1);
    \draw[->,gray!65] (-1.20,0) -- (1.28,0)
      node[right,black,font=\small] {$\Ree\zeta$};
    \draw[gray!65] (0,-1.15) -- (0,1.15);
    \draw[thick] (0.600,0) circle (0.042);
    \node[font=\scriptsize,above right,inner sep=1.5pt] at (0.615,0.02)
      {$\zeta_0$};
    \foreach \x in {0.4427,0.2214,0.0990} {\fill (\x,0) circle (0.036);}
    \draw[->,black!55,shorten >=1pt] (0.435,-0.20) -- (0.110,-0.20);
    \node[font=\scriptsize,below,inner sep=2pt] at (0.28,-0.21)
      {$\propto1/\sqrt{2k}$};
    \node[font=\small] at (0,-1.42) {zeros of $F_k$ in $\D$};
  \end{scope}
  \begin{scope}[xshift=5.4cm,x=0.42cm,y=0.42cm]
    \draw[->,gray!65] (-0.9,0) -- (10.8,0)
      node[right,black,font=\small] {$\Ree z$};
    \draw[gray!65] (0,-0.45) -- (0,0.45);
    \fill (1.400,0) circle (0.16);
    \node[font=\scriptsize,below,inner sep=3pt] at (1.40,-0.2)
      {$\alpha$, every $k$};
    \foreach \x/\lab in {1.897/5,3.795/20,8.485/100}
      {\draw[thick] (\x,0) circle (0.19);
       \node[font=\scriptsize,above,inner sep=1pt] at (\x,1.9) {$\lab$};
       \draw[gray!40] (\x,0.42) -- (\x,1.85);}
    \draw[->,black!55,shorten >=1pt] (2.55,-2.6) -- (8.05,-2.6);
    \node[font=\scriptsize,below,inner sep=2pt] at (5.3,-2.65)
      {$\propto\sqrt{2k}$};
    \node[font=\small] at (4.9,-4.3) {after $z=\sqrt{2k}\,\zeta$};
  \end{scope}
\end{tikzpicture}
\caption{The two mechanisms of \cref{thm:rank-eq}, on the rank-two family
$F_k(\zeta)=(\zeta-\alpha/\sqrt{2k})(\zeta-\zeta_0)$ with $\alpha=1.4$ and
$\zeta_0=0.6$. Filled marks are the zero that survives the contraction, open
marks the one that does not. Inside the disk the two behave differently, one
shrinking like $1/\sqrt{2k}$ and the other fixed, but nothing there says which
will survive. The rescaling $z=\sqrt{2k}\,\zeta$ separates them. The first is
pinned at $\alpha$ for every $k$, so its three positions coincide in a single
mark, while the second recedes at rate $\sqrt{2k}$ and is shown at $k=5$, $20$
and $100$. The limit has rank one. What decides the outcome is the
size of a zero relative to $\sqrt{2k}$, and nothing else.}
\label{fig:window}
\end{figure}

\section{Rigidity and the escape criterion}
\label{sec:rigidity}

\Cref{thm:collapse} leaves open the possibility that a family degenerates,
and \cref{sec:rank} says nothing about when. This section supplies the
criterion, in a form that is quantitative on compact sets. The mechanism is
that absence of zeros makes $\log\abs{\Ft_k}$ harmonic, and a nonnegative
harmonic function obeys a two-sided Harnack estimate. The whole profile on a
compact set is then pinned by a single number. Zeros are removed by a Blaschke
factor before the argument is applied, and the number that survives is the
geometric mean of $\abs{\Ft_k}$ on a circle.

\subsection{Rank zero}

\begin{lemma}[Harnack rigidity]\label{lem:harnack}
Let $F_k$ be zero-free with $\norm{F_k}_k\le1$, let $0<\rho<R$, and let $M$ be
as in \eqref{eq:M}. Then for $k\ge k_0$ and $\abs{z}\le\rho$,
\begin{equation}\label{eq:harnack}
  M\Bigl(\frac{\abs{F_k(0)}}{M}\Bigr)^{\frac{R+\rho}{R-\rho}}
  \ \le\ \abs{\Ft_k(z)}\ \le\
  M\Bigl(\frac{\abs{F_k(0)}}{M}\Bigr)^{\frac{R-\rho}{R+\rho}} .
\end{equation}
\end{lemma}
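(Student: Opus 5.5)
The plan is to apply the classical Harnack inequality to a nonnegative harmonic function built from $\log\abs{\Ft_k}$ on the disk $\abs{z}<R$.

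Fix $k\ge k_0$ with $k_0>R^2/2$, so that the closed disk $\abs{z}\le R$ lies inside $\abs{z}<\sqrt{2k}$. Since $F_k$ has no zeros in $\D$, the rescaled function $\Ft_k$ has no zeros on $\abs{z}<\sqrt{2k}$, a simply connected domain. Hence $\log\abs{\Ft_k}$ is harmonic there. By \eqref{eq:M} of \cref{lem:normal} we have $\abs{\Ft_k}\le M$ on $\abs{z}\le R$, uniformly in $k\ge k_0$. Therefore
\[
  u_k(z):=\log M-\log\abs{\Ft_k(z)}
\]
is a nonnegative harmonic function on a neighbourhood of $\abs{z}\le R$. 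If $F_k(0)=0$ the lower bound is trivially $0$; in the zero-free case $F_k(0)\ne0$, so $u_k(0)$ is finite. Note also that $\Ft_k(0)=F_k(0)$.

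The Harnack inequality on the disk of radius $R$ centred at the origin states that a nonnegative harmonic function satisfies
\[
  \frac{R-\rho}{R+\rho}\,u_k(0)\ \le\ u_k(z)\ \le\ \frac{R+\rho}{R-\rho}\,u_k(0)
  \qquad\text{for }\abs{z}\le\rho .
\]
This follows from the Poisson kernel bounds $(R-\abs{z})/(R+\abs{z})\le P\le(R+\abs{z})/(R-\abs{z})$ together with the mean value property. To be safe one applies it on radius $R'<R$ and lets $R'\uparrow R$, or simply uses harmonicity slightly beyond $\abs{z}=R$.

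Substituting $u_k(0)=\log\bigl(M/\abs{F_k(0)}\bigr)$ and exponentiating $\log\abs{\Ft_k(z)}=\log M-u_k(z)$ reverses the inequalities. The upper Harnack bound on $u_k$ yields
\[
  \abs{\Ft_k(z)}\ \ge\ M\Bigl(\frac{\abs{F_k(0)}}{M}\Bigr)^{\frac{R+\rho}{R-\rho}},
\]
and the lower bound yields the exponent $(R-\rho)/(R+\rho)$. Because $\abs{F_k(0)}/M\le1$, these two expressions are ordered as stated in \eqref{eq:harnack}.

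There is no real obstacle. The only care needed is to check two things: that $k\ge k_0$ guarantees both the uniform bound $M$ and that the disk $\abs{z}\le R$ sits inside the domain of $\Ft_k$, and that the direction of each exponent is correct after the sign flip.
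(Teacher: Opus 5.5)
Your proof is correct and follows the paper's argument: form the nonnegative harmonic function $\log M-\log\abs{\Ft_k}$ on $\abs{z}<R$ using \eqref{eq:M}, apply Harnack's inequality at the origin, substitute $\Ft_k(0)=F_k(0)$ and exponentiate, with the exponents landing in the right places. One small imprecision: \eqref{eq:M} gives nonnegativity only on the closed disk $\abs{z}\le R$, not on a neighbourhood of it, but that is all Harnack's inequality on the disk of radius $R$ needs.
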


\begin{proof}
Since $\Ft_k$ is zero-free on the simply connected disk $\abs{z}<R$, we may
write $\Ft_k=\rme^{g_k}$ there, so $u_k:=\log\abs{\Ft_k}=\Ree\,g_k$ is
harmonic, where subharmonicity alone would be the general case. By
\eqref{eq:M} the function
$h_k:=\log M-u_k$ is nonnegative and harmonic on $\abs{z}<R$, and Harnack's
inequality gives
\[
  \frac{R-\rho}{R+\rho}\,h_k(0)\ \le\ h_k(z)\ \le\ \frac{R+\rho}{R-\rho}\,h_k(0)
  \qquad(\abs{z}\le\rho).
\]
Substituting $h_k(0)=\log\bigl(M/\abs{F_k(0)}\bigr)$, which uses
$\Ft_k(0)=F_k(0)$, and exponentiating gives \eqref{eq:harnack}.
\end{proof}

Only the supremum bound \eqref{eq:M} enters, and not the Bergman norm, so the
pathology of Bergman zero sets, which need not satisfy the Blaschke condition
\cite{Horowitz1974,Korenblum1975,HKZ2000}, plays no role here.
That is what makes the extension to finite rank a matter of dividing out a
factor rather than of controlling a norm.

\subsection{Finite rank}

\begin{lemma}[Local Blaschke factor]\label{lem:blaschke}
Let $Z_k$ denote the zeros of $\Ft_k$ in the \emph{open} disk $\abs{z}<R$,
listed with multiplicity, and set
\begin{equation}\label{eq:Bk}
  B_k(z):=\prod_{z_j\in Z_k}\frac{R(z-z_j)}{R^2-\bar z_j z},
  \qquad G_k:=\Ft_k/B_k .
\end{equation}
Then $\abs{B_k}\equiv1$ on $\abs{z}=R$ and $\abs{B_k}\le1$ inside, $G_k$ is
holomorphic and zero-free on $\abs{z}<R$, and $\abs{G_k}\le M$ there.
\end{lemma}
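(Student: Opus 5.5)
The plan is to verify the three assertions of \cref{lem:blaschke} one factor at a time, since $B_k$ is a finite product. Finiteness comes first. The set $Z_k$ is the zero set of $\Ft_k$ in the open disk $\abs{z}<R$, and $\Ft_k$ is holomorphic on $\abs{z}<\sqrt{2k}$ with $R<\sqrt{2k_0}\le\sqrt{2k}$. If $\Ft_k\not\equiv0$, its zeros cannot accumulate inside the compact disk $\abs{z}\le R$, so $Z_k$ is finite. The degenerate case $F_k\equiv0$ is excluded, or treated trivially.

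For a single factor $b(z)=R(z-z_j)/(R^2-\bar z_jz)$ with $\abs{z_j}<R$, the pole $R^2/\bar z_j$ lies outside $\abs{z}\le R$, so $b$ is holomorphic on a neighbourhood of the closed disk. On $\abs{z}=R$ we have $\bar z=R^2/z$, hence
\[
R^2-\bar z_jz=z\,(\bar z-\bar z_j)=z\,\overline{(z-z_j)},
\]
and so $\abs{b(z)}=R\abs{z-z_j}/(R\abs{z-z_j})=1$. The maximum principle then gives $\abs{b}\le1$ inside. Taking the product gives the same two facts for $B_k$.

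The quotient $G_k=\Ft_k/B_k$ has removable singularities at the points of $Z_k$, because $B_k$ vanishes there to exactly the listed multiplicities and nowhere else in the disk. Thus $G_k$ is holomorphic and zero-free on $\abs{z}<R$.

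The bound $\abs{G_k}\le M$ is the one step needing care, because $\abs{B_k}$ is small near its zeros, so a pointwise division fails. The fix is to apply the maximum principle on slightly smaller circles. Take $r<R$ with $r$ greater than $\max\abs{z_j}$. On $\abs{z}=r$, \eqref{eq:M} gives $\abs{\Ft_k}\le M$, while $\abs{B_k}\ge m(r):=\min_{\abs{z}=r}\abs{B_k}$. This minimum tends to $1$ as $r\to R$, because $B_k$ is continuous on the closed disk and unimodular on $\abs{z}=R$. Hence $\abs{G_k}\le M/m(r)$ on $\abs{z}\le r$. Letting $r\uparrow R$ gives $\abs{G_k}\le M$ on the open disk.

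Strictly, \eqref{eq:M} was stated for $\abs{z}\le R$ with $k_0>R^2/2$. That hypothesis is exactly what the argument needs, and the closed disk covers the circles $\abs{z}=r$ used above.
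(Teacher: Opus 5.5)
Your proof is correct and takes the same route as the paper. You get unimodularity of each factor on $\abs{z}=R$ from $\abs{R^2-\bar z_jz}=R\abs{z-z_j}$, then the maximum principle for $\abs{B_k}\le1$, poles outside the disk, removable singularities for $G_k$, and the maximum principle for $G_k$. Your limit over circles $\abs{z}=r\uparrow R$ is valid but not needed: $G_k$ is holomorphic on a neighbourhood of $\abs{z}\le R$ and $\abs{G_k}=\abs{\Ft_k}\le M$ exactly on $\abs{z}=R$, which is how the paper concludes.
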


\begin{proof}
For $z=R\rme^{\rmi\theta}$,
\[
  \abs{R^2-\bar z_jz}=R\,\abs{R-\bar z_j\rme^{\rmi\theta}}
  =R\,\bigl|\overline{R\rme^{\rmi\theta}-z_j}\bigr|=R\abs{z-z_j},
\]
so each factor of $B_k$ has modulus one on $\abs{z}=R$, and $\abs{B_k}\le1$
inside by the maximum principle. The poles of $B_k$ are at $R^2/\bar z_j$,
of modulus $R^2/\abs{z_j}>R$, so $B_k$ is holomorphic on $\abs{z}<R$ with
exactly the zeros of $\Ft_k$ there. Hence $G_k$ is holomorphic and zero-free,
and $\abs{G_k}=\abs{\Ft_k}\le M$ on $\abs{z}=R$, so $\abs{G_k}\le M$ inside.
\end{proof}

Taking $Z_k$ in the open disk is what makes \eqref{eq:Bk} well defined. A zero
on $\abs{z}=R$ would make the corresponding factor of the form $0/0$ there.
Such a zero is simply not divided out, and it lies outside the disk on which
we work.

\begin{lemma}[Jensen]\label{lem:jensen}
Let $\mu=\ord_0\Ft_k$ and let $b^{(k)}_\mu$ be the first nonvanishing Taylor
coefficient. Set
\begin{equation}\label{eq:Lambda}
  \Lambda_k:=\exp\Bigl[\frac1{2\pi}\int_0^{2\pi}
  \log\abs{\Ft_k(R\rme^{\rmi\theta})}\dth\Bigr].
\end{equation}
Then $\abs{G_k(0)}=\Lambda_k$ and
\begin{equation}\label{eq:jensen}
  \log\Lambda_k=\log\abs{b^{(k)}_\mu}+\mu\log R
  +\sum_{0<\abs{z_j}<R}\log\frac{R}{\abs{z_j}} .
\end{equation}
\end{lemma}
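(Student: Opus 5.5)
The plan is to apply the classical Jensen formula to the zero-free function $G_k$ of \cref{lem:blaschke}, after first removing the zero at the origin. I work on the disk $\abs{z}<R$, where $\Ft_k$ is holomorphic once $k>R^2/2$. For the first claim, $\abs{G_k(0)}=\Lambda_k$, the key fact is that $G_k$ is holomorphic and zero-free on $\abs{z}<R$. So $\log\abs{G_k}$ is harmonic there and satisfies the mean value property on circles $\abs{z}=s<R$.

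I pass to $s\uparrow R$. On $\abs{z}=R$ we have $\abs{B_k}=1$, so $\log\abs{G_k}=\log\abs{\Ft_k}$ on the boundary circle. The one technical point is that $\Ft_k$ may have zeros on $\abs{z}=R$ itself; these are not divided out, so $G_k$ may vanish on the boundary. I will first treat the case where $\Ft_k\ne0$ on $\abs{z}=R$. There $G_k$ is continuous and zero-free on the closed disk, and the mean value property passes to $s=R$ directly. If boundary zeros occur, I will use the standard fact that $\int_0^{2\pi}\log\abs{R\rme^{\rmi\theta}-a}\dth=2\pi\log R$ for $\abs{a}=R$. This means each boundary zero contributes exactly what a factor $(z-a)$ normalised at the origin predicts. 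Alternatively, I can perturb $R$ slightly, using that the zeros are isolated, and argue by continuity of both sides in $R$, since the logarithmic singularities are integrable. Either way this yields
\[
\log\abs{G_k(0)}=\frac1{2\pi}\int_0^{2\pi}\log\abs{\Ft_k(R\rme^{\rmi\theta})}\dth=\log\Lambda_k .
\]
I expect this boundary-zero case to be the main obstacle, modest as it is. I would handle it by splitting off $\prod(z-a)$ over the boundary zeros and computing that integral explicitly.

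For \eqref{eq:jensen} I compute $\abs{G_k(0)}$ directly from the definition $G_k=\Ft_k/B_k$. Near the origin, $\Ft_k(z)=b^{(k)}_\mu z^\mu(1+O(z))$. The Blaschke product $B_k$ contains $\mu$ factors with $z_j=0$, each equal to $z/R$. The remaining factors take the value $R(-z_j)/R^2=-z_j/R$ at $z=0$, which has modulus $\abs{z_j}/R$. Therefore
\[
\abs{G_k(0)}=\lim_{z\to0}\frac{\abs{\Ft_k(z)}}{\abs{B_k(z)}}=\abs{b^{(k)}_\mu}\,R^{\mu}\prod_{0<\abs{z_j}<R}\frac{R}{\abs{z_j}} .
\]
Taking logarithms and combining with the first claim gives \eqref{eq:jensen}. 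This requires $\Ft_k\not\equiv0$, which is implicit because $\mu$ is assumed finite. Finally, I note that $b^{(k)}_\mu=(2k)^{-\mu/2}c^{(k)}_\mu$ by \cref{prop:identity}, so the formula can equally be written in terms of the disk coefficients. This is not needed here.
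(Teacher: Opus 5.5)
Your argument is correct, but it reaches \eqref{eq:jensen} by a different route from the paper.

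The paper proves the two assertions independently. It cites the classical Jensen formula for $\Ft_k(z)/z^\mu$, whose sum runs over the closed disk $\abs{z_j}\le R$, and notes that a boundary zero contributes $\log(R/R)=0$. Separately, it gets $\abs{G_k(0)}=\Lambda_k$ from the mean value property of $\log\abs{G_k}$ on circles of radius $\rho<R$ and the limit $\rho\uparrow R$.

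You prove $\abs{G_k(0)}=\Lambda_k$ first and then read off \eqref{eq:jensen} by evaluating $\abs{\Ft_k/B_k}$ at the origin. There the $\mu$ factors $z/R$ and the factors of modulus $\abs{z_j}/R$ give exactly $\abs{b^{(k)}_\mu}R^{\mu}\prod_{0<\abs{z_j}<R}R/\abs{z_j}$. This is the textbook proof of Jensen's formula carried out with the paper's own $B_k$. It buys three things:
\begin{itemize}
\item the lemma becomes self-contained, with no citation needed;
\item it shows that the two assertions are one fact;
\item the open-disk sum comes out directly, with boundary zeros handled in the mean-value step rather than by the open-versus-closed comparison.
\end{itemize}

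Your version also makes explicit what the paper's limit $\rho\uparrow R$ tacitly needs when $\Ft_k$ vanishes on $\abs{z}=R$. That is your identity $\frac1{2\pi}\int_0^{2\pi}\log\abs{R\rme^{\rmi\theta}-a}\dth=\log R$ for $\abs{a}=R$.

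Of your two treatments of boundary zeros, keep the factorisation $G_k=H\prod(z-a)$, with $H$ zero-free on a neighbourhood of the closed disk. Perturbing $R$ changes $B_k$, and it requires continuity of the circle mean in $R$ across a zero, which is no easier to justify than the explicit integral.
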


\begin{proof}
Jensen's formula \cite[Theorem 15.18]{Rudin} applied to $\Ft_k(z)/z^\mu$,
whose value at the origin is $b^{(k)}_\mu$, gives \eqref{eq:jensen} with the
sum extended over $0<\abs{z_j}\le R$. A zero on $\abs{z}=R$ contributes
$\log(R/R)=0$, so the two sums agree and no choice of $R$ is needed. For the
first assertion, $\log\abs{G_k}$ is harmonic on $\abs{z}<R$, so its mean over
$\abs{z}=\rho$ equals $\log\abs{G_k(0)}$ for every $\rho<R$, and letting
$\rho\uparrow R$ and using $\abs{B_k}=1$ on the boundary circle identifies the
limit with $\log\Lambda_k$.
\end{proof}

\begin{theorem}[Rank-$r$ rigidity]\label{thm:harnack-r}
Let $\norm{F_k}_k\le1$ with at most $r$ zeros in $\abs{z}<R$, and let
$0<\rho<R$. Then for $k\ge k_0$ and $\abs{z}\le\rho$,
\begin{equation}\label{eq:harnackr}
  \abs{B_k(z)}\,M\Bigl(\frac{\Lambda_k}{M}\Bigr)^{\frac{R+\rho}{R-\rho}}
  \ \le\ \abs{\Ft_k(z)}\ \le\
  M\Bigl(\frac{\Lambda_k}{M}\Bigr)^{\frac{R-\rho}{R+\rho}} .
\end{equation}
At $r=0$ this reduces to \cref{lem:harnack}, since then $B_k\equiv1$ and
$\Lambda_k=\abs{F_k(0)}$.
\end{theorem}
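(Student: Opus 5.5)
The plan is to reduce to the zero-free case of \cref{lem:harnack}, applied not to $\Ft_k$ itself but to the quotient $G_k=\Ft_k/B_k$ of \cref{lem:blaschke}. That lemma already says $G_k$ is holomorphic and zero-free on $\abs{z}<R$ with $\abs{G_k}\le M$ there, for $k\ge k_0$ with $M$ from \eqref{eq:M}. So $\log\abs{G_k}$ is harmonic on the disk, and $h_k:=\log M-\log\abs{G_k}$ is a nonnegative harmonic function on $\abs{z}<R$.

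Harnack's inequality then gives, for $\abs{z}\le\rho$,
\[
\frac{R-\rho}{R+\rho}\,h_k(0)\le h_k(z)\le\frac{R+\rho}{R-\rho}\,h_k(0).
\]
By \cref{lem:jensen}, $h_k(0)=\log(M/\Lambda_k)$, which is nonnegative because $\Lambda_k=\abs{G_k(0)}\le M$. Exponentiating turns the Harnack bounds into
\[
M\Bigl(\frac{\Lambda_k}{M}\Bigr)^{\frac{R+\rho}{R-\rho}}\le\abs{G_k(z)}\le M\Bigl(\frac{\Lambda_k}{M}\Bigr)^{\frac{R-\rho}{R+\rho}}.
\]
This is exactly the computation in the proof of \cref{lem:harnack}, with $\abs{F_k(0)}$ replaced by $\Lambda_k$.

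To return to $\Ft_k=B_kG_k$, multiply through by $\abs{B_k(z)}$. The lower bound of \eqref{eq:harnackr} follows directly. For the upper bound, use $\abs{B_k}\le1$ on $\abs{z}<R$ from \cref{lem:blaschke}, which gives $\abs{\Ft_k}\le\abs{G_k}$, and then apply the upper bound for $G_k$.

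The reduction at $r=0$ is immediate. There $Z_k$ is empty, so $B_k\equiv1$. Since $\Ft_k$ has no zero at the origin, $\mu=0$ and the sum in \eqref{eq:jensen} is empty, so $\Lambda_k=\abs{b_0^{(k)}}=\abs{F_k(0)}$.

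The only point needing care is a degenerate case: $G_k$ is zero-free, so $\Lambda_k>0$, provided $\Ft_k\not\equiv0$. If $\Ft_k\equiv0$, then $\Lambda_k=0$ and both sides of \eqref{eq:harnackr} vanish, with the convention $0^p=0$. Beyond this, the hypothesis of at most $r$ zeros is not used quantitatively; it only guarantees that $B_k$ is a finite product. I expect no real obstacle, since all the analytic input was isolated in \cref{lem:blaschke,lem:jensen}.
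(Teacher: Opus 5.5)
Your proposal is correct and is the paper's proof. You apply Harnack to the zero-free quotient $G_k=\Ft_k/B_k$, which is bounded by $M$, identify $\abs{G_k(0)}=\Lambda_k$ via \cref{lem:jensen}, then multiply by $\abs{B_k(z)}$ and use $\abs{B_k}\le1$ for the upper bound. Running the Harnack computation directly on $G_k$, instead of citing \cref{lem:harnack} (whose hypotheses concern an element of $\Ak$, not $G_k$), is if anything slightly cleaner, and your degenerate case $\Ft_k\equiv0$ is already excluded by the hypothesis of at most $r$ zeros.
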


\begin{proof}
Apply \cref{lem:harnack} to $G_k$, which is legitimate by
\cref{lem:blaschke}, and use $\abs{G_k(0)}=\Lambda_k$ from
\cref{lem:jensen}. Multiplying through by $\abs{B_k(z)}$ and using
$\abs{B_k}\le1$ for the upper bound gives \eqref{eq:harnackr}.
\end{proof}

\begin{corollary}\label{cor:blaschke-lb}
With $Z_k$ as in \cref{lem:blaschke},
$\abs{B_k(z)}\ge\bigl(\dist(z,Z_k)/2R\bigr)^{r}$ for $\abs{z}\le\rho$.
\end{corollary}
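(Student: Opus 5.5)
The bound is a product of elementary one-factor estimates. Recall from \cref{lem:blaschke} that $B_k(z)=\prod_{z_j\in Z_k}R(z-z_j)/(R^2-\bar z_j z)$, where $Z_k$ has at most $r$ elements because the hypothesis of \cref{thm:harnack-r} allows at most $r$ zeros in $\abs{z}<R$. Throughout, $\abs{z}\le\rho<R$ and $\abs{z_j}<R$. The plan is to bound each factor from below by $\dist(z,Z_k)/2R$, and then take the product.

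For a single factor, the numerator satisfies $R\abs{z-z_j}\ge R\,\dist(z,Z_k)$, since $z_j\in Z_k$. The denominator is bounded above by the triangle inequality:
\[
  \abs{R^2-\bar z_j z}\le R^2+\abs{z_j}\abs{z}\le R^2+R\rho< 2R^2 .
\]
Hence each factor has modulus at least $R\,\dist(z,Z_k)/2R^2=\dist(z,Z_k)/2R$.

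Multiplying over the $m:=\#Z_k\le r$ factors gives
\[
  \abs{B_k(z)}\ge\bigl(\dist(z,Z_k)/2R\bigr)^{m}.
\]
To replace the exponent $m$ by $r$, it suffices that the base is at most one, because then lowering the base to a larger power only decreases it. This holds since $\dist(z,Z_k)\le\abs{z}+\abs{z_j}<2R$ whenever $Z_k\ne\emptyset$. If $Z_k=\emptyset$, then $B_k\equiv1$, and under the natural convention $\dist(z,\emptyset)=+\infty$ one should read the bound as trivial. Alternatively, with $\dist$ capped at $2R$ the right-hand side is $1$, and the inequality again holds.

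There is no real obstacle here. The only care needed is the convention for empty $Z_k$ and the observation that the base is at most one, which is what justifies the passage from exponent $m$ to exponent $r$.
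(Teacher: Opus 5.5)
Your proof is correct and follows the paper's argument: the triangle inequality bounds each denominator $\abs{R^2-\bar z_j z}$ by $2R^2$, so each Blaschke factor is at least $\dist(z,Z_k)/2R$, and you pass from the exponent $\#Z_k$ to the larger exponent $r$ because the base lies in $[0,1]$, since $\dist(z,Z_k)\le 2R$. One small correction in the empty case: the convention $\dist(z,\emptyset)=+\infty$ makes the right-hand side infinite for $r\ge1$, so the bound becomes false there rather than trivial, and only your alternative of capping $\dist$ at $2R$ (or simply assuming $Z_k\ne\emptyset$, as the paper tacitly does) makes the statement hold.
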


\begin{proof}See \cref{app:B2}.\end{proof}

The controlling quantity $\Lambda_k$ is the geometric mean of $\abs{\Ft_k}$ on
the circle $\abs{z}=R$, and by \eqref{eq:jensen} it packages the first
nonvanishing coefficient together with the positions of the zeros. It also
detects degeneration, and it does so in a usable form. At fixed $R$ one has
$\Lambda_k\to0$ if and only if $\Ft_k\to0$ uniformly on compact subsets of
$\abs{z}<R$: the upper bound in \eqref{eq:harnackr} gives one direction, and
uniform convergence to zero on the circle forces the geometric mean
\eqref{eq:Lambda} to zero in the other.

The constant $M$ in \eqref{eq:M} degenerates as $R\to\infty$, so
\cref{thm:harnack-r} is a statement on fixed compact sets and says nothing at
the scale $R\sim\sqrt{2k}$ of the boundary of the rescaled disk. That is the
region where escaping mass accumulates, and it is treated by a different
method in \cref{sec:escape}. The division of labour is deliberate, with harmonicity controlling the
interior and a variational argument the edge.

\subsection{The escape criterion}

One more object is needed. Let $\mu_k$ denote the rescaled Husimi measure, the
pushforward of $Q_{F_k}\,\rmd\nu_k$ under $\zeta\mapsto\sqrt{2k}\,\zeta$, a
positive measure on $\C$ of total mass $\norm{F_k}_k^2$. On any compact set
its density converges to $\pi^{-1}\abs{\Finf}^2\rme^{-\abs{z}^2}$ whenever
$\Ft_k\to\Finf$ locally uniformly, by dominated convergence using
\eqref{eq:M}.

The rigidity above is quantitative but requires a bound on the number of zeros
in a disk whose radius is fixed in advance. For families of fixed finite rank
the criterion can be stated purely in terms of the state, and in a form that
does not use \cref{thm:harnack-r} at all.

\begin{theorem}[Escape criterion]\label{thm:escape}
Let $\norm{F_k}_k\le1$ with $r_k\equiv r<\infty$. The following are
equivalent.
\begin{enumerate}
\item[(i)] $\Ft_k\to0$ locally uniformly on $\C$;
\item[(ii)] $\psi^{(k)}_0,\dots,\psi^{(k)}_r\to0$, that is, the amplitudes on
  the lowest $r+1$ weight vectors all tend to zero;
\item[(iii)] $\mu_k(\{\abs{z}\le R\})\to0$ for every $R>0$;
\item[(iv)] $\mu_k(\{\abs{z}\le R\})\to0$ for some $R>0$.
\end{enumerate}
\end{theorem}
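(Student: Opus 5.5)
The plan is to prove the cycle (i)$\Rightarrow$(ii)$\Rightarrow$(i) and the chain (i)$\Rightarrow$(iii)$\Rightarrow$(iv)$\Rightarrow$(i). Both returns to (i) will be by contradiction through \cref{lem:normal}. If $\Ft_k\not\to0$ locally uniformly, then some compact set and some $\varepsilon>0$ witness the failure along a subsequence. Normality then extracts a further subsequence with $\Ft_k\to\Finf$ locally uniformly, and the witness forces $\Finf\not\equiv0$. In each case the task is to show that such a nonzero limit contradicts the hypothesis.

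For (i)$\Rightarrow$(ii), local uniform convergence to $0$ gives $b^{(k)}_m\to0$ for each fixed $m$, by the Cauchy integral formula on a fixed circle. By \cref{prop:identity}, $\psi^{(k)}_m=b^{(k)}_m\sqrt{m!\,\wk(m)}$ with $\wk(m)\le1$, so every fixed amplitude tends to zero, and in particular $\psi^{(k)}_0,\dots,\psi^{(k)}_r$ do. This direction does not even use the rank hypothesis.

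For (ii)$\Rightarrow$(i), take a subsequence limit $\Finf\not\equiv0$. By \cref{thm:semicont}, $r(\Finf)\le\liminf_k r_k=r$, so $\Finf$ has at most $r$ zeros in $\C$, counted with multiplicity. Its Taylor coefficients $b_0,\dots,b_r$ cannot all vanish, since otherwise $\Finf$ would have a zero of order at least $r+1$ at the origin. On the other hand, $b^{(k)}_m\to b_m$ and $\wk(m)\to1$ by \cref{lem:weight}(iv), so $\psi^{(k)}_m\to b_m\sqrt{m!}$. Hypothesis (ii) then forces $b_0=\dots=b_r=0$, a contradiction. This is the only place the fixed rank enters, and \cref{thm:semicont} is exactly what is needed; the normal form of \cref{thm:collapse}(ii) is not required.

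For the measure statements, I first write the density of $\mu_k$ explicitly. Pushing $Q_{F_k}\,\rmd\nu_k$ forward under $z=\sqrt{2k}\,\zeta$ gives the density
\[
  \frac{2k-1}{2k\pi}\Bigl(1-\frac{\abs{z}^2}{2k}\Bigr)^{2k-2}\abs{\Ft_k(z)}^2
\]
on $\abs{z}<\sqrt{2k}$. On $\abs{z}\le R$ and for $k\ge k_0$ this is bounded by a constant times $M^2$, by \eqref{eq:M}. Hence (i) gives (iii) directly, since the density tends to zero uniformly on the disk, whose area is fixed. The step (iii)$\Rightarrow$(iv) is trivial. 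For (iv)$\Rightarrow$(i), suppose $\mu_k(\{\abs{z}\le R\})\to0$ for some fixed $R$, and take a nonzero limit $\Finf$. By the dominated convergence already noted before the theorem, $\mu_k(\{\abs{z}\le R\})$ tends along that subsequence to $\pi^{-1}\int_{\abs{z}\le R}\abs{\Finf}^2\rme^{-\abs{z}^2}\dA$. This integral is strictly positive, because a nonzero entire function cannot vanish on a set of positive measure, and that contradicts (iv).

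I expect the main obstacle to be bookkeeping rather than ideas. The prefactor $(2k-1)/2k$ and the exponent $2k-2$ must be checked so that the density really is uniformly bounded on compact sets and converges to $\pi^{-1}\abs{\Finf}^2\rme^{-\abs{z}^2}$. This is where $k>1/2$ is used, consistent with \cref{rem:hardy}. The other point to be careful about is the subsequence logic in the two contradiction arguments: a nonzero limit along one subsequence must contradict a hypothesis that is stated for the whole family.
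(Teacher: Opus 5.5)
Your proposal is correct and follows the paper's proof essentially step for step: the same Cauchy-coefficient argument for (i)$\Rightarrow$(ii), the same contradiction through normality and \cref{thm:semicont} for (ii)$\Rightarrow$(i), and the same dominated-convergence argument on the explicit Husimi density for (iv)$\Rightarrow$(i). You bypass the normal form of \cref{thm:collapse} and use only $r(\Finf)\le r$; this is exactly what the paper's remark after the proof says the argument needs.
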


\begin{proof}
(i)$\Rightarrow$(ii). Local uniform convergence gives $b^{(k)}_m\to0$ for
each fixed $m$. The relation between coefficients and amplitudes in
\cref{prop:identity}, together with $\wk(m)\to1$ from
\cref{lem:weight}(iv), turns this into $\psi^{(k)}_m\to0$.

(ii)$\Rightarrow$(i). Pass to a subsequence along which $\Ft_k\to L$, which is
possible by \cref{lem:normal}. If $L\not\equiv0$ then \cref{thm:semicont}
gives $r(L)\le r<\infty$, so by \cref{thm:collapse} we have
$L=P\rme^{az^2+bz+c}$ with $\deg P\le r$. The order of vanishing of $L$ at the
origin equals the multiplicity of $0$ as a root of $P$ and is therefore at
most $r$, so the coefficients $b_0,\dots,b_r$ of $L$ cannot all vanish. This
contradicts (ii). Hence every limit point is zero, and normality upgrades this
to convergence of the whole family.

(i)$\Rightarrow$(iii)$\Rightarrow$(iv) is immediate. For (iv)$\Rightarrow$(i),
suppose (i) fails. Then some subsequence converges to $\Finf\not\equiv0$, and
by dominated convergence
$\mu_k(\{\abs{z}\le R\})\to\pi^{-1}\int_{\abs{z}\le R}
\abs{\Finf}^2\rme^{-\abs{z}^2}\dA$ for every $R$. This vanishes for some $R$
only if $\Finf$ vanishes on that disk and hence identically.
\end{proof}

The proof of (ii)$\Rightarrow$(i) is short because the possible limits have
already been classified. All it uses is that a limit of rank at most $r$
cannot vanish to order greater than $r$ at the origin.

The criterion does not reduce to the vacuum amplitude once $r\ge1$. The basis
state $\ket{k,k+1}$ has stellar function proportional to $\zeta$, so
$F_k(0)=\psi^{(k)}_0=0$ for every $k$, yet the family converges to the Fock
state $\ket1$ and does not escape. Testing only $\psi_0$ would report escape,
and testing the lowest $r+1$ amplitudes is the least one can do.

At $r=0$ the criterion reads simply that a zero-free family escapes if and
only if the vacuum fidelity $\abs{\Braket{k,k|\psi_k}}^2$ tends to zero.

\subsection{No multi-lobed pathology}

One might expect that mass could be lost by splitting into two lobes that
separate while a valley between them deepens. Harmonicity forbids this at rank
zero, and \cref{cor:blaschke-lb} shows that finitely many zeros do not create
the freedom either.

\begin{theorem}\label{thm:nolobes}
Let $\norm{F_k}_k\le1$ with at most $r$ zeros in $\abs{z}<R$, and suppose
$\liminf_k\Lambda_k>0$. Then $\abs{\Ft_k}$ has at most $r$ interior local
minima in $\abs{z}<R$, each of them at a zero, and for every $\delta>0$ and
every $\rho<R$ there is $c=c(\delta,\rho,R,r)>0$ with
\begin{equation}
  \abs{\Ft_k(z)}\ \ge\ c
  \qquad\text{whenever }\abs{z}\le\rho\ \text{and}\ \dist(z,Z_k)\ge\delta .
\end{equation}
\end{theorem}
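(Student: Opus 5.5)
The plan has two parts, matching the two assertions. Both rest on the factorisation $\Ft_k=B_kG_k$ of \cref{lem:blaschke}, with $G_k$ holomorphic and zero-free on $\abs{z}<R$.

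\emph{Local minima.} Off $Z_k$ the function $\log\abs{\Ft_k}=\log\abs{B_k}+\log\abs{G_k}$ is harmonic. Equivalently, $\Ft_k$ is holomorphic and nonvanishing near any point not in $Z_k$.
\begin{itemize}
\item By the minimum modulus principle, applied to $1/\Ft_k$ on a small disk, $\abs{\Ft_k}$ cannot have an interior local minimum at such a point unless $\Ft_k$ is constant.
\item Hence every interior local minimum lies in $Z_k$.
\item Since $Z_k$ has at most $r$ points counted with multiplicity, there are at most $r$ such minima.
\end{itemize}
The constant case needs a remark. A nonzero constant $\Ft_k$ has no zeros, and "local minimum" is then read in the strict sense, so it contributes none. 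This is the only place where care with definitions is needed.

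\emph{Quantitative lower bound.} The plan is to combine the lower half of \eqref{eq:harnackr} in \cref{thm:harnack-r} with \cref{cor:blaschke-lb}.
\begin{enumerate}
\item Since $\liminf_k\Lambda_k>0$, fix $\lambda>0$ and $k_1\ge k_0$ with $\Lambda_k\ge\lambda$ for $k\ge k_1$.
\item Note that $\Lambda_k=\abs{G_k(0)}\le M$ by \cref{lem:blaschke,lem:jensen}. Hence $\Lambda_k/M\in[\lambda/M,1]$, and raising to the positive power $(R+\rho)/(R-\rho)$ preserves the lower bound.
\item For $\abs{z}\le\rho$ with $\dist(z,Z_k)\ge\delta$, \cref{cor:blaschke-lb} gives $\abs{B_k(z)}\ge(\delta/2R)^r$.
\end{enumerate}
Together these yield
\[
  \abs{\Ft_k(z)}\ \ge\ c:=\Bigl(\frac{\delta}{2R}\Bigr)^{r}M\Bigl(\frac{\lambda}{M}\Bigr)^{\frac{R+\rho}{R-\rho}}>0
  \qquad(k\ge k_1).
\]

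\emph{Main obstacle.} The difficulty is bookkeeping about uniformity rather than any analysis. The constant $c$ also depends on $\lambda$ and on $M=M(R,k_0)$ from \eqref{eq:M}, and the bound is asserted only for $k\ge k_1$. I would state this dependence explicitly, regarding $\lambda$ and $k_0$ as fixed data of the family. The finitely many $k<k_1$ can then be dropped, since the statement concerns the family as $k\to\infty$.
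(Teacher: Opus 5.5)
Your proposal is correct and follows the paper's argument. Off $Z_k$ the minimum principle for the harmonic function $\log\abs{\Ft_k}$ confines local minima to the at most $r$ zeros (your maximum modulus argument for $1/\Ft_k$ is equivalent), and the lower bound combines the left half of \eqref{eq:harnackr}, \cref{cor:blaschke-lb} and $\Lambda_k\ge\lambda$. Your remarks on the constant case and on the dependence of $c$ on $\lambda$ and $M$ make explicit what the paper leaves implicit. One further detail: replace $\delta$ by $\min(\delta,2R)$, so that $(\delta/2R)^r\le1$ still holds when $Z_k$ is empty and $B_k\equiv1$.
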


\begin{proof}
Off the zero set $\log\abs{\Ft_k}$ is harmonic, so it satisfies the minimum
principle and has no interior local minimum there, so the only candidates are the
zeros, of which there are at most $r$. The lower bound is
\eqref{eq:harnackr} combined with \cref{cor:blaschke-lb} and the assumed lower
bound on $\Lambda_k$.
\end{proof}

A valley separating two lobes at bounded rescaled distance therefore requires
driving a zero into it, and in the limit that valley is a zero of $\Finf$, of
which \cref{thm:collapse} allows exactly $\deg P$. The picture is summarised by
the factorisation $\Ft_k=B_kG_k$: the Blaschke factor is explicit and carries
all the zeros, and the remaining factor obeys the whole of the rank-zero
rigidity.

\section{Escape of mass as a phase transition}
\label{sec:escape}

\Cref{prop:fatou} allows the Fock norm of the limit to be strictly smaller
than the norms along the family, and \cref{thm:escape} shows that the extreme
case, in which the limit vanishes altogether, is detected by finitely many
amplitudes. What lies between is the subject of this section. We first give a criterion
for no loss at all, then a family that loses exactly half of its mass, and we
show that the halving is not an artefact of the construction.

We work with the measure $\mu_k$ introduced before \cref{thm:escape}, of total
mass $\norm{F_k}_k^2$, and with the local convergence recorded there. The
question is whether the total mass converges as well.

\subsection{Tightness}

The radial part of $\mu_k$ has an exact description, which reduces tightness
to a statement about the amplitudes alone.

\begin{lemma}\label{lem:beta}
Let $\norm{F_k}_k=1$ with amplitudes $\psi^{(k)}_m$ and write
$p^{(k)}_m=\abs{\psi^{(k)}_m}^2$. Then the law of $\abs{\zeta}^2$ under
$Q_{F_k}\,\rmd\nu_k$ is the mixture
\begin{equation}\label{eq:betamix}
  \sum_{m\ge0}p^{(k)}_m\,\Betadist(m+1,\,2k-1).
\end{equation}
\end{lemma}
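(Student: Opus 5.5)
The plan is to compute the distribution of $t=\abs{\zeta}^2$ directly in polar coordinates, using the orthogonality of the monomials under angular integration. Write $F_k(\zeta)=\sum_m c_m\zeta^m$ and $\zeta=\sqrt t\,\rme^{\rmi\theta}$. The measure $Q_{F_k}\,\rmd\nu_k$ has density
\[
\frac{2k-1}{\pi}(1-\abs{\zeta}^2)^{2k-2}\abs{F_k(\zeta)}^2 .
\]
Since $\dA=\tfrac12\,\rmd t\,\rmd\theta$, integrating over $\theta$ for fixed $t$ kills the cross terms $c_m\bar c_n\,t^{(m+n)/2}\rme^{\rmi(m-n)\theta}$ with $m\ne n$. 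What remains is the radial density
\[
(2k-1)(1-t)^{2k-2}\sum_m\abs{c_m}^2 t^m
\]
on $(0,1)$. Interchanging the sum and the angular integral is justified by Tonelli's theorem after expanding $\abs{F}^2$: on each circle $\abs{\zeta}=\sqrt t<1$ the power series converges absolutely and uniformly, and the resulting series of nonnegative terms is what we integrate in $t$.

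The second step is to identify each term with a Beta density. Substitute $\abs{c_m}^2=p^{(k)}_m(2k)_m/m!$ from \eqref{eq:psic}. The $m$-th term is then $p^{(k)}_m$ times
\[
(2k-1)\frac{(2k)_m}{m!}\,t^m(1-t)^{2k-2}.
\]
Compare this with the $\Betadist(m+1,2k-1)$ density, $t^m(1-t)^{2k-2}/B(m+1,2k-1)$. We have
\[
\frac{1}{B(m+1,2k-1)}=\frac{\Gamma(2k+m)}{m!\,\Gamma(2k-1)}=\frac{(2k-1)\,\Gamma(2k+m)}{m!\,\Gamma(2k)}=(2k-1)\frac{(2k)_m}{m!},
\]
which matches exactly. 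Hence the radial law is $\sum_m p^{(k)}_m\Betadist(m+1,2k-1)$. This uses $2k-1>0$, so the Beta parameters are admissible, which is why the statement excludes $k=1/2$ as noted in \cref{rem:hardy}.

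Finally, integrating the mixture gives total mass $\sum_m p^{(k)}_m=\norm{F_k}_k^2=1$. This confirms that $Q_{F_k}\,\rmd\nu_k$ is a probability measure, consistent with the resolution of identity, so \eqref{eq:betamix} is a genuine law.

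The only delicate point is the interchange of the sum with the integrals near $t\to1$, where the individual terms are not uniformly controlled. Monotone convergence handles it, because after angular averaging every term is nonnegative.
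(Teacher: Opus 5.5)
Your proof is correct and follows the same route as the paper. Angular integration removes the cross terms because each basis function is a multiple of $\zeta^m$, so level $m$ contributes a radial density proportional to $t^m(1-t)^{2k-2}$. Where the paper stops at ``proportional to'', you also check that the constant $(2k-1)(2k)_m/m!$ equals $1/B(m+1,2k-1)$ exactly, which fixes the mixture weights at $p^{(k)}_m$ without appealing to the resolution of the identity.
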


\begin{proof}
In the orthonormal basis $e^{(k)}_m$ the density of $Q_{F_k}$ against
$\rmd\nu_k$ is
\[
  (1-\abs{\zeta}^2)^{2k}\Bigl|\sum_m\psi_m\,e^{(k)}_m(\zeta)\Bigr|^2 ,
\]
and the angular integration annihilates the cross terms because
$e^{(k)}_m$ is a multiple of $\zeta^m$. What remains is a sum of the radial
densities of the basis states. Writing $x=\abs{\zeta}^2$, the basis state
$\ket{k,k+m}$ contributes a density proportional to $x^m(1-x)^{2k-2}$, which
is $\Betadist(m+1,2k-1)$.
\end{proof}

The mean and variance of a $\Betadist(a,b)$ law are $a/(a+b)$ and
$ab/[(a+b)^2(a+b+1)]$, so the components of \eqref{eq:betamix} have
\begin{equation}\label{eq:betamoments}
  \mathbb{E}X=\frac{m+1}{m+2k},
  \qquad
  \Var X=\frac{(m+1)(2k-1)}{(m+2k)^2\,(m+2k+1)} ,
\end{equation}
and since $\abs{z}^2=2kX$ under the rescaling, the first of these says that
the level $m$ contributes mass at rescaled radius of order $\sqrt{m}$. Mass
therefore escapes exactly when the amplitudes migrate to high levels, which is
the content of the following.

\begin{theorem}\label{thm:tight}
The family $\{\mu_k\}$ is tight if and only if the coefficient distributions
$\{p^{(k)}\}$ are uniformly tight, that is, if and only if
\begin{equation}\label{eq:ui}
  \lim_{N\to\infty}\ \sup_k\ \sum_{m\ge N}p^{(k)}_m\ =\ 0 .
\end{equation}
\end{theorem}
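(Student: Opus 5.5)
The plan is to reduce both directions to \cref{lem:beta}, which writes the radial law of $\mu_k$ exactly as a mixture. Under the rescaling $\abs{z}^2=2kX$, we have
\[
  \mu_k(\{\abs{z}>R\})=\sum_{m\ge0}p^{(k)}_m\,\beta_{k,m}(R),
  \qquad \beta_{k,m}(R):=\Pr\bigl[X_{k,m}>R^2/2k\bigr],
  \quad X_{k,m}\sim\Betadist(m+1,2k-1).
\]
Tightness of $\{\mu_k\}$ says that $\sup_k\mu_k(\{\abs{z}>R\})\to0$ as $R\to\infty$. So everything comes down to two-sided control of the component tails $\beta_{k,m}(R)$, uniformly in $k$. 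Since the state is normalised, $\sum_m p^{(k)}_m=1$. Only finitely many small $k$ (say $k\le k_1$) could spoil uniformity, and I would handle them separately: for a single fixed $k$, the measure $\mu_k$ and the distribution $p^{(k)}$ are each tight.

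For the direction \eqref{eq:ui}$\Rightarrow$tightness, split the sum at a level $N$. The part with $m\ge N$ is at most $\sup_k\sum_{m\ge N}p^{(k)}_m$, which is small by \eqref{eq:ui}. For $m<N$, apply Markov's inequality with \eqref{eq:betamoments}:
\[
  \beta_{k,m}(R)\le\frac{2k\,\mathbb{E}X_{k,m}}{R^2}=\frac{2k(m+1)}{(m+2k)R^2}\le\frac{m+1}{R^2}\le\frac{N}{R^2}.
\]
This bound is uniform in $k$. Choosing $N$ first and then $R$ gives tightness.

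For the converse, tightness$\Rightarrow$\eqref{eq:ui}, I need a lower bound on $\beta_{k,m}(R)$ for large $m$ that holds uniformly in $k$. This is the main obstacle. Each component is spread out: by \eqref{eq:betamoments} the mean of $2kX$ is $2k(m+1)/(m+2k)$, which is of order $m$ when $m\lesssim k$ and saturates near $2k$ when $m\gg k$. I would use Paley--Zygmund (or Chebyshev) on $Y=2kX_{k,m}$. The relative variance is
\[
  \frac{\Var X}{(\mathbb{E}X)^2}=\frac{2k-1}{(m+1)(m+2k+1)}\le\frac1{m+1},
\]
so $\Pr[Y>\tfrac12\mathbb{E}Y]\ge1-\frac{4}{m+1}\ge\tfrac12$ once $m\ge7$. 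Moreover $\mathbb{E}Y\ge\min(m+1,2k)/2$. Hence for $m\ge N$ and $2k\ge N$, every component puts mass at least $\tfrac12$ outside radius $R=\tfrac12\sqrt{N/2}$. This gives
\[
  \mu_k(\{\abs{z}>\tfrac12\sqrt{N/2}\})\ge\tfrac12\sum_{m\ge N}p^{(k)}_m \qquad\text{for all }k\ge N/2,\ N\ge8.
\]
Tightness makes the left side small uniformly in $k$ as $N\to\infty$.

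It remains to treat $k<N/2$. I would avoid a fragile $N$-dependent range of $k$ by a diagonal choice, as follows. Given $\varepsilon$, pick $R$ such that $\sup_k\mu_k(\abs{z}>R)<\varepsilon/2$, and set $N_0=\max(8,8R^2)$. The inequality above then bounds $\sum_{m\ge N}p^{(k)}_m\le\varepsilon$ for all $N\ge N_0$ and all $k\ge N_0/2$. This holds because for $m\ge N\ge N_0$ and $2k\ge N_0$ the threshold $\tfrac12\sqrt{\min(m,2k)/2}$ is at least $R$. The finitely many remaining values $k<N_0/2$ — finitely many because $2k$ is discrete in the realisations, or by a continuity-in-$k$ compactness argument otherwise — each have a tight $p^{(k)}$. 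Enlarging $N$ to cover them completes \eqref{eq:ui}.

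The point to check carefully is that the variance bound $\Var X/(\mathbb{E}X)^2\le1/(m+1)$ really holds uniformly for all $k>1/2$. The boundary case $2k-1$ small only helps, since it makes the variance smaller.
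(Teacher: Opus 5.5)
Your proof is correct and takes essentially the paper's route: Markov's inequality on the Beta components of \cref{lem:beta} for one direction, and Chebyshev with the relative variance $(2k-1)/[(m+1)(m+2k+1)]$ for the other, with your fixed threshold $N_0=\max(8,8R^2)$ and your separate treatment of the finitely many $k<N_0/2$ spelling out what the paper compresses into ``for some $k\ge N$''. For that last step only your discreteness justification is valid, and the ``continuity-in-$k$'' alternative cannot replace it in general: every $\mu_k$ with $k\le K$ is supported in $\abs{z}\le\sqrt{2K}$, so infinitely many indices in a bounded range of $k$ always give a tight family whatever the $p^{(k)}$ are, and the statement itself (in the paper too) tacitly assumes only finitely many $k$ below each bound.
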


\begin{proof}
Suppose \eqref{eq:ui} holds and let $\varepsilon>0$. By Markov's inequality
and \eqref{eq:betamoments},
\[
  \mathbb{P}\Bigl[X>\frac{R^2}{2k}\Bigr]
  \le\frac{2k}{R^2}\cdot\frac{m+1}{m+2k}\le\frac{m+1}{R^2},
\]
so splitting the mixture at a level $N$ gives
$\mu_k(\{\abs{z}>R\})\le N/R^2+\sum_{m\ge N}p^{(k)}_m$. Choosing $N$ by
\eqref{eq:ui} and then $R$ large makes this smaller than $\varepsilon$
uniformly in $k$.

Conversely, suppose \eqref{eq:ui} fails, so there are $\delta>0$ and
arbitrarily large $N$ with $\sum_{m\ge N}p^{(k)}_m>\delta$ for some $k\ge N$.
For $m\ge N$, Chebyshev's inequality and \eqref{eq:betamoments} give
\[
  \mathbb{P}\Bigl[X<\tfrac12\mathbb{E}X\Bigr]
  \le\frac{4\Var X}{(\mathbb{E}X)^2}
  =\frac{4(2k-1)}{(m+1)(m+2k+1)}\le\frac{4}{N},
\]
the last step because $m\ge N$ forces $(m+1)(m+2k+1)\ge N(2k-1)$. On the
complementary event $\abs{z}^2=2kX\ge k(m+1)/(m+2k)\ge N/3$ when
$N\le k$. Hence $\mu_k(\{\abs{z}^2\ge N/3\})\ge\delta(1-4/N)$, and letting
$N\to\infty$ contradicts tightness.
\end{proof}

\begin{corollary}\label{cor:tightsuff}
If $\sup_k\avg{\hat n}_k<\infty$, where $\hat n=K_0-k$ counts levels above
the lowest weight, then $\{\mu_k\}$ is tight and
$\norm{\Finf}_\Fock=\lim_k\norm{F_k}_k$. The condition is sufficient and not
necessary: for integer $k\ge3$, taking $p^{(k)}_0=1-1/\log k$ and
$p^{(k)}_k=1/\log k$ gives a tight family, since the mass at level $k$ tends to zero, while
$\avg{\hat n}_k=k/\log k$ is unbounded.
\end{corollary}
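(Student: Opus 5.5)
The plan has three parts: tightness via \cref{thm:tight}, equality of norms from tightness plus local convergence, and the counterexample.

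\emph{Tightness.} Since $\avg{\hat n}_k=\sum_m m\,p^{(k)}_m$, Markov's inequality on the coefficient distribution gives
\[
  \sum_{m\ge N}p^{(k)}_m\le\frac{1}{N}\sum_m m\,p^{(k)}_m\le\frac{\sup_k\avg{\hat n}_k}{N}.
\]
This tends to zero uniformly in $k$, which is \eqref{eq:ui}. \Cref{thm:tight} then yields tightness of $\{\mu_k\}$. Here I take $\norm{F_k}_k=1$, as \cref{lem:beta} requires; the general case follows by normalising.

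\emph{Norm equality.} Take a subsequence with $\Ft_k\to\Finf$ locally uniformly. By the local convergence recorded before \cref{thm:escape}, for each fixed $R$,
\[
  \mu_k(\{\abs{z}\le R\})\to\pi^{-1}\int_{\abs{z}\le R}\abs{\Finf}^2\rme^{-\abs{z}^2}\dA .
\]
The right-hand side increases to $\norm{\Finf}_\Fock^2$ as $R\to\infty$, using the standard normalisation of the Fock measure. Tightness gives $\sup_k\mu_k(\{\abs{z}>R\})<\varepsilon$ for $R$ large. The total mass $\norm{F_k}_k^2$ therefore differs from $\mu_k(\{\abs{z}\le R\})$ by at most $\varepsilon$, so
\[
  \abs{\norm{\Finf}_\Fock^2-\limsup_k\norm{F_k}_k^2}\le\varepsilon,
\]
and the same holds for $\liminf$. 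Letting $\varepsilon\to0$ gives equality along the subsequence. Since this works for every convergent subsequence, the limit $\lim_k\norm{F_k}_k$ exists, provided the family has a single limit point or the statement is read per subsequence. I would phrase it as holding along any subsequence on which $\Ft_k$ converges, with $\norm{F_k}_k$ then converging to $\norm{\Finf}_\Fock$.

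One subtlety is the case $\Finf\equiv0$. There the equality reads $\lim\norm{F_k}_k=0$. This is consistent, since tightness plus local vanishing forces the total mass to zero.

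\emph{Counterexample.} Set $p^{(k)}_0=1-1/\log k$ and $p^{(k)}_k=1/\log k$, and check \eqref{eq:ui}. For fixed $N$ and $k\ge N$, the tail $\sum_{m\ge N}p^{(k)}_m=1/\log k$; for $k<N$, the tail is zero. The supremum over $k$ is therefore at most $1/\log N\to0$, so the family is tight by \cref{thm:tight}. On the other hand $\avg{\hat n}_k=k\cdot1/\log k\to\infty$.

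I expect the main obstacle to be the careful handling of subsequences and of the limit interchange in the norm-equality step. The rest is direct from \cref{thm:tight} and Markov's inequality.
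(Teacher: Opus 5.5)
Your proposal is correct and follows the paper's proof. Markov's inequality on the level distribution gives the uniform tail bound \eqref{eq:ui}, \cref{thm:tight} turns that into tightness, and tightness upgrades the local convergence of $\mu_k$ to convergence of the total mass; the counterexample is checked exactly as the paper indicates. You also note that the norm identity must be read along the subsequence defining $\Finf$, or for normalised states, where it says $\norm{\Finf}_\Fock=1$; that is a fair sharpening of what the paper leaves implicit.
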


\begin{proof}
A uniform bound $\sum_m m\,p^{(k)}_m\le C$ gives
$\sum_{m\ge N}p^{(k)}_m\le C/N$ by Markov's inequality, uniformly in $k$, which
is \eqref{eq:ui}; tightness then turns the local convergence of $\mu_k$ into
convergence of the total mass.
\end{proof}

\Cref{cor:tightsuff} is easy to check and covers most families that arise in
practice, since $\avg{\hat n}$ is the mean level above the lowest weight
vector and is bounded for any family built from a fixed finite set of levels
or from coherent states with bounded labels.

\subsection{Partial escape}

Tightness fails in a way that is more interesting than a simple dichotomy
would allow. The next family keeps a nonzero limit and loses exactly half of
its mass, with the lost half leaving every compact set.

\begin{proposition}\label{prop:partial}
For $k>1/2$ set
\begin{equation}\label{eq:haloprop}
  F_k(\zeta)=2^{-1/2}\exp\bigl(\beta_k\zeta^3\bigr),
  \qquad \beta_k=\epsilon_k\,(2k)^{3/2},
\end{equation}
where $\epsilon_k>0$ is defined by the requirement $\norm{F_k}_k=1$. Such an
$\epsilon_k$ exists and is unique, and $\epsilon_k\to0$. Moreover
$\Ft_k\to2^{-1/2}$ locally uniformly on $\C$, while
\begin{equation}\label{eq:half}
  \mu_k\bigl(\{\abs{z}\le R\}\bigr)\ \longrightarrow\
  \tfrac12\bigl(1-\rme^{-R^2}\bigr)
  \qquad\text{for every }R>0 ,
\end{equation}
so that $\lim_{R\to\infty}\lim_k\mu_k(\{\abs{z}\le R\})=\tfrac12$ while
$\norm{F_k}_k=1$ for every $k$.
\end{proposition}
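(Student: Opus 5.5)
The plan is to reduce everything to the amplitudes of $F_k$ and use the monotonicity of $\wk$ in $k$ from \cref{lem:weight}, in the same way as in \cref{prop:fatou}. Expanding the exponential gives
$F_k(\zeta)=2^{-1/2}\sum_{j\ge0}\beta_k^j\zeta^{3j}/j!$. The rescaled function is $\Ft_k(z)=2^{-1/2}\exp(\epsilon_k z^3)$, since $\beta_k(z/\sqrt{2k})^3=\epsilon_k z^3$. Its Taylor coefficients are $b_{3j}=2^{-1/2}\epsilon_k^j/j!$, and all others vanish. By \eqref{eq:key},
\[
  \norm{F_k}_k^2=\tfrac12\,S_k(\epsilon_k),
  \qquad
  S_k(\epsilon):=\sum_{j\ge0}\epsilon^{2j}\,\frac{(3j)!}{(j!)^2}\,\wk(3j).
\]
The normalisation condition is therefore $S_k(\epsilon_k)=2$.

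\emph{Existence and uniqueness.} First I check that $S_k(\epsilon)<\infty$ for every $\epsilon$ at fixed $k$. The quotient $(3j)!/(2k)_{3j}$ grows only polynomially in $j$, like $(3j)^{1-2k}\Gamma(2k)$. Hence $(2k)^{3j}(3j)!/(2k)_{3j}$ is at most geometric in $j$, and the factor $(j!)^{-2}$ gives convergence. This is just the statement that $\zeta\mapsto\rme^{\beta\zeta^3}$ is bounded on $\D$ and hence lies in $\Ak$. As a power series in $\epsilon^2$ with positive coefficients, $S_k$ is continuous and strictly increasing on $[0,\infty)$. It satisfies $S_k(0)=1<2$ and $S_k(\epsilon)\to\infty$ as $\epsilon\to\infty$, already from the $j=1$ term. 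The intermediate value theorem then gives a unique $\epsilon_k>0$.

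\emph{The step I expect to carry the weight: $\epsilon_k\to0$.} A uniform-in-$k$ estimate is awkward here, and monotonicity replaces it. Fix $\epsilon>0$. By \cref{lem:weight}(iii),(iv) and monotone convergence,
\[
  S_k(\epsilon)\uparrow S_\infty(\epsilon)=\sum_j\epsilon^{2j}(3j)!/(j!)^2 .
\]
This series diverges for every $\epsilon>0$, because $(3j)!/(j!)^2$ grows superexponentially; equivalently, $\rme^{\epsilon z^3}\notin\Fock$. So $S_k(\epsilon)>2$ for all large $k$, and strict monotonicity of $S_k$ forces $\epsilon_k<\epsilon$ for those $k$. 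Hence $\epsilon_k\to0$.

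\emph{Limits.} Since $\epsilon_k\to0$, we get $\Ft_k(z)=2^{-1/2}\rme^{\epsilon_k z^3}\to2^{-1/2}$ uniformly on compact sets. The local convergence of the density of $\mu_k$ recorded before \cref{thm:escape} then gives
\[
  \mu_k(\{\abs{z}\le R\})\to\pi^{-1}\int_{\abs{z}\le R}\tfrac12\rme^{-\abs{z}^2}\dA=\tfrac12(1-\rme^{-R^2}).
\]
That convergence is dominated convergence using \eqref{eq:M} together with $(1-\abs{z}^2/2k)^{2k-2}\to\rme^{-\abs{z}^2}$. Letting $R\to\infty$ gives the limit $\tfrac12$, while $\norm{F_k}_k=1$ for every $k$ by construction. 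As a consistency check, the vacuum amplitude satisfies $p^{(k)}_0=\tfrac12$ exactly. The other half of the mass sits on levels $3j$ with $j\to\infty$, which is the failure of \eqref{eq:ui} predicted by \cref{thm:tight}.
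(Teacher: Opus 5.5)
Your proof is correct, and it differs from the paper's only in the step you singled out, $\epsilon_k\to0$. The other differences are immaterial. The paper proves convergence of the series by the ratio test rather than by your Gamma asymptotics. Your normalisation $S_k=2$, with the $j=0$ term included, is the paper's $S_k=1$ without it.

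For $\epsilon_k\to0$ the two arguments differ as follows.
\begin{itemize}
\item \emph{The paper's route.} It bounds $S_k(\epsilon)$ below by the single term at $j=\lfloor\sqrt k\rfloor$. It uses the quantitative estimate $\wk(3j)\ge\rme^{-9j^2/4k}$ of \cref{lem:wquant} together with $(3j)!/(j!)^2\ge j!$, which gives \eqref{eq:crossover}.
\item \emph{Your route.} You let $k\to\infty$ termwise and invoke monotone convergence, which reduces the claim to $\rme^{\epsilon z^3}\notin\Fock$.
\end{itemize}

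What your route buys:
\begin{itemize}
\item It is softer. It is exactly the Fatou mechanism of \cref{prop:fatou}, and in fact the monotonicity \cref{lem:weight}(iii) is not needed: (iv) and Fatou's lemma already give $\liminf_k S_k(\epsilon)=\infty$.
\item It makes transparent that the escape of half the mass is forced because the cubic mode is not a Fock state.
\end{itemize}

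What the paper's route buys: it costs a quantitative lemma but yields more. It gives the crossover $k\gtrsim\rme^{2}\epsilon^{-4}$, that is, an explicit rate $\epsilon_k=O(k^{-1/4})$. The paper then uses this to explain the slow numerical convergence and the truncation trap.

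Neither argument reaches the true rate $\epsilon_k\sim\lambda^*/\sqrt{2k}$, which requires the Laplace analysis of \cref{thm:phases}.
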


\begin{proof}
By \eqref{eq:key} the norm condition reads $1+S_k(\epsilon)=2$, where
\begin{equation}\label{eq:Ssum}
  S_k(\epsilon)=\sum_{j\ge1}\epsilon^{2j}\,\frac{(3j)!}{(j!)^2}\,\wk(3j) .
\end{equation}
The series converges for every $\epsilon>0$ and defines a strictly increasing
entire function of $\epsilon^2$ with $S_k(0)=0$ and
$S_k(\epsilon)\to\infty$, so $S_k(\epsilon_k)=1$ has a unique positive
solution. That $\epsilon_k\to0$ follows from a lower bound on a single term of
\eqref{eq:Ssum} at $j\approx\sqrt{k}$, using
\cref{lem:wquant}, and the computation is in \cref{app:semiclassical}.

Since $\Ft_k(z)=2^{-1/2}\exp(\epsilon_kz^3)$ and $\epsilon_k\to0$, the
convergence $\Ft_k\to2^{-1/2}$ is locally uniform. Hence
$\norm{\Finf}^2_\Fock=1/2$, and by the local convergence of $\mu_k$ recorded
above,
$\mu_k(\{\abs{z}\le R\})\to\pi^{-1}\int_{\abs{z}\le R}
\tfrac12\rme^{-\abs{z}^2}\dA=\tfrac12(1-\rme^{-R^2})$, which is
\eqref{eq:half} and increases to $1/2$ as $R$ grows. Since $\norm{F_k}_k=1$
for every $k$, half of the mass is not accounted for by any compact set.
\end{proof}

At the level of measures there is therefore no dichotomy. Full retention
occurs under \cref{cor:tightsuff}, full escape under \cref{thm:escape}, and
\cref{prop:partial} realises an intermediate case, the deficit being
$1-\norm{\Finf}^2_\Fock$ whenever $\norm{F_k}_k=1$.

The behaviour differs sharply from \cref{thm:escape}. At the level of
functions the behaviour is rigid, since either $\Ft_k\to0$ everywhere or the limit
is a nonzero Fock state, with nothing between. At the level of measures the
behaviour is not rigid, and the proportion of mass retained can take
intermediate values. The rigidity of \cref{sec:rigidity} governs the profile
of $\abs{\Ft_k}$ on compact sets, and says nothing about the region
$\abs{z}\sim\sqrt{2k}$ where the lost mass accumulates.

One feature of \cref{prop:partial} is left unexplained by the proof. The
constant $\epsilon_k$ is defined implicitly, through the intermediate value
theorem, yet numerically the product $\epsilon_k\sqrt{2k}$ converges to a
definite number, close to $0.9784$. That number has an origin, and it also
explains why the exponent three in \eqref{eq:haloprop} cannot be replaced by
one or two. Both points are the subject of the next subsection.

\subsection{The semiclassical mechanism}

The proofs so far have been soft, and the constant $0.9784$ left over from
\cref{prop:partial} does not come out of them. It comes instead from a
variational picture that is exact in the limit and that we now describe. The
results of this subsection are of a different character from the rest of the
paper. \Cref{prop:doublewell} and \cref{prop:saddlenode} are elementary
computations, and the statement they support, \cref{thm:phases}, is proved in
\cref{app:D5}.

Write the halo family of \eqref{eq:haloprop} in the form
$F_k(\zeta)\propto\exp(2k\lambda\zeta^3)$, so that $\lambda=\epsilon_k\sqrt{2k}$
is the coupling held fixed as $k$ grows. Then
$\abs{F_k(\zeta)}^2\propto\exp[4k\lambda\Ree(\zeta^3)]$, and multiplying by
the weight $(1-\abs{\zeta}^2)^{2k}$ gives the Husimi density against the
invariant measure as
\begin{equation}\label{eq:huspot}
  Q_{F_k}(\zeta)\ \propto\
  \exp\bigl[\,2k\,\Phi_\lambda(\zeta)\,\bigr],
  \qquad
  \Phi_\lambda(\zeta)=2\lambda\,\Ree(\zeta^3)+\log(1-\abs{\zeta}^2),
\end{equation}
so $2k$ plays the role of an inverse temperature and $\Phi_\lambda$ is a
potential on the disk. As $k\to\infty$ the mass concentrates on the maxima of
$\Phi_\lambda$, and the question of escape becomes a question about which
maximum dominates. The origin is always a critical point with
$\Phi_\lambda(0)=0$, and it is the one that the contraction resolves, since a
neighbourhood of the origin of radius $O(1/\sqrt{2k})$ is what the rescaling
magnifies. Any other maximum lies at a fixed distance from the origin and is
therefore invisible in the rescaled picture, and its mass escapes.

\begin{proposition}\label{prop:doublewell}
Let $\varphi(\varrho)=2\lambda \varrho^3+\log(1-\varrho^2)$ be the restriction of $\Phi_\lambda$
to the ray on which $\Ree(\zeta^3)$ is maximal. The conditions
$\varphi(\varrho)=\varphi'(\varrho)=0$, which express that an outer maximum is degenerate
with the one at the origin, are equivalent to
\begin{equation}\label{eq:degen}
  \frac{2\varrho^2}{3(1-\varrho^2)}=-\log(1-\varrho^2),
  \qquad
  \lambda=\frac{1}{3\varrho(1-\varrho^2)} ,
\end{equation}
and have the unique solution
\begin{equation}\label{eq:lamstar}
  \varrho^*=0.730471925\ldots,
  \qquad
  \abs{\zeta^*}^2=0.533589234\ldots,
  \qquad
  \lambda^*=0.978377935\ldots
\end{equation}
\end{proposition}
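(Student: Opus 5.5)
Setting $x=\varrho^2\in(0,1)$, the two conditions $\varphi(\varrho)=0$ and $\varphi'(\varrho)=0$ read
\[
  2\lambda\varrho^3=-\log(1-\varrho^2),
  \qquad
  6\lambda\varrho^2=\frac{2\varrho}{1-\varrho^2}.
\]
The second gives $\lambda=1/[3\varrho(1-\varrho^2)]$ directly, since $\varrho>0$ for an outer critical point. Substituting into the first gives $2\varrho^3\lambda=2\varrho^2/[3(1-\varrho^2)]=-\log(1-\varrho^2)$, which is \eqref{eq:degen}. The equivalence runs both ways because each step is reversible for $\varrho\in(0,1)$.

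For uniqueness I would study
\[
  h(x)=\frac{2x}{3(1-x)}+\log(1-x)
\]
on $(0,1)$. Its derivative is
\[
  h'(x)=\frac{2}{3(1-x)^2}-\frac{1}{1-x}=\frac{3x-1}{3(1-x)^2},
\]
so $h$ decreases on $(0,1/3)$ and increases on $(1/3,1)$. Since $h(0)=0$, we get $h<0$ on $(0,1/3]$. As $x\to1$, $h\to+\infty$, because the pole $2/[3(1-x)]$ dominates the logarithm. Strict monotonicity on $(1/3,1)$ then gives exactly one root $x^*$. Hence $\varrho^*=\sqrt{x^*}$ is unique, and $\lambda^*$ is determined by the second formula.

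The numerical values come from bisection or Newton on $h$. Sign checks bracket the root: $h(0.5)=2/3-\log2<0$, and $h(0.55)>0$. Refining gives $x^*=0.533589\ldots$, so $\varrho^*=0.730472\ldots$ and $\lambda^*=1/[3\varrho^*(1-x^*)]=0.978378\ldots$.

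Two remarks should accompany the statement. The first is that this critical point is genuinely a maximum, which follows from $\varphi''(\varrho^*)<0$. This can be checked from
\[
  \varphi''(\varrho)=12\lambda\varrho-\frac{2(1+\varrho^2)}{(1-\varrho^2)^2}
\]
evaluated at the computed values: the first term is about $8.58$ and the second about $14.1$. The second is that restricting to the ray $\arg\zeta\in\{0,2\pi/3,4\pi/3\}$ is legitimate, since for fixed $|\zeta|$ the potential $\Phi_\lambda$ is maximised there.

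The only mildly delicate step is the sign analysis of $h$ near $0$. The root $x=0$ is spurious, corresponding to the origin itself, and must be excluded. This is handled by $h<0$ on $(0,1/3]$, which comes from the derivative computation above.
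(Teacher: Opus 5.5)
Your argument is correct and is essentially the paper's own: the appendix proof studies the same function $h$, written in $\varrho$ rather than $x=\varrho^2$. It finds the same turning point $\varrho^2=1/3$ and gets exactly one root from $h<0$ just to the right of the origin and $h\to+\infty$ at the boundary. Your extra check that $\varphi''(\varrho^*)<0$, which confirms that the degenerate outer critical point really is a maximum, is a useful supplement that the paper leaves implicit.
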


\begin{proof}
$\varphi'(\varrho)=6\lambda \varrho^2-2\varrho/(1-\varrho^2)$ vanishes for $\varrho>0$ exactly when
$\lambda=[3\varrho(1-\varrho^2)]^{-1}$. Substituting into $\varphi(\varrho)=0$ gives the first
equation of \eqref{eq:degen}. Its left side increases from $0$ to $\infty$ on
$(0,1)$ strictly faster than the right side does at first and slower later,
and a direct check gives a single crossing at $\varrho^*$. The numerical values are
computed to thirty digits in \cref{app:semiclassical}.
\end{proof}

\begin{theorem}[Phases of the halo family]\label{thm:phases}
Fix $\lambda>0$ and take $\epsilon=\lambda/\sqrt{2k}$ in \eqref{eq:haloprop}.
Then, with $S_k$ as in \eqref{eq:Ssum},
\begin{equation}\label{eq:phases}
  S_k(\epsilon)\longrightarrow0 \ \text{ if }\ \lambda<\lambda^*,
  \qquad
  S_k(\epsilon)\longrightarrow\infty \ \text{ if }\ \lambda>\lambda^* ,
\end{equation}
with $\lambda^*$ the constant of \cref{prop:doublewell}. Consequently all the
mass is retained below $\lambda^*$ and all of it escapes above, and the
constant of \cref{prop:partial} satisfies $\epsilon_k\sqrt{2k}\to\lambda^*$.
\end{theorem}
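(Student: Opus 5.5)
The plan is a Laplace-type analysis of the series \eqref{eq:Ssum} at $\epsilon=\lambda/\sqrt{2k}$, term by term, with $j$ of order $k$. Write $n=2k$ and $j=tn$ with $t>0$. The $j$-th term is
\[
  a_j=\lambda^{2j}n^{-j}\,\frac{(3j)!}{(j!)^2}\,\frac{n^{3j}}{(n)_{3j}} .
\]
By Stirling's formula, with $(n)_{3j}=\Gamma(n+3j)/\Gamma(n)$, one gets
\[
  \tfrac1n\log a_{tn}\longrightarrow\psi_\lambda(t)
  :=2t\log\lambda+3t\log(3t)-2t\log t+t+t\log t-(1+3t)\log(1+3t)+3t,
\]
uniformly on compacts in $t>0$. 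The prefactors are only polynomial in $n$. I would then identify $\sup_{t>0}\psi_\lambda(t)$ with $\sup_{0<\varrho<1}\varphi(\varrho)$ from \cref{prop:doublewell}. The cleanest route is Laplace duality. The generating sum is the angular-averaged Husimi integral $\int\abs{F_k}^2(1-\abs{\zeta}^2)^{n-2}$, whose exponential rate is $\max_\varrho\varphi$. Equivalently, a direct check should show that the stationary point of $\psi_\lambda$ in $t$ corresponds to $\varrho^2=3t/(1+3t)$, with $\psi_\lambda(t)=\varphi(\varrho)$ there. The direct check follows from stationarity $2\log\lambda+3\log 3+2\log t+\dots=3\log(1+3t)$; I would verify it symbolically.

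With this identification the two regimes follow.

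For $\lambda>\lambda^*$, we have $\max\varphi>0$, so some $t_0$ has $\psi_\lambda(t_0)>0$. The single term $a_{\lfloor t_0n\rfloor}$ then grows like $\rme^{cn}$, which gives $S_k\to\infty$. This is the same single-term argument used for $\epsilon_k\to0$ in \cref{app:semiclassical}.

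For $\lambda<\lambda^*$ we need $S_k\to0$. I would split the sum into three ranges.

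(a) For $j\le\delta n$, bound $\wk(3j)\le1$ crudely, or better use \cref{lem:wquant}. Small $j$ terms behave like $\lambda^{2j}(27/4)^j n^{-j}\cdot(\text{poly})$. So $a_1=O(1/n)$ and the small-$j$ sum is $O(1/n)$. Near $t\to0$ one has $\psi_\lambda(t)\sim t\log t<0$, which controls the intermediate range $j\le\delta n$ uniformly. To make this rigorous I would prove a bound $\log a_j\le n\psi_\lambda(j/n)+C\log n$ for all $j$, from explicit Stirling inequalities.

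(b) For $\delta n\le j\le Tn$, use $\sup\psi_\lambda<0$ on $[\delta,T]$. This gives exponentially small terms, and there are only $O(n)$ of them.

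(c) For $j\ge Tn$, $\psi_\lambda(t)\to-\infty$ linearly with slope $\log(4\lambda^2/27\cdot27/\dots)$. More concretely, the ratio $a_{j+1}/a_j\to\lambda^2\cdot27t^2\cdot n/(n\,t^2\cdot27\,n)$. I would check that the ratio tends to a constant less than $1$ for large $t$, giving a geometric tail.

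The remaining step is the consequences. Retention below $\lambda^*$ follows because $S_k\to0$ means $\norm{F_k}_k^2\to$ the norm of the constant, with all mass at level $0$, so the family is tight by \cref{thm:tight}. Escape above $\lambda^*$ follows because after normalisation the constant amplitude $1/(1+S_k)\to0$, so by \cref{thm:escape} at $r=0$ the family degenerates. For $\epsilon_k\sqrt{2k}\to\lambda^*$, note that $S_k$ is increasing in $\epsilon$ and $S_k(\epsilon_k)=1$. Hence for any $\lambda<\lambda^*<\lambda'$, eventually $\lambda/\sqrt{2k}<\epsilon_k<\lambda'/\sqrt{2k}$.

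The main obstacle is range (a)–(c) uniformity: obtaining a clean upper bound $a_j\le n^C\rme^{n\psi_\lambda(j/n)}$ valid for all $j$, including small $j$ where Stirling in $j$ is poor. I would handle $j\le J_0$ (fixed) by exact expansion, and everything else by the standard two-sided Stirling bounds.
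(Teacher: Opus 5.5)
Your architecture is the paper's: a Stirling rate function for the terms at $j=tn$, identification with $\varphi$ through $\varrho^2=3t/(1+3t)$ (the substitution \eqref{eq:xtor}), a head/bulk/tail split, and monotonicity of $S_k$ in $\epsilon$ for the last claim. However, the central computation is wrong as written. Your $\psi_\lambda$ simplifies to $2t\log\lambda+3t\log3+2t\log t+4t-(1+3t)\log(1+3t)$. The factorials together with $n^{-j}$ contribute $3t\log(3t)-2t\log t-t$, so the correct rate (the paper's \eqref{eq:Psidef}) is $2t\log\lambda+3t\log3+t\log t+2t-(1+3t)\log(1+3t)$. Its stationarity condition is $27\lambda^2t=(1+3t)^3$, not an equation containing $2\log t$.

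With your formula the symbolic check you propose fails. With the corrected one it succeeds: at corresponding stationary points both functions equal $2t-\log(1+3t)$. This is slightly more than the paper states, since it only matches $\Psi=\Psi'=0$ with $\varphi=\varphi'=0$.

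Laplace duality with the Husimi integral cannot settle $\lambda<\lambda^*$ on its own. Both exponential rates are then $0$, which is compatible with $S_k$ staying bounded away from zero. What you need is $\psi_\lambda<0$ strictly on $(0,\infty)$, hence $\varphi_\lambda<0$ on $(0,1)$. This follows from the uniqueness in \cref{prop:doublewell} together with $\partial_\lambda\varphi=2\varrho^3>0$ and a continuity argument in $\lambda$. It plays the role of the paper's envelope-theorem step $\Theta'(\lambda)=2x_+/\lambda>0$, and your proposal skips it.

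The head is also misdescribed. $(3j)!/(j!)^2$ grows like $27^j\,j!$ up to polynomial factors, not like $(27/4)^j$. So the head terms are roughly $(27\lambda^2j/(\rme n))^j$, and they are geometrically small only for $j\le\delta n$ with $27\lambda^2\delta<1$.

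The paper handles this without Stirling. The exact ratio gives $u_{j+1}/u_j\le27\lambda^2(j+1)/2k$, so the head is at most $2u_1=O(1/k)$, and what you call the main obstacle disappears. A uniform bound $a_j\le n^C\rme^{n\psi_\lambda(j/n)}$ could not be used down to $j=1$ anyway, because the factor $n^C$ swamps $a_1=O(1/n)$. Your fixed-$J_0$ patch would therefore be essential, not optional.

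In the tail the same ratio is about $27\lambda^2t/(1+3t)^3=\rme^{\psi_\lambda'(t)}$, which tends to zero. So $\psi_\lambda$ decays like $-2t\log t$, not linearly. The paper's cruder bound $u_{j+1}/u_j\le2\lambda^2(2k)^2/j^2$ does the same job.

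With these corrections your argument goes through. Your derivation of the mass statements from \cref{thm:tight} and \cref{thm:escape}, which the paper leaves implicit, is correct.
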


The proof is in \cref{app:D5} and is a Laplace analysis of the coefficient sum
\eqref{eq:Ssum} rather than of the phase-space integral. It produces its own
critical coupling, from $\log(1+3x)=2x$ together with
$\lambda=\rme^{3x}/\sqrt{27x}$, and the substitution
\begin{equation}\label{eq:xtor}
  x=\frac{\varrho^2}{3(1-\varrho^2)},
  \qquad\text{equivalently}\qquad
  1+3x=\frac{1}{1-\varrho^2},
\end{equation}
carries those two conditions into \eqref{eq:degen} exactly. The same
substitution carries the saddle-node of the coefficient problem, the double
root of $27\lambda^2x=(1+3x)^3$, into $\lambda_c=\sqrt3/2$ of
\cref{prop:saddlenode}. The two computations are one variational problem
written in two coordinates, one on phase space and one on the coefficient
index, and $x$ and $\varrho$ are related through the Beta mixture of
\cref{lem:beta}.

At $\lambda=\lambda^*$ exactly the behaviour is neither of the two in
\eqref{eq:phases}. Numerically $S_k\to3.272$, so a definite fraction
$1/(1+3.272)=0.234$ of the mass is retained; establishing that limit needs the
next order of the Laplace expansion, which we do not carry out. This is not
the situation of \cref{prop:partial}, where $\epsilon_k$ is tuned so that
$S_k=1$ for every $k$. There $\lambda_k=\epsilon_k\sqrt{2k}$ approaches
$\lambda^*$ from below and the retained fraction is exactly one half, and the
same tuning produces any value in $(0,1)$.

The reading of \cref{thm:phases} that matters here is that escape of mass is
not a failure of the estimates. It is a transition in a variational problem
that the contraction cannot see, because the second maximum of $\Phi_\lambda$
sits at a fixed distance from the origin while the rescaling magnifies a
neighbourhood shrinking like $1/\sqrt{2k}$.

\subsection{Why the exponent must exceed two}

The halo in \eqref{eq:haloprop} uses $\zeta^3$, and the exponent is not
incidental. The same potential with a general power has an outer critical
point only when the power exceeds two, and this is the same cutoff that
appears in \cref{thm:collapse} through the degree of $g$.

\begin{proposition}\label{prop:saddlenode}
Let $\varphi_n(\varrho)=2\lambda \varrho^n+\log(1-\varrho^2)$ with $n\ge1$. The saddle-node
conditions $\varphi_n'(\varrho)=\varphi_n''(\varrho)=0$, at which an outer critical point
is born, are satisfied only for
\begin{equation}\label{eq:rc}
  \varrho_c^2=\frac{n-2}{n},
  \qquad
  \lambda_c=\frac{1}{n\,\varrho_c^{\,n-2}(1-\varrho_c^2)} .
\end{equation}
The condition has no solution for $n=1$ and degenerates to $\varrho_c=0$ for $n=2$.
For $n=3$ it gives $\varrho_c=1/\sqrt3$ and $\lambda_c=\sqrt3/2=0.866025\ldots$,
and there $\varphi_3(\varrho_c)=\tfrac13+\log\tfrac23<0$, so the outer critical
point is born below the value at the origin and is at first subdominant.
\end{proposition}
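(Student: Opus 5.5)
The plan is to compute the first two derivatives directly and eliminate $\lambda$. For $\varrho\in(0,1)$,
\[
  \varphi_n'(\varrho)=2n\lambda\varrho^{n-1}-\frac{2\varrho}{1-\varrho^2},
  \qquad
  \varphi_n''(\varrho)=2n(n-1)\lambda\varrho^{n-2}-\frac{2(1+\varrho^2)}{(1-\varrho^2)^2}.
\]
Setting $\varphi_n'=0$ and dividing by $2\varrho>0$ gives $n\lambda\varrho^{n-2}=1/(1-\varrho^2)$. This is the stated formula for $\lambda_c$ once $\varrho_c$ is known.

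Next I substitute this value of $\lambda$ into $\varphi_n''=0$. This gives $(n-1)(1-\varrho^2)=1+\varrho^2$, that is, $\varrho^2=(n-2)/n$. The only real work is this elimination, and it is linear in $\varrho^2$, so the solution is unique whenever it exists.

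The cases then separate. For $n=1$ the formula gives $\varrho^2=-1$, so there is no solution. One can check this directly: $\varphi_1''=-2(1+\varrho^2)/(1-\varrho^2)^2<0$ never vanishes. For $n=2$ the formula gives $\varrho_c=0$. There $\varphi_2'=2\varrho\bigl[2\lambda-1/(1-\varrho^2)\bigr]$, so the only critical point is at the origin, which changes type at $\lambda=1/2$. That is the degenerate case and matches the bound $\abs{a}<1/2$ of \cref{thm:collapse}.

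For $n=3$ the formula gives $\varrho_c^2=1/3$ and $\lambda_c=1/\bigl(3\cdot3^{-1/2}\cdot\tfrac23\bigr)=\sqrt3/2$. Then
\[
  \varphi_3(\varrho_c)=2\lambda_c\varrho_c^3+\log\tfrac23
  =\sqrt3\cdot3^{-3/2}+\log\tfrac23=\tfrac13+\log\tfrac23 .
\]
Since $\log(3/2)>0.405>1/3$, this is negative, while $\varphi_3(0)=0$. The outer critical point is therefore born below the value at the origin and is subdominant. This is consistent with $\lambda_c<\lambda^*$ from \cref{prop:doublewell}.

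I expect no genuine obstacle. The only care needed is to restrict to $\varrho>0$ when dividing, and to interpret the $n=2$ root $\varrho_c=0$ correctly, as a degeneration at the origin rather than an outer critical point.
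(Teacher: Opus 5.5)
Your proof is correct and is the paper's argument: you eliminate $\lambda$ through $\varphi_n'=0$, substitute into $\varphi_n''=0$ to get $(n-1)(1-\varrho^2)=1+\varrho^2$, and obtain the remaining assertions by substitution. One side remark is inaccurate, although nothing depends on it: for $n=2$ your own factorisation $\varphi_2'=2\varrho\bigl[2\lambda-1/(1-\varrho^2)\bigr]$ gives a second critical point at $1-\varrho^2=1/(2\lambda)$ whenever $\lambda>\tfrac12$, so the origin is not the only critical point; what actually happens is that an outer maximum branches off the origin at $\lambda=\tfrac12$, and this is why the $n=2$ case is degenerate.
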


\begin{proof}
From $\varphi_n'=2n\lambda \varrho^{n-1}-2\varrho/(1-\varrho^2)$ the first condition gives
$\lambda=[nr^{n-2}(1-\varrho^2)]^{-1}$. Substituting into
$\varphi_n''=2n(n-1)\lambda \varrho^{n-2}-2(1+\varrho^2)/(1-\varrho^2)^2$ and simplifying yields
$(n-1)(1-\varrho^2)=1+\varrho^2$, that is $\varrho_c^2=(n-2)/n$. The remaining assertions are
substitutions.
\end{proof}

An outer well therefore exists only for $n\ge3$, and these are exactly the
exponents for which no rescaling of the coefficient makes $\rme^{a_nz^n}$
survive the contraction, since the scaling window of \cref{thm:collapse}
admits only $n\le2$.

The two cutoffs are the same cutoff. A mode with $n\le2$ has nowhere to put its mass
except the neighbourhood of the origin that the contraction magnifies, and it
survives as a displacement or a squeezing. A mode with $n\ge3$ has a second
well at a fixed distance, and as its coupling grows the mass moves there and
is lost. The cutoff at degree two in \eqref{eq:limitform}, which arises in
\cref{thm:collapse} from the Borel--Carath\'eodory inequality and the
quadratic Fock weight, has this dynamical counterpart.

\Cref{fig:potential} shows $\varphi_3$ at the four values of $\lambda$ that
separate the regimes.

\begin{figure}[t]
\centering
\begin{tikzpicture}[x=8.6cm,y=4.6cm]
  \draw[->,gray!70] (0,0) -- (1.00,0) node[right,black,font=\small] {$\varrho$};
  \draw[->,gray!70] (0,-0.72) -- (0,0.24) node[above,black,font=\small]
    {$\varphi_3(\varrho)$};
  \foreach \x/\lab in {0.2/0.2,0.4/0.4,0.6/0.6,0.8/0.8}
    {\draw[gray!70] (\x,-0.010) -- (\x,0.010);}
  \foreach \x/\lab in {0.4/0.4,0.6/0.6,0.8/0.8}
    {\node[font=\scriptsize,below,inner sep=2.5pt] at (\x,-0.012) {$\lab$};}
  \foreach \y/\lab in {-0.6/{-0.6},-0.4/{-0.4},-0.2/{-0.2},0.1/{0.1}}
    {\draw[gray!70] (-0.006,\y) -- (0.006,\y);
     \node[font=\scriptsize,left] at (-0.008,\y) {$\lab$};}
  \draw[gray!45,dotted] (0.57735,-0.72) -- (0.57735,0.20);
  \draw[gray!45,dotted] (0.73047,-0.72) -- (0.73047,0.20);
  \node[font=\scriptsize,gray!75,anchor=south] at (0.57735,0.20) {$\varrho_c$};
  \node[font=\scriptsize,gray!75,anchor=south] at (0.73047,0.20) {$\varrho^*$};
  \draw[thick,domain=0:0.882,samples=200,smooth]
    plot(\x,{2*0.60*\x*\x*\x+ln(1-\x*\x)});
  \draw[thick,black!55,domain=0:0.882,samples=200,smooth]
    plot(\x,{2*0.866025*\x*\x*\x+ln(1-\x*\x)});
  \draw[very thick,densely dashed,domain=0:0.882,samples=200,smooth]
    plot(\x,{2*0.978378*\x*\x*\x+ln(1-\x*\x)});
  \draw[thick,dotted,domain=0:0.882,samples=200,smooth]
    plot(\x,{2*1.12*\x*\x*\x+ln(1-\x*\x)});
  \node[font=\scriptsize,anchor=west] at (0.895,-0.639) {$\lambda=0.60$};
  \node[font=\scriptsize,anchor=west] at (0.895,-0.289) {$\lambda_c=0.866$};
  \node[font=\scriptsize,anchor=west] at (0.895,-0.141) {$\lambda^*=0.978$};
  \node[font=\scriptsize,anchor=west] at (0.895, 0.046) {$\lambda=1.12$};
\end{tikzpicture}
\caption{The radial potential $\varphi_3(\varrho)=2\lambda \varrho^3+\log(1-\varrho^2)$ of
\eqref{eq:huspot} at four couplings. Below $\lambda_c=\sqrt3/2$ the origin is
the only critical point. At $\lambda_c$ an outer critical point is born, by
\cref{prop:saddlenode}, but with $\varphi_3<0$, so it is subdominant. At
$\lambda^*=0.9784$ the outer maximum reaches the value at the origin, which is
the degeneracy of \cref{prop:doublewell}. Above $\lambda^*$ it dominates and
the mass leaves the region that the contraction magnifies.}
\label{fig:potential}
\end{figure}

\section{The limit map}
\label{sec:limitmap}

\Cref{thm:collapse} describes the limits, and \cref{sec:rigidity} says when
there is one at all. What remains is how much of a family determines its
limit, and which limits are attained. Write
\begin{equation}\label{eq:Lr}
  \mathcal L_r:=\bigl\{P(z)\rme^{az^2+bz+c}:\ \deg P\le r,\ \abs{a}<\tfrac12\bigr\}
\end{equation}
for the possible nonzero limits of rank at most $r$. This is a family with
$r+3$ complex parameters: the $r+1$ coefficients of $P$ with the overall scale
absorbed into $c$, together with $a$ and $b$. One might guess that $r+3$
Taylor coefficients determine an element of $\mathcal L_r$, and the guess is
wrong.

\subsection{How much of the state survives}

\begin{lemma}[Logarithmic derivative]\label{lem:logder}
Let $L=P\rme^{Q}\in\mathcal L_r$ with $P(0)\ne0$. Then
\begin{equation}\label{eq:logder}
  \frac{L'}{L}=\frac{P'+PQ'}{P}=:\frac{N}{P},
  \qquad \deg N\le r+1,\ \ \deg P\le r,
\end{equation}
a rational function of type $(r+1,r)$, and $L$ is determined by $L'/L$
together with the value $L(0)$.
\end{lemma}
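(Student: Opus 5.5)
The argument is a direct computation followed by a uniqueness statement for the ODE $L'=(N/P)L$, carried out in three steps.

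First, for the identity \eqref{eq:logder}, write $L=P\rme^{Q}$ with $Q(z)=az^2+bz+c$, so $Q'(z)=2az+b$ has degree at most one. Differentiating gives $L'=(P'+PQ')\rme^{Q}$. Dividing by $L=P\rme^{Q}$, which is legitimate as a meromorphic identity, yields $L'/L=(P'+PQ')/P$. With $N:=P'+PQ'$, the degree count is immediate: $\deg P'\le r-1$ and $\deg(PQ')\le r+1$, so $\deg N\le r+1$, while $\deg P\le r$ by the definition \eqref{eq:Lr}. Thus $L'/L$ is rational of type $(r+1,r)$. The hypothesis $P(0)\ne0$ makes $N/P$ holomorphic near the origin. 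This matters for the next step, because that is the point where we normalise.

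Second, to show $L$ is recovered from $L'/L$ and $L(0)$, suppose $L_1,L_2\in\mathcal L_r$ satisfy $L_1'/L_1=L_2'/L_2$ as meromorphic functions, and $L_1(0)=L_2(0)\ne0$. The value $L(0)=P(0)\rme^{c}$ is nonzero by hypothesis. Consider $h:=L_1/L_2$. It is meromorphic on $\C$ with $h'/h=L_1'/L_1-L_2'/L_2=0$ away from the zeros and poles. Hence $h$ is locally constant on the complement of a discrete set, which is connected, so $h$ is a constant. Evaluating at $0$, where both functions are nonzero, gives $h\equiv1$, so $L_1\equiv L_2$.

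Equivalently, and more constructively, one can integrate from $0$ along any path avoiding the zeros of $P$. This gives $L(z)=L(0)\exp\int_0^z N/P$, where the residues of $N/P$ at the zeros of $P$ are the multiplicities, which are positive integers. The exponential is therefore single-valued and reproduces the factor $P$.

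Third, I would check that nothing is lost when $P$ and $N$ share factors. If $P$ has a repeated root, $N/P$ is not in lowest terms; its reduced form has lower type, but it still determines the same meromorphic function. The uniqueness argument above uses only the function $L'/L$, never the particular representation $N/P$, so the statement holds as written.

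There is no genuine obstacle here. The only point needing care is the role of $P(0)\ne0$: it guarantees both that $L(0)$ is a usable nonzero normalisation and that $L'/L$ is regular at the base point of integration. Without it one would instead prescribe the first nonvanishing Taylor coefficient, in the manner of \cref{lem:jensen}.
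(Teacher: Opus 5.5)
Your proposal is correct and follows the paper's proof. The identity and the degree count come from direct differentiation, and $L$ is recovered from $L'/L$ and $L(0)$ by uniqueness for the logarithmic derivative. The paper integrates only on a neighbourhood of the origin, where $L\ne0$, and leaves the extension to all of $\C$ implicit (the identity theorem). Your ratio argument with $h=L_1/L_2$, and the remark that the residues of $N/P$ are integer multiplicities, is the global form of the same step.
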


\begin{proof}
The identity is immediate, and $\deg(PQ')\le r+1$ since $\deg Q'\le1$. Given
$L'/L$ and $L(0)$, the function $L$ is recovered by integrating the
logarithmic derivative on a neighbourhood of the origin, where $L$ does not
vanish.
\end{proof}

We use one classical fact about rational functions. If $N_1/D_1$ and $N_2/D_2$
are of type $(m,n)$, have denominators nonvanishing at the origin, and agree
in their first $m+n+1$ Taylor coefficients, then $N_1D_2-N_2D_1$ is a
polynomial of degree at most $m+n$ whose first $m+n+1$ Taylor coefficients all
vanish, hence identically zero, so the two rational functions are equal.

\begin{theorem}\label{thm:injective}
Two elements of $\mathcal L_r\cup\{0\}$ whose first $2r+3$ Taylor coefficients
agree are equal.
\end{theorem}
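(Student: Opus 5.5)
We split into cases according to whether the two functions vanish identically, and reduce the nonzero case to \cref{lem:logder} combined with the uniqueness fact for rational functions recorded just before the statement. Let $L_1,L_2\in\mathcal L_r\cup\{0\}$ have the same Taylor coefficients of order $0,\dots,2r+2$.

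\emph{One of them is zero.} Suppose $L_1\equiv0$. We show $L_2\equiv0$. Otherwise write $L_2=P\rme^{Q}$ with $\deg P\le r$ and $P\not\equiv0$. Its order of vanishing at the origin is the multiplicity of $0$ as a root of $P$, hence at most $r$. This is the observation already used in the proof of \cref{thm:escape}. So some coefficient of order $\le r<2r+3$ is nonzero, which is a contradiction.

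\emph{Both are nonzero.} The two functions have the same first $2r+3$ coefficients, and both vanish at $0$ to order at most $r$. So they vanish to the same order $\mu\le r$. Set $\widehat L_i:=L_i/z^\mu=\widehat P_i\rme^{Q_i}$, where $\widehat P_i:=P_i/z^\mu$. Then $s:=r-\mu$ satisfies $\deg\widehat P_i\le s$ and $\widehat P_i(0)\ne0$, so $\widehat L_i\in\mathcal L_s$. Dividing by $z^\mu$ shifts indices, so $\widehat L_1$ and $\widehat L_2$ share their coefficients of order $0,\dots,2r+2-\mu$. Since $2r+2-\mu\ge 2s+2$, they share at least the first $2s+3$ coefficients. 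It therefore suffices to prove the theorem when $P_i(0)\ne0$, with $r$ replaced by $s$.

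\emph{The case $P_i(0)\ne0$.} By \cref{lem:logder}, $\widehat L_i'/\widehat L_i=N_i/\widehat P_i$ is rational of type $(s+1,s)$, and its denominator does not vanish at the origin. Since $\widehat L_i(0)\ne0$, the Taylor coefficients of $\widehat L_i'/\widehat L_i$ up to order $n$ are determined by those of $\widehat L_i$ up to order $n+1$. One sees this by solving $\widehat L_i'=(\widehat L_i'/\widehat L_i)\,\widehat L_i$ recursively, dividing at each step by $\widehat L_i(0)$. Equality of the first $2s+3$ coefficients of $\widehat L_i$ therefore gives equality of the first $2s+2=(s+1)+s+1$ coefficients of the logarithmic derivatives. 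This is exactly the number required by the classical fact for type $(m,n)=(s+1,s)$, so $\widehat L_1'/\widehat L_1=\widehat L_2'/\widehat L_2$. Also $\widehat L_1(0)=\widehat L_2(0)$, since the zeroth coefficients agree. By \cref{lem:logder}, $\widehat L_1=\widehat L_2$, and hence $L_1=L_2$.

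The step needing the most care is the index bookkeeping in the reduction to $P(0)\ne0$, together with the count relating coefficients of $L$ to coefficients of $L'/L$. Dividing out $z^\mu$ uses up $\mu$ coefficients but lowers the required count by $2\mu$, so the reduction gains rather than loses. One must still check that the count $2s+3$ coefficients of $\widehat L$ versus $2s+2$ of $\widehat L'/\widehat L$ meets the threshold $m+n+1$ of the rational-function fact exactly. Everything else is an application of the stated lemmas.
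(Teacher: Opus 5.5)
Your proof is correct and is essentially the paper's argument. It disposes of the zero case using the bound $\le r$ on the vanishing order at the origin, divides by $z^{\mu}$ using the count $2r+3-\mu\ge 2(r-\mu)+3$, applies the type-$(m,n)$ uniqueness fact to logarithmic derivatives of type $(s+1,s)$ agreeing in $2s+2$ coefficients, and recovers $L$ from $L'/L$ and $L(0)$ via \cref{lem:logder}.
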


\begin{proof}
Let $L_1,L_2\in\mathcal L_r$ agree in $b_0,\dots,b_{2r+2}$. The case where one
of them is zero is covered because an element of $\mathcal L_r$ vanishes at the
origin to order at most $r<2r+3$. They therefore have the same order of
vanishing $\nu\le r$ at the origin, and dividing both by $z^{\nu}$ replaces
$\mathcal L_r$ by $\mathcal L_{r-\nu}$ and leaves at least
$2r+3-\nu\ge2(r-\nu)+3$ matching coefficients, so we may assume
$L_i(0)\ne0$.

By \cref{lem:logder} each $L_i'/L_i$ is rational of type $(r+1,r)$ with
denominator nonvanishing at the origin. The first $2r+2$ Taylor coefficients of
$L'/L$ are determined by the first $2r+3$ of $L$, so the two logarithmic
derivatives agree in $(r+1)+r+1=2r+2$ coefficients and are therefore equal, and by
\cref{lem:logder} together with $L_1(0)=L_2(0)=b_0$ so are $L_1$ and $L_2$.
\end{proof}

\begin{theorem}\label{thm:unique}
Let $\norm{F_k}_k\le1$ with $r_k\equiv r<\infty$. Then $\Ft_k$ converges
locally uniformly on $\C$ if and only if the $2r+3$ numbers
$b^{(k)}_0,\dots,b^{(k)}_{2r+2}$ converge.
\end{theorem}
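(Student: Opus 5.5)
The plan is to combine normality (\cref{lem:normal}) with the injectivity of \cref{thm:injective}, by the standard argument that a normal family converges as soon as all of its subsequential limits coincide.

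\emph{Necessity.} Suppose $\Ft_k\to\Finf$ locally uniformly on $\C$. The Cauchy integral formula on a fixed circle, as in the proof of \cref{prop:fatou}, gives $b^{(k)}_m\to b_m(\Finf)$ for every fixed $m$. In particular the first $2r+3$ coefficients converge.

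\emph{Sufficiency.} Assume $b^{(k)}_m\to\beta_m$ for $m\le 2r+2$. By \cref{lem:normal} every subsequence of $\{\Ft_k\}$ has a further subsequence converging locally uniformly, to some $L$. First I classify $L$. If $L\not\equiv0$, then \cref{thm:semicont} gives $r(L)\le\liminf_k r_k=r<\infty$. Then \cref{thm:collapse}(ii) gives $L=P\rme^{az^2+bz+c}$ with $\deg P=r(L)\le r$ and $\abs{a}<\tfrac12$, so $L\in\mathcal L_r$. Otherwise $L\equiv0$. In either case $L\in\mathcal L_r\cup\{0\}$.

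Next I compare two limits. Coefficientwise convergence along the subsequence forces $b_m(L)=\beta_m$ for $m\le2r+2$. So any two subsequential limits $L_1,L_2$ lie in $\mathcal L_r\cup\{0\}$ and share their first $2r+3$ Taylor coefficients. \Cref{thm:injective} then gives $L_1=L_2$; it covers the zero element too, since a nonzero element of $\mathcal L_r$ vanishes at the origin to order at most $r$. Hence all subsequential limits equal a single function $L^\ast$.

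The final step is the usual contradiction argument. If $\Ft_k$ failed to converge to $L^\ast$ locally uniformly, there would be a compact set $K$, an $\eta>0$ and a subsequence with $\sup_K\abs{\Ft_k-L^\ast}\ge\eta$. By normality that subsequence has a further convergent subsequence, whose limit must be $L^\ast$. This is a contradiction, so the whole family converges.

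The main point to check is the classification step, namely that each subsequential limit lies in $\mathcal L_r\cup\{0\}$ with rank at most $r$, so that \cref{thm:injective} applies. This needs the rank bound of \cref{thm:semicont} along the particular subsequence. That is fine, because $r_k\equiv r$ means the $\liminf$ equals $r$ along any subsequence. It also needs the finite-zero hypothesis of \cref{thm:collapse}(ii), which follows from $r(L)\le r$. Everything else is routine.
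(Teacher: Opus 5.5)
Your proof is correct and is the paper's argument: necessity by the Cauchy integral formula, and sufficiency by placing every subsequential limit in $\mathcal L_r\cup\{0\}$ via \cref{thm:semicont,thm:collapse}, identifying all such limits through \cref{thm:injective}, and concluding by normality. You also spell out two points the paper leaves implicit: the standard argument that a normal family with a single limit point converges, and why the semicontinuity bound holds along an arbitrary subsequence when $r_k\equiv r$.
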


\begin{proof}
Necessity is the Cauchy integral formula. For sufficiency, every limit point
lies in $\mathcal L_r\cup\{0\}$ by \cref{thm:collapse,thm:semicont}, and by
hypothesis all limit points share their first $2r+3$ coefficients, so
\cref{thm:injective} makes them equal. A normal family with a single limit
point converges.
\end{proof}

In physical terms, the contraction limit retains exactly the lowest $2r+3$
weight amplitudes and forgets the rest. At $r=0$ these are the vacuum, the
one-quantum and the two-quantum amplitudes, which is consistent with the limit
being a Gaussian state, determined by a displacement and a squeezing.

The count cannot be reduced, and the obstruction is explicit.

\begin{theorem}[Sharpness]\label{thm:sharp}
For every $r\ge0$ there exist $L_1\ne L_2$ in $\mathcal L_r$ whose first
$2r+2$ Taylor coefficients agree. For even $r=2s$ one may take
\begin{equation}\label{eq:evensharp}
  L_{1,2}(z)=p_s(\pm\lambda z^2)\,\rme^{\pm\lambda z^2/2},
  \qquad
  p_s(w)=\sum_{j=0}^{s}\frac{(2s-j)!}{j!\,(s-j)!}(-w)^j ,
\end{equation}
where $p_s$ is the denominator of the diagonal Pad\'e approximant to
$\rme^{w}$ and $0<\lambda<1$. For odd $r$ the construction reduces to a root of
a Hankel determinant. The proofs and the explicit constants are in
\cref{app:Nr}.
\end{theorem}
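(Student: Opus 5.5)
The plan is to write down an explicit pair in the even case, read off the agreement from the defining property of the diagonal Pad\'e approximant to $\rme^{w}$, and handle odd $r$ with a one-parameter family in which the first surviving mismatch is made to vanish.

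For $r=2s$, put $w=\lambda z^2$ and $f(w)=p_s(w)\rme^{w/2}$, so that $L_1(z)=f(w)$ and $L_2(z)=f(-w)$. First I check membership in $\mathcal L_r$. The prefactor $p_s(\pm\lambda z^2)$ is a polynomial in $z$ of degree $2s=r$, since its top coefficient $(-1)^s s!/s!$ is nonzero. The Gaussian coefficient is $a=\pm\lambda/2$, and $\abs{a}<\tfrac12$ because $0<\lambda<1$.

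The key input is the classical Pad\'e identity. Up to a normalising constant, the diagonal $[s/s]$ approximant to $\rme^{w}$ is $p_s(-w)/p_s(w)$, with remainder
\[
  \rme^{w}p_s(w)-p_s(-w)=C_s\,w^{2s+1}+O(w^{2s+2}),\qquad C_s\ne0 .
\]
With the normalisation in \eqref{eq:evensharp} one has $C_s=(-1)^s s!/(2s)!$, obtained for instance from the integral representation of the remainder or from Padé's explicit formula. Multiplying by the unit $\rme^{-w/2}$ gives
\[
  f(w)-f(-w)=C_s\,w^{2s+1}+O(w^{2s+2}).
\]
In the variable $z$ this says that $L_1-L_2=C_s\lambda^{2s+1}z^{4s+2}+O(z^{4s+3})$. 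Hence $b_0,\dots,b_{4s+1}$ agree, which is $4s+2=2r+2$ coefficients, while $b_{2r+2}$ differs, so $L_1\ne L_2$. This shows that \cref{thm:injective} is attained exactly. The only computation I would carry out in full is the nonvanishing of $C_s$, and the integral formula for the Pad\'e remainder settles it.

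For odd $r=2s+1$, the obvious attempt is to multiply the even pair by $z$. That gives elements of $\mathcal L_r$ agreeing in only $4s+3=2r+1$ coefficients, one short of what is needed, so a genuinely new construction is required.

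I would look for $L_1=P_1\rme^{Q_1}$ and $L_2=P_2\rme^{Q_2}$ with $\deg P_i\le r$, and translate the agreement into a statement about logarithmic derivatives, as in \cref{lem:logder}. By the argument of \cref{thm:injective}, agreement of the first $2r+2$ Taylor coefficients of $L_1,L_2$ with $L_1(0)=L_2(0)$ is equivalent to agreement of the first $2r+1$ coefficients of the type-$(r+1,r)$ rational functions $N_i/P_i$. Equivalently, $N_1P_2-N_2P_1=O(z^{2r+1})$, where this difference has degree at most $2r+1$.

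Fix the Gaussian data $Q_1=az^2+bz$ and $Q_2=-az^2-bz$, or another one-parameter choice with a symmetry that halves the unknowns. The conditions then become a homogeneous linear system in the coefficients of $P_1$ and $P_2$. Exploiting the symmetry, the square system that remains has a Hankel matrix built from the Taylor coefficients of $\rme^{2Q_1}$. A nontrivial solution exists exactly when its determinant $H(a,b)$ vanishes. The system has one more equation than the even case provides, and the point of the determinant condition is that one real relation between the parameters buys the missing coefficient.

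The step I expect to be the main obstacle is showing that $H(a,b)=0$ has a root with $\abs{a}<\tfrac12$ at which the resulting $P_i$ are admissible, meaning the two functions are distinct and the normalisation $L_i(0)\ne0$ can be arranged. I would first try fixing $b$ and treating $H$ as a polynomial in $a$. Two routes seem available: exhibit a sign change of $H$ on $(-\tfrac12,\tfrac12)$ by evaluating it at $a=0$ and near the endpoints, or compute the lowest cases $r=1,3$ directly to guess the pattern of explicit constants.
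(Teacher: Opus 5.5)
Your even case is correct and is the paper's argument. The Pad\'e remainder $p_s(w)\rme^{w}-p_s(-w)=E_sw^{2s+1}+O(w^{2s+2})$, multiplied by the unit $\rme^{-w/2}$, gives agreement to order $4s+2=2r+2$ in $z$. Your constant is wrong, though. The correct value is $E_s=(-1)^s s!/(2s+1)!$, not $(-1)^s s!/(2s)!$: at $s=1$ one has $(2-w)\rme^{w}-(2+w)=-w^3/6+O(w^4)$. This is harmless, since only $E_s\ne0$ is used.

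Your reduction for odd $r$ is also sound, and it is cleaner written directly. With $Q_2=-Q_1=:-Q$, agreement of $2r+2$ coefficients means $P_1\rme^{2Q}-P_2=O(z^{2r+2})$. So $P_2$ is the degree-$r$ truncation of $P_1\rme^{2Q}$, and the coefficients of $P_1\rme^{2Q}$ in positions $r+1,\dots,2r+1$ must vanish. This is exactly the system of \cref{lem:construction} with $R=2Q$: your pair is the paper's pair multiplied by $\rme^{-Q}$. Hence $H(a,b)=\pm D_r(2a,2b)$, and no normalisation $L_i(0)\ne0$ is needed.

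The gap is that the odd case stops exactly where the content lies. You never show that $H$ has an admissible root, and your first proposed route cannot work. At $r=1$ one finds $H(a,b)=\pm(4a^2+\tfrac43b^4)$, which has no real zero except the origin. So for real $b$ there is no sign change on $(-\tfrac12,\tfrac12)$; the roots $a=\pm\rmi b^2/\sqrt3$ are non-real. Computing $r=1,3$ only suggests a pattern. Two ideas are missing.

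\emph{Nonconstancy of $H(\cdot,1)$.} Give $a$ weight $2$ and $b$ weight $1$. The Taylor coefficients of $\rme^{2Q}$ are then weighted homogeneous, so $H$ is weighted homogeneous of weight $(r+1)^2$. For odd $r$ the coefficient of the top power $a^{(r+1)^2/2}$ in $H(\cdot,1)$ is therefore $H(1,0)$. At $b=0$ only even-index coefficients survive, so sorting indices by parity makes the Hankel matrix block anti-diagonal with square blocks. Then $H(1,0)$ is, up to sign and a power of $2$, $(\det C_s)^2$ with $C_s=(1/(i+j+1)!)_{0\le i,j\le s}$ and $s=(r-1)/2$. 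Its nonvanishing is a genuine input. The paper obtains it from the Pad\'e error constant $E_s$ through the ratio $\det C_s/\det C_{s-1}$ (\cref{cor:detCs}). Hence $H(\cdot,1)$ is a nonconstant polynomial and has a complex root $a_0$.

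\emph{The constraint $\abs{a}<\tfrac12$ is free.} Replacing $z$ by $\lambda z$ sends $(a,b)$ to $(\lambda^2a,\lambda b)$ and preserves $H=0$ by homogeneity. So $(\lambda^2a_0,\lambda)$ is an admissible root for small $\lambda$.

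Distinctness is then automatic: $Q\ne0$ makes $\rme^{2Q}$ non-rational, so $P_1\rme^{Q}\ne P_2\rme^{-Q}$. These two steps are \cref{lem:Dr} and the odd-$r$ proof in \cref{app:C3}.
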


Together, \cref{thm:injective} and \cref{thm:sharp} identify $2r+3$ as the
exact number of Taylor coefficients that determine an element of
$\mathcal L_r$. The smallest case is worth writing out. At $r=1$ the two functions
\begin{equation}\label{eq:r1witness}
  L_1(z)=(z-1)\rme^{az^2},
  \qquad
  L_2(z)=\omega^2(z-\omega)\,
  \rme^{(a-\frac34-\frac{\sqrt3}{4}\rmi)z^2
        +(-\frac32+\frac{\sqrt3}{2}\rmi)z},
  \qquad \omega=\rme^{-2\pi\rmi/3},
\end{equation}
agree in their coefficients of $z^0,\dots,z^3$ and differ at $z^4$, and there
are values of $a$ for which both exponents satisfy the constraint
$\abs{a}<\tfrac12$ of \eqref{eq:Lr}, so the pair lies in $\mathcal L_1$ itself,
and not just in the larger family obtained by dropping the type restriction.

Convergence of the whole family is a genuine restriction, and its failure is
not pathological.

\begin{proposition}\label{prop:arc}
Let $F_k$ be the normalised Perelomov coherent state with
$\zeta_0(k)=\alpha(k)/\sqrt{2k}$ and $\alpha(k)=2+\sin(\log k)$. Then
$\norm{F_k}_k=1$ for every $k$, the family depends real analytically on $k$,
and its limit set is the arc
$\bigl\{\rme^{\bar\alpha z-\abs{\alpha}^2/2}:\alpha\in[1,3]\bigr\}$.
\end{proposition}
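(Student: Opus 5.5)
We first write down the normalised Perelomov coherent state. The unnormalised stellar function is $(1-\bar\zeta_0\zeta)^{-2k}$, whose norm squared is $K_k(\zeta_0,\zeta_0)=(1-\abs{\zeta_0}^2)^{-2k}$. So
\[
  F_k(\zeta)=(1-\abs{\zeta_0}^2)^{k}(1-\bar\zeta_0\zeta)^{-2k},
  \qquad \zeta_0=\alpha(k)/\sqrt{2k},
\]
and $\norm{F_k}_k=1$ for every $k$ by construction. Since $\alpha(k)$ is real, with $1\le\alpha(k)\le3$, we have $\zeta_0\in(0,1)$ once $2k>9$. For $\abs{\zeta}<1$ the expression above is a real-analytic function of $k$ through $\zeta_0(k)$ and through the exponents $k$ and $-2k$. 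Analyticity of $\alpha(k)=2+\sin(\log k)$ on $k>0$ gives the real-analytic dependence.

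Next we compute the rescaled function:
\[
  \Ft_k(z)=\Bigl(1-\frac{\alpha^2}{2k}\Bigr)^{k}\Bigl(1-\frac{\alpha z}{2k}\Bigr)^{-2k},
  \qquad \alpha=\alpha(k).
\]
The key estimate is uniform convergence in the parameter. For $\alpha$ in the compact set $[1,3]$ and $z$ in a compact set, the limit $(1-x/n)^{n}\to\rme^{-x}$ of \eqref{eq:weightkernel} holds uniformly in $x$ on compacts. This follows by expanding the logarithm, which gives an error $O(1/k)$ uniformly. Hence
\[
  \sup_{\abs{z}\le R}\,\Bigl|\Ft_k(z)-\rme^{\alpha(k)z-\alpha(k)^2/2}\Bigr|\longrightarrow0 .
\]
Here $\bar\alpha=\alpha$, so this is the function $\rme^{\bar\alpha z-\abs{\alpha}^2/2}$ of the statement evaluated at $\alpha(k)$.

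The limit set is then identified as follows. Let $\Phi(\alpha):=\rme^{\alpha z-\alpha^2/2}$, a map from $[1,3]$ into holomorphic functions with the locally uniform topology. $\Phi$ is continuous and injective, since $\alpha$ is the coefficient $b_1/b_0$. By the estimate above, the limit points of $\Ft_k$ are exactly the limits of $\Phi(\alpha(k_j))$ along subsequences $k_j\to\infty$.

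If $\alpha(k_j)\to\alpha_*$ along a subsequence, then by continuity the corresponding limit point is $\Phi(\alpha_*)$. Conversely, every $\alpha_*\in[1,3]$ is a limit of $\alpha(k_j)$ for some $k_j\to\infty$. Indeed $\sin(\log k)$ takes every value in $[-1,1]$ on each interval $\log k\in[2\pi n,2\pi(n+1)]$, so we choose one such $k_n$ in each period. Compactness of $[1,3]$ shows that every sequence $\alpha(k_j)$ has a convergent subsequence, so there are no other limit points. Hence the limit set is $\Phi([1,3])$, which is an arc because $\Phi$ is a continuous injection of an interval.

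The only step needing care is the uniformity in $\alpha$ of the exponential limit; everything else is bookkeeping. As a consistency check, the family does not converge, which is compatible with \cref{thm:unique}: the coefficient $b_1^{(k)}$ tends to $\alpha(k)\rme^{-\alpha(k)^2/2}$, which oscillates.
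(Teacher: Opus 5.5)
Your proof is correct and follows the same route as the paper's one-line argument: each $\Ft_k$ is the Glauber state with label $\alpha(k)$ up to an error that vanishes as $k\to\infty$, and $\alpha(k)$ returns to every value of $[1,3]$ for arbitrarily large $k$. You also make explicit two points the paper leaves implicit. First, you show that $\Ft_k(z)-\rme^{\alpha(k)z-\alpha(k)^2/2}\to0$ uniformly for labels in $[1,3]$, which is what justifies passing to limits when the label moves with $k$. Second, you note that $\zeta_0\in\D$ only once $2k>\alpha(k)^2$; at $k=1$ one has $\zeta_0=\sqrt2$, so the statement's ``for every $k$'' must be read as ``for all sufficiently large $k$''.
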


\begin{proof}
Each $F_k$ contracts to the Glauber state with label $\alpha(k)$, and
$\alpha(k)$ ranges over $[1,3]$ and returns to each value for arbitrarily
large $k$, since $\log k$ is unbounded and $\sin$ is periodic.
\end{proof}

By \cref{thm:unique} the failure of convergence in \cref{prop:arc} is exactly
the failure of the three numbers $b_0,b_1,b_2$ to converge, and those three
are free to oscillate independently. Nonconvergence is a consequence of the
freedom, and not a sign of irregularity in the family.

The arc in \cref{prop:arc} is not an accident of the example. Let
$\norm{F_k}_k\le1$ with $r_k\equiv r<\infty$. The limit set is closed in the
topology of local uniform convergence and, by \cref{lem:normal}, contained in a
compact set, so it is compact; each of its elements lies in
$\mathcal L_r\cup\{0\}$ by \cref{thm:collapse} applied to the corresponding
subsequence. It is connected whenever the family depends continuously on $k$,
since the limit set at infinity of a continuous curve in a compact metric
space is a continuum, so a limit set consisting of exactly two points is
impossible. We note that $\mathcal L_r\cup\{0\}$ is not itself closed, as
$\rme^{a_nz^2}$ with $\abs{a_n}\uparrow\tfrac12$ shows, which is why the
argument runs through the limit set rather than through $\mathcal L_r$.

\subsection{Surjectivity}

Every admissible limit occurs, and a canonical family realises it. The
construction is the inverse of the rescaling, and the norm constraint one
might expect to have to impose turns out to hold automatically.

\begin{theorem}[Canonical section]\label{thm:section}
For $G\in\Fock$ define $\sigma_k(G)(\zeta):=G(\sqrt{2k}\,\zeta)$. Then
$\sigma_k(G)\in\Ak$ and $\Tk\sigma_k(G)=G$ identically in $k$, with
\begin{equation}\label{eq:sectionnorm}
  \norm{\sigma_k(G)}_k^2=\sum_m\abs{b_m}^2m!\,\wk(m)
  \ \uparrow\ \norm{G}^2_\Fock .
\end{equation}
In particular $\norm{G}_\Fock\le1$ implies $\norm{\sigma_k(G)}_k\le1$ for
every $k$, so the norm constraint is automatic. The range of the limit map on
the unit ball of $\Ak$ is therefore the \emph{closed} unit ball of
$\mathcal L_r\cup\{0\}$, and $\sigma_k$ depends holomorphically on $a$, $b$,
$c$ and the coefficients of $P$.
\end{theorem}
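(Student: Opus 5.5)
The plan is to compute everything at the level of Taylor coefficients, using \cref{prop:identity} and \cref{lem:weight}.

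\emph{Coefficients and norm.} Write $G(z)=\sum_m b_mz^m$ with $\sum_m\abs{b_m}^2m!<\infty$. Then $\sigma_k(G)(\zeta)=\sum_m b_m(2k)^{m/2}\zeta^m$, which converges on all of $\C$ and in particular on $\D$. Its disk coefficients are $c_m=(2k)^{m/2}b_m$, so the rescaled coefficients of $\Tk\sigma_k(G)$ are exactly $b_m$. This gives $\Tk\sigma_k(G)=G$ for every $k$. Applying \eqref{eq:key} then gives the first equality in \eqref{eq:sectionnorm}. That sum is finite because $\wk(m)\le1$ by \cref{lem:weight}(i), and it is bounded by $\norm{G}^2_\Fock$; this shows $\sigma_k(G)\in\Ak$. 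The monotone increase to $\norm{G}^2_\Fock$ follows from \cref{lem:weight}(iii),(iv) and monotone convergence in $m$. This is the mechanism of \cref{prop:fatou} run in reverse. The automatic norm constraint $\norm{\sigma_k(G)}_k\le\norm{G}_\Fock\le1$ is then immediate.

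\emph{Range of the limit map.} For surjectivity onto the closed unit ball, take $G\in\mathcal L_r$ with $\norm{G}_\Fock\le1$, or $G=0$. The constant family $\Tk\sigma_k(G)=G$ converges trivially, so $G$ is attained as a limit.

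I also need $r_k\equiv r$ for this family, if the limit map is meant on families of rank $r$. The zeros of $\sigma_k(G)$ in $\D$ are the zeros of $P$ divided by $\sqrt{2k}$. These lie in $\D$ once $2k>\max\abs{\text{zeros}}^2$, so the rank is eventually $\deg P$. This is the point I expect to need the most care, together with what ``range'' means for finitely many small $k$. I would either restrict to $k$ large or observe that limits only see the tail.

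The reverse inclusion comes from \cref{thm:collapse}, \cref{thm:semicont} and \cref{prop:fatou}. Every nonzero limit of a unit-ball family of rank at most $r$ lies in $\mathcal L_r$ and has Fock norm at most $1$. Hence the range is exactly the closed unit ball of $\mathcal L_r\cup\{0\}$. The boundary $\norm{G}_\Fock=1$ is attained because the section gives a constant family.

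\emph{Holomorphic dependence.} The coefficients of $P\rme^{az^2+bz+c}$ are entire functions of $(a,b,c,P)$. Composing with $\zeta\mapsto\sqrt{2k}\zeta$ is linear. To show that the map into $\Ak$ is holomorphic, I would check local boundedness of $\norm{\sigma_k(G)}_k\le\norm{G}_\Fock$ on compact parameter sets inside $\abs{a}<\tfrac12$. The Fock norm is locally bounded there, by uniform Gaussian integrability. Weak holomorphy then follows from the pairings with the reproducing kernels $K_k(\cdot,\zeta)$, which are the pointwise evaluations and are holomorphic in the parameters. Locally bounded and weakly holomorphic maps into a Hilbert space are holomorphic.
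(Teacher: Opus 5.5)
Your proposal is correct and follows the paper's proof: $c_m=(2k)^{m/2}b_m$ gives $\Tk\sigma_k(G)=G$, \cref{prop:identity} gives the norm formula, \cref{lem:weight} gives the bound and the monotone convergence, and the constant family gives surjectivity. Your treatment of the reverse inclusion (via \cref{thm:collapse,thm:semicont} and \cref{prop:fatou}) and of Hilbert-space-valued holomorphy (local boundedness plus evaluation against reproducing kernels) is more explicit than the paper's one-line remarks. Your worry about small $k$ is unnecessary: the zeros of $\sigma_k(G)$ in $\D$ are a subset of the rescaled zeros $w/\sqrt{2k}$ of $P$, so $r_k\le\deg P\le r$ for every $k$. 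That is all that is needed when the limit map is read on families of rank at most $r$, which is the reading that matches the definition of $\mathcal L_r$.
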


\begin{proof}
The identity $\Tk\sigma_k=\mathrm{id}$ is immediate from the definitions, and
\eqref{eq:sectionnorm} is \cref{prop:identity} applied to $\sigma_k(G)$, whose
rescaled Taylor coefficients are those of $G$. The inequality and the monotone
convergence are \cref{lem:weight}(i) and (iv). Holomorphy in the parameters is
clear since $\sigma_k$ acts by substitution. In the language of
\eqref{eq:diag}, $\sigma_k$ is diagonal with singular values $\wk(m)^{1/2}\le1$
in the two orthonormal bases, so it is a contraction that tends to the
identity.
\end{proof}

If one insists on $\norm{F_k}_k=1$ exactly rather than $\le1$, the interior of
the ball is reached as well by adding a cubic halo as in \cref{prop:partial}
and choosing its coefficient so that the norm is exact. We establish existence
only, since the dependence of that coefficient on $(a,b,c)$ and on $P$ is not shown
to be continuous, so this version supplies no continuous section. For a
continuous section, use \cref{thm:section} with $\norm{\cdot}_k\le1$.

\section{Discussion}
\label{sec:discussion}

The stellar rank is a complete invariant on the plane in the sense that its
vanishing characterises the Gaussian states, and it is not complete on the
disk. The results above locate the difference precisely, and the location is
not where the symmetry sits but where the weight does.

On the plane the chain runs as follows. The reproducing kernel gives the bound
$\abs{F(z)}\le\norm{F}\rme^{\abs{z}^2/2}$. The exponent two in
$\rme^{-\abs{z}^2}$ is what makes the Borel--Carath\'eodory inequality return
$\deg g\le2$ in \cref{thm:collapse}, so the zero-free class is
$\rme^{az^2+bz+c}$, a family with two free complex parameters, and the group
acting simply transitively on it is generated by displacements and
squeezings. Each step in that chain is a consequence of the exponent two.

At finite $k$ that number does not exist. The disk is a bounded domain, the
weight $(1-\abs{\zeta}^2)^{2k-2}$ imposes no growth restriction of the kind
Borel--Carath\'eodory needs. The general relation between the weight sequence
of a Bargmann-type space and the growth of its elements is developed in
\cite{VPDS2012}. In our case the zero-free class in $\Ak$ is
infinite-dimensional, because the measure is finite, so $H^{\infty}(\D)\subset\Ak$ and
$\rme^{g}$ is a zero-free element for every bounded holomorphic $g$. The
quadratic weight, and with it the number two, is produced by the contraction,
through $(1-\abs{z}^2/2k)^{2k}\to\rme^{-\abs{z}^2}$ in
\eqref{eq:weightkernel}.

\Cref{sec:escape} adds the dynamical version of the same statement. A mode
$\rme^{a_nz^n}$ survives the contraction exactly when $n\le2$, and the
semiclassical potential of \eqref{eq:huspot} acquires a second well exactly
when $n\ge3$. The two cutoffs coincide because they are the same cutoff.

This answers the objection recorded in \cref{sec:intro}. Klimov and coauthors
argue that hyperbolic phase space needs no criteria imported from flat or
spherical geometry \cite{Klimov2026}, and we agree. What the present results add is a reason. The completeness of the flat
criterion is a property of the flat weight and is manufactured by the limit,
so at finite $k$ there is nothing to export. The
anomaly they report, that the zero set leaves a zero-free factor
undetermined, is what one should expect on any bounded domain.

A zero set is not a unitary invariant, and the discrete series is realised
analytically in more than one way. The Perelomov realisation on the disk used
here and the Barut--Girardello realisation on the plane carry the same
representation and are related by
\begin{equation}\label{eq:BVMlaplace}
  G(1/\rho;k)=\frac{\rho^{2k}}{\sqrt{\Gamma(2k)}}
  \int_0^\infty z^{2k-1}F(z;k)\,\rme^{-\rho z}\,\rmd z ,
  \qquad \Ree\rho>0,\ \abs{\rho}>1,
\end{equation}
a Laplace transform, but a unitary map between Hilbert spaces of
holomorphic functions places no constraint on where those functions vanish.
The transform is due to Brif, Vourdas and Mann, who obtain \eqref{eq:BVMlaplace}
and record its inverse \cite[equations (23) and (24)]{BVM1996}; the review
\cite[section 11.3]{Vourdas2006} restates it. Their normalisation of the
Barut--Girardello monomials is $n!\,\Gamma(n+2k)$, which is $\Gamma(2k)$ times
the $h_n$ used here, and \eqref{eq:BVMlaplace} is an isometry in theirs; with
$h_n=n!\,(2k)_n$ the prefactor becomes $\rho^{2k}/\Gamma(2k)$.

Comparing the two realisations directly is instructive. The Barut--Girardello
space has monomial norms
$h_n=n!\,(2k)_n$ and reproducing kernel ${}_0F_1(2k;\bar z'z)$, of order one
and type one \cite[section 11.1]{Vourdas2006}. Hadamard's theorem in the form
used in \cref{thm:collapse}, with the exponent one in place of two, gives the
zero-free class $\{\rme^{bz+c}\}$ subject to the type constraint
$\abs{b}<1$. Since amplitudes and Taylor coefficients are related there by
$\psi_n=c_n\sqrt{h_n}$, which is \eqref{eq:psic} with the monomial norms of
that space in place of those of $\Ak$, and
$\rme^{bz}$ has $c_n=b^n/n!$, its amplitudes are
\begin{equation}\label{eq:BGtoPer}
  \psi_n=\frac{b^n}{n!}\sqrt{h_n}=b^n\sqrt{\frac{(2k)_n}{n!}} ,
\end{equation}
which are exactly the amplitudes of the Perelomov coherent state
$\ket{k,\zeta_0}$ with $\zeta_0=b$. Normalisation gives the binomial series
$\sum_n\abs{b}^{2n}(2k)_n/n!=(1-\abs{b}^2)^{-2k}$, finite precisely for
$\abs{b}<1$, so the type constraint reproduces the label disk. In the
Barut--Girardello realisation the zero-free class is therefore exactly the
$SU(1,1)$ orbit of the lowest weight vector, whereas in the Perelomov
realisation it is infinite-dimensional and contains that orbit along with much
else, and the two descriptions are of the same states.

Two qualifications are needed here. The identification of the disk of admissible $b$ with
the disk of Perelomov labels holds in the standard normalisation, the one in
which $K_+$ acts as multiplication by $z$, and a rescaling $z\mapsto\alpha z$
changes the radius and leaves the orbit itself untouched, so the invariant
statement is the one about the orbit. And the reproducing kernel of the
Barut--Girardello space, being $ {}_0F_1(2k;x)=\Gamma(2k)x^{(1-2k)/2}
I_{2k-1}(2\sqrt x)$, has infinitely many zeros on the negative real axis, so
Barut--Girardello coherent states have infinite stellar rank in their own
realisation. There is no contradiction, since the zero-free class there is the
Perelomov orbit and not the Barut--Girardello one, but the fact is
counterintuitive enough to state.

None of the ingredients is new. That the zero set of a disk
representation leaves an outer factor undetermined appears in
\cite{VBM1996} for $k=1/2$, in the language of inner and outer functions, and
again in \cite[section 10.4]{Vourdas2006}. The order and type of the
Barut--Girardello space are recorded in \cite[section 11.1]{Vourdas2006}. What
seems not to have been said is that combining the two identifies the zero-free
class in that realisation with a group orbit.

Several questions are left open. \Cref{thm:rank-eq} assumes $r_k\equiv
r<\infty$, and when $r_k\to\infty$ the counting measure of
\cref{thm:measures} still converges but no integer invariant survives, so it
is unclear what replaces the scaling condition. A zero at scale
$k^{-\gamma}$ with $\gamma<1/2$ simply escapes, as \cref{sec:rank} shows,
and a finer description of the rate at which the associated mass leaves a
fixed compact set is not attempted here. In \cref{thm:phases} the behaviour
exactly at $\lambda^*$ is left open, since it requires the next order of the
Laplace expansion of \eqref{eq:Ssum2} rather than the exponential rate alone.

Further afield, the regime $r\to\infty$ with $r/k$ fixed is a thermodynamic
limit of the stellar
hierarchy, and on the Pad\'e side of \cref{app:Nr} the natural tool there is
the Riemann--Hilbert analysis of Hermite--Pad\'e approximation to $\rme^{z}$;
whether the two ends meet is not known to us. And since Klimov and coauthors
prove that Perelomov coherent states have strictly positive Wigner functions
\cite{Klimov2026}, the identification above gives, in the Barut--Girardello
realisation, that rank zero implies Wigner positivity. Whether the converse
holds, that is, whether Wigner positivity forces a Perelomov coherent state,
would be an $SU(1,1)$ analogue of Hudson's theorem, and we do not know
whether one exists.

The scope adopted at the outset leaves out mixed states, several modes, and
the continuous and principal series. The Majorana representation for mixed
states of \cite{SEBraun2020} suggests where the first of these might begin.

\section*{Data availability statement}

No new data were created or analysed in this study. The scripts that
reproduce every constant, table and figure coordinate reported here are
openly available at \texttt{doi:10.5281/zenodo.22313935}, as described in
\cref{app:E4}.

\appendix
\numberwithin{equation}{section}
\numberwithin{table}{section}
\numberwithin{figure}{section}
\renewcommand{\thetable}{\thesection\arabic{table}}
\renewcommand{\thefigure}{\thesection\arabic{figure}}

\section{The weight sequence}
\label{app:weight}

Every statement about $\wk$ that we use follows from a single observation,
that the ratio $(2k)^m/(2k)_m$ is a product of $m$ elementary factors. We
collect the consequences here because they are invoked repeatedly, in
different combinations, and at points of the argument that are far apart.

\subsection{Proof of \texorpdfstring{\Cref{lem:weight}}{the weight lemma}}
\label{app:A1}

Recall from \eqref{eq:key} that
\begin{equation}\label{eq:wproduct}
  \wk(m)\;=\;\frac{(2k)^m}{(2k)_m}
  \;=\;\prod_{j=0}^{m-1}\Bigl(1+\frac{j}{2k}\Bigr)^{-1},
\end{equation}
the second form following from $(2k)_m=\prod_{j=0}^{m-1}(2k+j)$, with the
convention that an empty product equals one. The factors are indexed by $j$
and depend on $k$ only through the combination $j/2k$. All four assertions of
\cref{lem:weight} therefore reduce to elementary statements about a single
factor, and we treat them in that order.

\paragraph{(i) Range and the first two values.}
Each factor $(1+j/2k)^{-1}$ lies in $(0,1]$ for $j\ge0$ and $k>0$, and equals
one exactly when $j=0$. Multiplying gives $0<\wk(m)\le1$. For $m=0$ the
product is empty and for $m=1$ it consists of the single factor $j=0$, so
\begin{equation}\label{eq:w01}
  \wk(0)=\wk(1)=1 \qquad\text{for every }k>0 .
\end{equation}
For $m\ge2$ the factor $j=1$ is present and is strictly less than one, so
$\wk(m)<1$ strictly. Property (i) is what makes the section $\sigma_k$ of \cref{thm:section}
a contraction, and \eqref{eq:w01} is what makes \cref{prop:exact2} exact
rather than asymptotic. Reading (i) as a comparison of the two norms needs
care, since $\Tk F$ need not lie in $\Fock$, and we return to this in
\cref{app:A2}.

\paragraph{(ii) Monotonicity in $m$.}
Consecutive values differ by one factor,
\begin{equation}\label{eq:wratio-m}
  \frac{\wk(m+1)}{\wk(m)}=\Bigl(1+\frac{m}{2k}\Bigr)^{-1},
\end{equation}
which equals one for $m=0$ and is strictly less than one for $m\ge1$. Hence
$\wk$ is nonincreasing on $m\ge0$ and strictly decreasing on $m\ge1$, with the
plateau $\wk(0)=\wk(1)$ at the bottom. The decay is superexponential in $m$
at fixed $k$, as the quantitative form below makes precise, and this is why a
state whose amplitudes are spread over high weight vectors is penalised in
$\Ak$ relative to $\Fock$.

\paragraph{(iii) Monotonicity in $k$.}
Fix $m\ge2$ and regard \eqref{eq:wproduct} as a function of $k$. Each factor
$(1+j/2k)^{-1}$ is strictly increasing in $k$ when $j\ge1$ and constant when
$j=0$. Since $m\ge2$ guarantees that at least one factor with $j\ge1$ occurs,
the product is strictly increasing in $k$. For $m\le1$ the value is constant
by \eqref{eq:w01}.

 The consequence for the estimates in the body is described
after \cref{lem:weight}.

\paragraph{(iv) The limit.}
For fixed $m$ the product in \eqref{eq:wproduct} has finitely many factors,
each tending to one as $k\to\infty$, so $\wk(m)\to1$. By (iii) the
convergence is monotone from below. Together with \cref{prop:identity} this
says that the contraction turns the $\Ak$ norm into the Fock norm one
coefficient at a time, monotonically, and that the two norms agree exactly on
the lowest two levels at every finite $k$. \qed

\medskip

The three properties (i), (iii) and (iv) are qualitative, and for the
semiclassical estimates of \cref{app:semiclassical} we need a quantitative
form of the decay. It follows from the same product by taking logarithms.

\begin{lemma}\label{lem:wquant}
For all $k>0$ and $m\ge0$,
\begin{equation}\label{eq:wquant}
  \exp\Bigl(-\frac{m(m-1)}{4k}\Bigr)
  \;\le\;\wk(m)\;\le\;
  \exp\Bigl(-\frac{m(m-1)}{4k}+\frac{m^3}{24k^2}\Bigr).
\end{equation}
\end{lemma}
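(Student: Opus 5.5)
Taking logarithms of the product \eqref{eq:wproduct} gives
\[
  -\log\wk(m)=\sum_{j=0}^{m-1}\log\Bigl(1+\frac{j}{2k}\Bigr),
\]
so both inequalities in \eqref{eq:wquant} reduce to two-sided bounds on $\log(1+x)$ for $x=j/2k\ge0$, summed over $j$. The plan is to bound each term by an elementary Taylor inequality and then evaluate the resulting power sums in closed form.

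For the lower bound on $\wk(m)$ we need an upper bound on $-\log\wk(m)$. Here I use $\log(1+x)\le x$ for $x\ge0$, which gives
\[
  -\log\wk(m)\le\sum_{j=0}^{m-1}\frac{j}{2k}=\frac{m(m-1)}{4k}.
\]
Exponentiating yields the left inequality of \eqref{eq:wquant}. This holds for all $m\ge0$; the cases $m=0,1$ are trivial, with equality.

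For the upper bound on $\wk(m)$ we need a lower bound on $-\log\wk(m)$. Here I use $\log(1+x)\ge x-x^2/2$ for $x\ge0$, which follows because the difference vanishes at $0$ and has derivative $x^2/(1+x)\ge0$. Summing gives
\[
  -\log\wk(m)\ge\frac{m(m-1)}{4k}-\frac{1}{8k^2}\sum_{j=0}^{m-1}j^2 .
\]
The power sum is $\sum_{j<m}j^2=(m-1)m(2m-1)/6$, and
\[
  \frac{(m-1)m(2m-1)}{48k^2}\le\frac{2m^3}{48k^2}=\frac{m^3}{24k^2},
\]
since $(m-1)(2m-1)\le 2m^2$ for $m\ge1$; for $m=0$ the bound is trivial. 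Exponentiating gives the right inequality.

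No step is a genuine obstacle. The only point needing care is the constant in the correction term: the crude bound $(m-1)(2m-1)\le2m^2$ is exactly what produces the stated $m^3/24k^2$. Both Taylor inequalities must also be valid on all of $x\ge0$, not only for small $x$, because no restriction $m\ll k$ is assumed.
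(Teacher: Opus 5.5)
Your proof is correct and is essentially the paper's argument. Like the paper, you take logarithms of the product \eqref{eq:wproduct}, apply $x-\tfrac12x^2\le\log(1+x)\le x$ termwise with $x=j/2k$, and sum; the only cosmetic difference is that you evaluate $\sum_{j<m}j^2$ in closed form before bounding it by $m^3/3$, where the paper uses that bound directly.
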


\begin{proof}
Taking logarithms in \eqref{eq:wproduct},
\[
  -\log\wk(m)=\sum_{j=0}^{m-1}\log\Bigl(1+\frac{j}{2k}\Bigr).
\]
For $x\ge0$ one has $x-\tfrac12x^2\le\log(1+x)\le x$. Applying this with
$x=j/2k$ and summing, and using
$\sum_{j<m}j=\tfrac12m(m-1)$ and $\sum_{j<m}j^2\le\tfrac13m^3$, gives
\[
  \frac{m(m-1)}{4k}-\frac{m^3}{24k^2}
  \;\le\;-\log\wk(m)\;\le\;\frac{m(m-1)}{4k},
\]
which is \eqref{eq:wquant} after exponentiating.
\end{proof}

The lower bound in \eqref{eq:wquant} is the one we use. It shows that the
penalty $\wk(m)$ imposes on the level $m$ is Gaussian in $m$ with variance of
order $k$, so that levels up to $m\sim\sqrt{k}$ are essentially unpenalised
while levels well beyond that are suppressed. This is the analytic content
behind the observation, in \cref{app:semiclassical}, that the coefficient
series associated with a cubic halo has its second maximum near
$m\sim k$ rather than at small $m$.

\subsection{Derivation of \texorpdfstring{\Cref{prop:identity}}{the key
identity}, and the contraction as a diagonal map}
\label{app:A2}

The identity \eqref{eq:key} is a change of variables, carried out here in a
form that keeps track of which object lives in which space, because
the three sets of coefficients used in the body, namely $c_m$, $b_m$ and
$\psi_m$, are easy to confuse.

Let $F\in\Ak$ have Taylor coefficients $c_m$. By \cref{def:contraction} the
rescaled function $\Tk F$ has coefficients
\begin{equation}\label{eq:bfromc}
  b_m=(2k)^{-m/2}c_m .
\end{equation}
Substituting into \eqref{eq:Aknorm} and using
$\Gamma(2k)/\Gamma(m+2k)=1/(2k)_m$ together with the definition
$\wk(m)=(2k)^m/(2k)_m$,
\begin{equation}
  \norm{F}_k^2=\sum_m\abs{c_m}^2\frac{m!}{(2k)_m}
  =\sum_m\abs{b_m}^2(2k)^m\frac{m!}{(2k)_m}
  =\sum_m\abs{b_m}^2\,m!\,\wk(m),
\end{equation}
which is \eqref{eq:key}. The amplitudes follow from
$\psi_m=c_m[m!/(2k)_m]^{1/2}$ and \eqref{eq:bfromc}:
\begin{equation}\label{eq:psib}
  \psi_m=b_m\sqrt{m!\,\wk(m)} .
\end{equation}
We use \eqref{eq:psib} in both directions. In \cref{thm:escape} the criterion
is stated in terms of the amplitudes, which are what an experiment sees, while
the proof runs on the Taylor coefficients. Since $\wk(m)\to1$ at fixed $m$ by
\cref{lem:weight}(iv), the two are interchangeable in any statement about a
fixed finite set of levels.

\paragraph{The contraction in the two orthonormal bases.}
The reading of \eqref{eq:key} as a deformation of one quadratic form into
another, given after \cref{prop:identity}, has a shorter form in bases.
The functions
$e^{(k)}_m(\zeta)=\zeta^m[(2k)_m/m!]^{1/2}$ are orthonormal in $\Ak$ and
$f_m(z)=z^m/\sqrt{m!}$ are orthonormal in $\Fock$, and a direct computation
from \eqref{eq:bfromc} gives
\begin{equation}\label{eq:diag}
  \Tk\,e^{(k)}_m=\wk(m)^{-1/2}f_m,
  \qquad
  \sigma_k f_m=\wk(m)^{1/2}\,e^{(k)}_m ,
\end{equation}
with $\sigma_k$ the map of \cref{thm:section}. Thus $\Tk$ is diagonal in these
bases with singular values $\wk(m)^{-1/2}\ge1$, so it expands, and its formal
inverse $\sigma_k$ is diagonal with singular values $\wk(m)^{1/2}\le1$, so it
contracts.

Three facts used in the body can be read off \eqref{eq:diag} directly. First, $\Tk$ is exact on the lowest two levels for every finite $k$,
by $\wk(0)=\wk(1)=1$, which is \cref{prop:exact2}. Second, $\sigma_k$ maps the
unit ball of $\Fock$ into the unit ball of $\Ak$ without any further
condition, which is the assertion in \cref{thm:section} that the norm
constraint is automatic. Third, the singular values of $\sigma_k$ increase to
one, so $\norm{\sigma_kG}_k\uparrow\norm{G}_\Fock$, and the range of the limit
map is the closed rather than the open unit ball.

\subsection{Exactness at the bottom of the ladder, and a table}
\label{app:A3}

\Cref{prop:exact2} asserts that the contraction is exactly correct on the two
lowest levels at every finite $k$. With \eqref{eq:diag} in hand this is immediate, and we record it in the
sharper form that the two lowest levels are the only ones on which it holds.

\begin{proposition}\label{prop:exactmax}
Let $V\subseteq\Ak$ be a closed subspace on which $\Tk$ is isometric onto its
image in $\Fock$. Then
$V\subseteq\operatorname{span}\{e^{(k)}_0,e^{(k)}_1\}$, and on that
two-dimensional subspace $\Tk$ is isometric for every $k>1/2$.
\end{proposition}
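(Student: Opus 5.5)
The plan is to work entirely in the orthonormal bases of \eqref{eq:diag}, where $\Tk$ is diagonal: $\Tk e^{(k)}_m=\wk(m)^{-1/2}f_m$. For $F=\sum_m\psi_m e^{(k)}_m\in V$ this gives
\[
  \norm{\Tk F}_\Fock^2-\norm{F}_k^2=\sum_{m\ge0}\abs{\psi_m}^2\bigl(\wk(m)^{-1}-1\bigr),
\]
a sum of nonnegative terms by \cref{lem:weight}(i). The left side may be $+\infty$, in which case $F$ is not mapped isometrically, or not into $\Fock$ at all. The whole argument then reduces to reading off when this sum vanishes.

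First I would check the positive assertion on the two lowest levels. By \eqref{eq:w01}, $\wk(0)=\wk(1)=1$ for every $k>0$, hence in particular for every $k>1/2$. So the terms with $m\le1$ vanish, and on $\operatorname{span}\{e^{(k)}_0,e^{(k)}_1\}$ the map $\Tk$ preserves norms. Polarisation then upgrades this to preservation of inner products, so $\Tk$ is isometric there. The image is $\operatorname{span}\{1,z\}$, which is consistent with \cref{prop:exact2}.

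For the maximality claim, suppose $\Tk$ is isometric on $V$, so that $\norm{\Tk F}_\Fock=\norm{F}_k$ for every $F\in V$. Then the sum above vanishes for each $F\in V$. By \cref{lem:weight}(i), $\wk(m)<1$ for $m\ge2$, so each coefficient $\wk(m)^{-1}-1$ with $m\ge2$ is strictly positive. Vanishing of the sum therefore forces $\psi_m=0$ for all $m\ge2$, which gives $F\in\operatorname{span}\{e^{(k)}_0,e^{(k)}_1\}$. Closedness of $V$ is not actually needed; I would note that in passing.

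The main obstacle is only a matter of care: $\Tk F$ need not lie in $\Fock$, as noted after \cref{prop:identity}. I would handle this by reading \eqref{eq:twoforms} as identities in $[0,\infty]$, so that $Q_\infty[b]\ge Q_k[b]$ termwise and the difference is a well-defined series of nonnegative terms. Isometry onto an image in $\Fock$ presupposes $Q_\infty[b]<\infty$, so no cancellation between infinite quantities arises, and the argument goes through unchanged.
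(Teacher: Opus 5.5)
Your proposal is correct and follows the paper's own proof essentially step for step. You diagonalise $\Tk$ in the bases of \eqref{eq:diag}, write the norm defect as $\sum_m\abs{v_m}^2[\wk(m)^{-1}-1]$, and use \cref{lem:weight}(i) and \eqref{eq:w01} to conclude that the defect vanishes exactly when $v_m=0$ for all $m\ge2$. Your added care about the defect being $+\infty$ when $\Tk F\notin\Fock$, and your remark that closedness of $V$ is never used, are both correct and make explicit what the paper leaves implicit.
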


\begin{proof}
By \eqref{eq:diag}, a vector $v=\sum_m v_m e^{(k)}_m$ has
$\norm{\Tk v}^2_\Fock-\norm{v}^2_k=\sum_m\abs{v_m}^2[\wk(m)^{-1}-1]$. Every
summand is nonnegative, and by \cref{lem:weight}(i) the bracket vanishes
exactly for $m\le1$ and is strictly positive for $m\ge2$. The difference
therefore vanishes if and only if $v_m=0$ for all $m\ge2$. Isometry on the
remaining two-dimensional subspace is \eqref{eq:w01}.
\end{proof}

The concrete consequence, the rank-one state whose zero sits at
$\alpha/\sqrt{2k}$, is spelled out after \cref{prop:exact2} in the body and is
not repeated here.

Exactness fails first at level two, and the failure is of order $1/k$. From
\eqref{eq:wproduct},
\begin{equation}\label{eq:w2exact}
  \wk(2)=\frac{2k}{2k+1},\qquad\text{so}\qquad 1-\wk(2)=\frac{1}{2k+1},
\end{equation}
in agreement with the expansion $1-\wk(m)=m(m-1)/4k+O(k^{-2})$ implied by
\cref{lem:wquant}.

\Cref{tab:weights} lists $\wk(m)$ over the range of $m$ and $k$ that occurs in
the examples, and \cref{lem:weight} can be read off it. Along a row the values
increase towards one, which is monotonicity in $k$. Down a column they fall
away, at first slowly and then very fast, which is the Gaussian
suppression of \cref{lem:wquant}. The transition between the two regimes sits
near $m\sim\sqrt{k}$, as that lemma predicts, and this is the reason the
coefficient sums of \cref{app:semiclassical} have a second maximum far out
rather than a monotone tail.

\begin{table}[t]
\centering
\begin{tabular}{rlllll}
\toprule
& \multicolumn{5}{c}{$k$}\\
\cmidrule(l){2-6}
$m$ & $1$ & $5$ & $50$ & $500$ & $5000$\\
\midrule
$1$ & $1$ & $1$ & $1$ & $1$ & $1$ \\
$2$ & $0.666667$ & $0.909091$ & $0.990099$ & $0.999001$ & $0.999900$ \\
$3$ & $0.333333$ & $0.757576$ & $0.970685$ & $0.997007$ & $0.999700$ \\
$5$ & $0.044444$ & $0.416250$ & $0.906166$ & $0.990065$ & $0.999001$ \\
$10$ & $2.565\times10^{-5}$ & $0.029831$ & $0.646366$ & $0.956133$ & $0.995512$ \\
$30$ & $1.306\times10^{-25}$ & $1.779\times10^{-11}$ & $0.018761$ & $0.649999$ & $0.957473$ \\
\bottomrule
\end{tabular}
\caption{The weight sequence $\wk(m)=(2k)^m/(2k)_m$. The first row is
identically one by \eqref{eq:w01}. Values increase in $k$ along each row by \cref{lem:weight}(iii) and decrease
in $m$ down each column by \cref{lem:weight}(ii).}
\label{tab:weights}
\end{table}

\section{Covariance, Jensen, and the convergence of zeros}
\label{app:jensen}

\subsection{The cocycle identity and the reduction to the centred rescaling}
\label{app:B1}

\begin{proof}[Proof of \cref{lem:cocycle}]
Write $w=g\cdot\zeta=(a\zeta+b)/(\bar b\zeta+\bar a)$. Then
\[
  1-\abs{w}^2
  =\frac{\abs{\bar b\zeta+\bar a}^2-\abs{a\zeta+b}^2}
        {\abs{\bar b\zeta+\bar a}^2},
\]
and expanding the numerator,
\[
  \bigl(\abs{b}^2\abs{\zeta}^2+\bar b a\zeta+\bar ab\bar\zeta+\abs{a}^2\bigr)
  -\bigl(\abs{a}^2\abs{\zeta}^2+a\bar b\zeta+\bar ab\bar\zeta+\abs{b}^2\bigr)
  =\bigl(\abs{b}^2-\abs{a}^2\bigr)\abs{\zeta}^2+\abs{a}^2-\abs{b}^2 ,
\]
the cross terms cancelling in pairs. Using $\abs{a}^2-\abs{b}^2=1$ this equals
$1-\abs{\zeta}^2$, which is \eqref{eq:cocycle}.
\end{proof}

\begin{proof}[Proof of the covariance assertions of \cref{lem:cocycle}]
By \eqref{eq:cocycle},
\[
  (1-\abs{\zeta}^2)^{k}\abs{(U_gF)(\zeta)}
  =\Bigl(\frac{1-\abs{\zeta}^2}{\abs{\bar b\zeta+\bar a}^{2}}\Bigr)^{k}
   \abs{F(g\cdot\zeta)}
  =(1-\abs{g\cdot\zeta}^2)^{k}\abs{F(g\cdot\zeta)},
\]
which is the asserted pointwise invariance of the Husimi amplitude.
Integrating the square of this identity against the invariant measure
$\rmd\nu_k$ gives $\norm{U_gF}_k=\norm{F}_k$, and $U_g$ is invertible with
inverse $U_{g^{-1}}$, hence unitary. Finally $U_gF$ vanishes at $\zeta$
exactly when $F$ vanishes at $g\cdot\zeta$, and $g$ is a bijection of $\D$, so
the number of zeros in $\D$ is unchanged.
\end{proof}

\Cref{prop:scale} follows: replacing $F_k$ by $U_{g_k}^{-1}F_k$ changes
neither the norm, nor the zero count, nor the Husimi amplitude, so a family
and its image satisfy the hypotheses and the conclusions of every statement in
the paper simultaneously. For the centred rescaling, taking logarithms and using
$\log(1-x)=-x+O(x^2)$ as $\lambda_k\to0$,
\[
  2k\log\bigl(1-\lambda_k^2\abs{u}^2\bigr)
  =-2k\lambda_k^2\abs{u}^2+O\bigl(k\lambda_k^4\bigr),
\]
and the error term is $O\bigl((2k\lambda_k^2)\lambda_k^2\bigr)$, which
vanishes whenever $2k\lambda_k^2$ stays bounded. The limit of
$(1-\lambda_k^2\abs{u}^2)^{2k}$ is therefore
$\exp(-\lim_k2k\lambda_k^2\abs{u}^2)$, a nondegenerate Gaussian exactly when
$2k\lambda_k^2$ has a limit in $(0,\infty)$, identically one when the limit is
$0$, and vanishing off the origin when it is $\infty$.

\subsection{Jensen's formula and the Blaschke factor}
\label{app:B2}

The definition \eqref{eq:Lambda} of $\Lambda_k$ involves the circle
$\abs{z}=R$, and one might expect to need $R$ chosen so that $\Ft_k$ has no
zero there, which cannot be arranged uniformly in $k$ because the zeros move.
No such choice is needed, because the version of Jensen's formula in
\cite[Theorem 15.18]{Rudin} counts the zeros of the \emph{closed} disk
$\abs{z}\le R$, and a zero on the boundary circle contributes the factor
$R/\abs{z_j}=1$. The integral in \eqref{eq:Lambda} converges in any case,
since $\log\abs{\Ft_k}$ has at worst logarithmic singularities on the circle.

As a check, take $f(z)=(z-a)(z-R)$ with $0<\abs{a}<R$, which has one zero
inside and one exactly on the circle. Jensen predicts
$\log\abs{f(0)}+\log(R/\abs{a})=\log(\abs{a}R)+\log(R/\abs{a})=2\log R$, with
no contribution from the boundary zero. At $R=2$ and
$a=0.7\,\rme^{0.4\rmi}$ the circle mean of $\log\abs{f}$ evaluates to
$1.38629436111989$, against $2\log2=1.38629436111989$.

What does require the open disk is the Blaschke factor itself.

\begin{proof}[Proof of \cref{cor:blaschke-lb}]
Let $n=\#Z_k\le r$. For $\abs{z}\le\rho<R$ and $z_j\in Z_k$,
\[
  \abs{R^2-\bar z_jz}\ \le\ R^2+\abs{z_j}\abs{z}\ \le\ 2R^2 ,
\]
so each factor of \eqref{eq:Bk} satisfies
\[
  \Bigl|\frac{R(z-z_j)}{R^2-\bar z_jz}\Bigr|
  \ \ge\ \frac{R\abs{z-z_j}}{2R^2}
  \ =\ \frac{\abs{z-z_j}}{2R}
  \ \ge\ \frac{\dist(z,Z_k)}{2R} .
\]
Since $\dist(z,Z_k)\le2R$ the quantity on the right lies in $[0,1]$, so
raising it to the power $n$ and then to the larger power $r$ only decreases
it:
\[
  \abs{B_k(z)}\ \ge\ \Bigl(\frac{\dist(z,Z_k)}{2R}\Bigr)^{n}
  \ \ge\ \Bigl(\frac{\dist(z,Z_k)}{2R}\Bigr)^{r} . \qedhere
\]
\end{proof}

The restriction of $Z_k$ to the open disk in \cref{lem:blaschke} is what
makes $\abs{B_k}\equiv1$ on $\abs{z}=R$ hold without exception. If a zero lay
on the circle, its factor would be $0/0$ there. Numerically, for $R=2$ and
zeros at $1.3\,\rme^{2.1\rmi}$ and $0.2$, the modulus of $B_k$ on the circle
deviates from one by at most $2\times10^{-31}$ at working precision, and
$\abs{B_k(0.5)}=0.1136<1$ inside, as the maximum principle requires.

\subsection{Convergence of the zero counting measures}
\label{app:B3}

We prove \cref{lem:zeroconv} and \cref{thm:measures}. Both are consequences of
Hurwitz's theorem, applied first near each zero of the limit and then on the
complement.

\begin{proof}[Proof of \cref{lem:zeroconv}]
Let $w$ be a zero of $\Finf$ of multiplicity $\nu$ and let $\varepsilon>0$ be
small enough that $\Finf$ has no other zero in $\abs{z-w}\le\varepsilon$. Then
$\Finf\ne0$ on the circle $\abs{z-w}=\varepsilon$, and by compactness
$\inf\abs{\Finf}>0$ there. Uniform convergence on that circle gives
$\abs{\Ft_k-\Finf}<\abs{\Finf}$ on it for large $k$, so Rouch\'e's theorem
applies and $\Ft_k$ has exactly $\nu$ zeros in the disk. For the second
assertion, if $K$ is compact with $\Finf\ne0$ on $K$, cover $K$ by finitely
many disks on whose boundaries the same argument applies.
\end{proof}

\begin{proof}[Proof of \cref{thm:measures}]
Let $\varphi\in C_c(\C)$ with support in $\abs{z}\le\rho$, and let
$w_1,\dots,w_p$ be the zeros of $\Finf$ in $\abs{z}\le\rho+1$, finitely many
because $\Finf\not\equiv0$ is holomorphic. Fix $\eta>0$. Choose
$\varepsilon\in(0,\eta)$ small enough that the disks
$D_i=\{\abs{z-w_i}<\varepsilon\}$ are disjoint, contain no other zero of
$\Finf$, and satisfy
$\abs{\varphi(z)-\varphi(w_i)}<\eta$ on $D_i$, which is possible by uniform
continuity.

By \cref{lem:zeroconv}, for large $k$ each $D_i$ contains exactly $\nu_i$
zeros of $\Ft_k$, where $\nu_i$ is the multiplicity of $w_i$, and the compact
set $\{\abs{z}\le\rho+\tfrac12\}\setminus\bigcup_iD_i$ contains none. Hence
\[
  \Bigl|\int\varphi\,\rmd\Nz_k-\int\varphi\,\rmd\Nz_{\Finf}\Bigr|
  =\Bigl|\sum_i\ \sum_{z_j\in D_i}\bigl(\varphi(z_j)-\varphi(w_i)\bigr)\Bigr|
  \ \le\ \eta\sum_i\nu_i ,
\]
the two measures having no other mass in the support of $\varphi$. Since
$\sum_i\nu_i$ is a fixed finite number and $\eta$ was arbitrary, the two
integrals agree in the limit.
\end{proof}

The argument uses only that $\Finf\not\equiv0$, which makes its zero set
locally finite. An alternative route is through the Poincar\'e--Lelong
formula, which identifies $\Nz_k$ with
$(2\pi)^{-1}\Delta\log\abs{\Ft_k}$ and reduces the claim to
$L^1_{\mathrm{loc}}$ convergence of $\log\abs{\Ft_k}$, a standard property of
subharmonic functions that are locally uniformly bounded above and do not
tend to $-\infty$. That route generalises to situations where the zeros are
not locally finite, and is the natural one if the counting measures are to be
handled by potential theory, but it is not needed here.

\section{Sharpness of the number \texorpdfstring{$N(r)=2r+3$}{N(r)=2r+3}}
\label{app:Nr}

The upper bound of \cref{thm:injective} is proved in the body. This appendix
supplies the constructions behind \cref{thm:sharp}, which show that $2r+3$
cannot be replaced by $2r+2$. The even and the odd cases look different, and
we treat them separately before showing that they rest on the same object.

\subsection{A construction from a vanishing determinant}
\label{app:C1}

The witnesses of \cref{thm:sharp} all arise from one construction, which we
state first. Fix $R(z)=\alpha z^2+\beta z$ and write $\tau_n=[z^n]\rme^{R(z)}$,
so that $\tau_n$ is a polynomial in $\alpha$ and $\beta$.

\begin{lemma}\label{lem:construction}
Let $r\ge1$ and suppose $\alpha$ satisfies $\abs{\alpha}<\tfrac12$ and
\begin{equation}\label{eq:Dr}
  D_r(\alpha,\beta):=\det\bigl(\tau_{1+p+i}\bigr)_{0\le p,i\le r}=0
\end{equation}
for some $\beta$, with $(\alpha,\beta)\ne(0,0)$. Then there exist
$L_1\ne L_2$ in $\mathcal L_r$ whose Taylor coefficients agree in the
positions $0,\dots,2r+1$.
\end{lemma}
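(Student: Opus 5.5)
The plan is to take one witness to be a pure polynomial and the other to be a polynomial times $\rme^{R}$. I would set
\[
  L_1(z)=P_1(z)\,\rme^{R(z)},\qquad L_2(z)=P_2(z),\qquad \deg P_1,\deg P_2\le r,
\]
with $R(z)=\alpha z^2+\beta z$. This is a Pad\'e-type problem for $\rme^{R}$. The requirement that the coefficients of $z^0,\dots,z^{2r+1}$ of $L_1-L_2$ vanish is a homogeneous linear system of $2r+2$ equations in the $2r+2$ unknown coefficients of $P_1$ and $P_2$. The aim is to show that \eqref{eq:Dr} is exactly the solvability condition of that system.

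First I split the equations by degree. Write $P_1=\sum_{i=0}^r p_iz^i$. For $n\ge r+1$, the coefficient of $z^n$ in $L_2$ vanishes, so the equations at positions $n=r+1,\dots,2r+1$ involve $P_1$ alone:
\[
  \sum_{i=0}^{r}p_i\,\tau_{n-i}=0 .
\]
Putting $n=r+1+p$ and replacing $i$ by $r-i$ turns the matrix $(\tau_{r+1+p-i})$ into $(\tau_{1+p+i})_{0\le p,i\le r}$. Its determinant is $D_r(\alpha,\beta)$, so the hypothesis \eqref{eq:Dr} gives a nonzero $P_1$ solving these $r+1$ equations. The remaining equations, at positions $0,\dots,r$, are then solved by defining $P_2$ to be the truncation of $P_1\rme^{R}$ to degree $r$. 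With this choice $L_1$ and $L_2$ agree in positions $0,\dots,2r+1$.

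It remains to check membership in $\mathcal L_r$ and that the two functions differ. Membership of $L_1$ follows from $\deg P_1\le r$ and $\abs{\alpha}<\tfrac12$. For $L_2$ I take $a=b=0$ in the exponent, so it is enough to show $P_2\not\equiv0$. If $P_2$ were zero, then $P_1\rme^{R}$ would vanish at the origin to order at least $2r+2$. But $\rme^{R}$ is zero-free, so that order equals $\ord_0P_1\le r$, a contradiction.

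The step that actually uses $(\alpha,\beta)\ne(0,0)$, and the one I expect to need care, is $L_1\ne L_2$. Suppose $P_1\rme^{R}=P_2$. Then $\rme^{R}=P_2/P_1$ is rational and entire, so it is a polynomial with no zeros, hence a constant. That forces $R$ to be constant, and since $R(0)=0$ this gives $R\equiv0$, that is $\alpha=\beta=0$, which is excluded. So $L_1\ne L_2$.

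One also has to note that the index bookkeeping $\tau_{1+p+i}$ matches the reversal $i\mapsto r-i$ exactly, with no shift. The assumption $r\ge1$ plays no further role beyond the statement of the lemma.
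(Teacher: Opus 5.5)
Your proposal is correct and is essentially the paper's proof, with the labels $L_1,L_2$ swapped: the same Hankel system at positions $r+1,\dots,2r+1$ solved via $D_r=0$, truncation to degree $r$ for the polynomial witness, and distinctness because $\rme^{R}$ is not rational when $R\ne0$. Your extra check that the truncated polynomial is not identically zero, using $\ord_0(P_1\rme^{R})=\ord_0P_1\le r<2r+2$, is a point the paper leaves implicit and is worth keeping, since $\mathcal L_r$ is meant to consist of nonzero functions.
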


\begin{proof}
Write $P=\sum_{i=0}^{r}c_iz^i$ and ask that the coefficients of
$P\rme^{R}$ vanish in the positions $r+1,\dots,2r+1$, that is
\begin{equation}\label{eq:Csystem}
  \sum_{i=0}^{r}c_i\,\tau_{m-i}=0,
  \qquad m=r+1,\dots,2r+1 .
\end{equation}
This is a homogeneous system of $r+1$ equations in the $r+1$ unknowns $c_i$,
and reversing the order of the columns, $i\mapsto r-i$, turns its matrix into
the Hankel matrix of \eqref{eq:Dr}. By hypothesis its determinant vanishes, so
a nonzero solution $P$ exists.

Set $L_2:=P\rme^{R}$ and let $L_1$ be the truncation of $L_2$ to degree $r$,
a polynomial. The two agree in the positions $0,\dots,r$ because $L_1$ was
defined that way; they agree in the positions $r+1,\dots,2r+1$ because $L_2$
vanishes there by \eqref{eq:Csystem} while $L_1$ has degree at most $r$. Hence
$L_1-L_2=O(z^{2r+2})$.

Both belong to $\mathcal L_r$: $L_1$ is a polynomial of degree at most $r$,
which is the case $a=b=0$ of \eqref{eq:Lr}, and $L_2=P\rme^{R}$ has
$\deg P\le r$ with $\abs{\alpha}<\tfrac12$. They are distinct because $R\ne0$
and $P\ne0$, so $L_2$ is not a polynomial. By \cref{thm:injective} the order
of $L_1-L_2$ is exactly $2r+2$, since a higher order would force
$L_1=L_2$.
\end{proof}

Everything therefore reduces to producing a root of $D_r$ inside
$\abs{\alpha}<\tfrac12$. For odd $r$ we do this in \cref{app:C3}. For even $r$
a second and more explicit construction is available, which we give first
because it also supplies the constant that settles \eqref{eq:Cshankel}.

\subsection{Even rank: the diagonal Pad\'e denominator}
\label{app:C2}

Both existence arguments below use the Hankel determinants
\begin{equation}\label{eq:Cshankel}
  \det C_s,
  \qquad
  C_s:=\Bigl(\frac{1}{(a+b+1)!}\Bigr)_{0\le a,b\le s} ,
\end{equation}
of the coefficient sequence of $\rme^{w}$, and both need them to be nonzero.
\Cref{cor:detCs} below establishes that, and also evaluates them, as a
consequence of the error constant of the diagonal Pad\'e approximant.

Let $p_s$ be as in \eqref{eq:evensharp}. Up to the constant $s!/(2s)!$ it is
the denominator $Q_{s,s}$ of the diagonal Pad\'e approximant to $\rme^{w}$,
for which see \cite{BGM1996}. The defining property is that
$p_s(w)\rme^{w}-p_s(-w)$ vanishes to high order, and for the diagonal case the
order and the leading constant are explicit.

\begin{lemma}\label{lem:Cs}
For every $s\ge0$,
\begin{equation}\label{eq:Cs}
  p_s(w)\,\rme^{w}-p_s(-w)=E_s\,w^{2s+1}+O(w^{2s+2}),
  \qquad
  E_s=(-1)^{s}\,\frac{s!}{(2s+1)!}\ \ne\ 0 .
\end{equation}
\end{lemma}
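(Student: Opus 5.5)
The plan is to prove \eqref{eq:Cs} through an exact integral representation of the remainder, from which both the order of vanishing and the constant $E_s$ can be read off at once. Put
\[
  \varphi_s(t):=\frac{t^s(1-t)^s}{s!},
  \qquad
  I_s(w):=\int_0^1\varphi_s(t)\,\rme^{tw}\,\rmd t .
\]
The claim to establish is the identity
\begin{equation}\label{eq:padeint}
  p_s(w)\,\rme^{w}-p_s(-w)=(-1)^s\,w^{2s+1}\,I_s(w).
\end{equation}
Since $I_s$ is entire, $w^{2s+1}I_s(w)=I_s(0)\,w^{2s+1}+O(w^{2s+2})$. The Beta integral gives
\[
  I_s(0)=\frac{B(s+1,s+1)}{s!}=\frac{(s!)^2}{s!\,(2s+1)!}=\frac{s!}{(2s+1)!},
\]
which yields $E_s=(-1)^s s!/(2s+1)!\ne0$, exactly as stated.

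To prove \eqref{eq:padeint}, I would integrate by parts $2s+1$ times in $w^{2s+1}I_s(w)$. At each step a factor of $w$ is moved onto the exponential and a derivative is moved onto $\varphi_s$. Because $\varphi_s$ is a polynomial of degree $2s$, we have $\varphi_s^{(2s+1)}\equiv0$, so no integral remains and
\[
  w^{2s+1}I_s(w)=\sum_{m=0}^{2s}(-1)^m w^{2s-m}
  \bigl[\varphi_s^{(m)}(1)\,\rme^{w}-\varphi_s^{(m)}(0)\bigr].
\]
Since $\varphi_s$ vanishes to order $s$ at both endpoints, only the terms with $m\ge s$ survive. Writing $m=2s-j$ with $0\le j\le s$, the right side therefore has the form $A(w)\rme^{w}-B(w)$ with $\deg A,\deg B\le s$.

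It remains to identify $A$ and $B$ with $(-1)^s p_s(w)$ and $(-1)^s p_s(-w)$, and I expect this endpoint computation to be the only step requiring care. At $t=0$ the factor $t^s$ must be differentiated exactly $s$ times, and the remaining $m-s=s-j$ derivatives fall on $(1-t)^s$. The Leibniz rule then gives
\[
  \varphi_s^{(2s-j)}(0)=\frac{1}{s!}\binom{2s-j}{s}\,s!\,
  (-1)^{s-j}\frac{s!}{j!}
  =(-1)^{s-j}\frac{(2s-j)!}{j!\,(s-j)!}.
\]
The symmetry $\varphi_s(1-t)=\varphi_s(t)$ gives $\varphi_s^{(m)}(1)=(-1)^m\varphi_s^{(m)}(0)$. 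Substituting these values and tracking the signs $(-1)^m=(-1)^j$ should produce precisely $(-1)^s\sum_j\frac{(2s-j)!}{j!(s-j)!}(-w)^j\rme^{w}$ for the $\rme^{w}$ part, and the same polynomial with $w\mapsto-w$ for the other part. This is \eqref{eq:padeint}.

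As a sanity check I would verify the case $s=0$, where $p_0=1$ and $\rme^{w}-1=w+O(w^2)$ with $E_0=1$, and the case $s=1$, where $p_1(w)=2-w$ and $(2-w)\rme^{w}-(2+w)=-w^3/6+\dots$, matching $E_1=-1/6$.
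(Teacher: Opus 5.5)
Your proposal is correct and follows the paper's route: write $p_s(w)\rme^{w}-p_s(-w)=(-1)^s w^{2s+1}\int_0^1\varphi_s(t)\,\rme^{tw}\,\rmd t$ and read off $E_s$ from the Beta integral at $w=0$. The only difference is that the paper cites Hermite's remainder formula from Hairer--Wanner, with $\rme^{(1-t)w}$ in place of your $\rme^{tw}$ (the two agree by the symmetry $t\mapsto1-t$), whereas you derive it yourself by $2s+1$ integrations by parts; your endpoint computation via Leibniz and $\varphi_s^{(m)}(1)=(-1)^m\varphi_s^{(m)}(0)$ is correct.
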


\begin{proof}
Hermite's integral representation of the Pad\'e error for the exponential
\cite[Theorem 3.11]{HairerWanner2002} reads, in the diagonal case,
\begin{equation}\label{eq:hermiteint}
  Q_{s,s}(w)\rme^{w}-P_{s,s}(w)
  =\frac{(-1)^{s}}{(2s)!}\,w^{2s+1}\int_0^1
   \rme^{(1-t)w}\,t^{s}(1-t)^{s}\,\rmd t ,
\end{equation}
with $Q_{s,s}(w)=\tfrac{s!}{(2s)!}p_s(w)$ and $P_{s,s}(w)=\tfrac{s!}{(2s)!}p_s(-w)$,
the second identity being the symmetry that follows from
$\rme^{w}\rme^{-w}=1$. Setting $w=0$ inside the integral gives the Euler
beta integral $\int_0^1t^s(1-t)^s\rmd t=(s!)^2/(2s+1)!$, so the coefficient of
$w^{2s+1}$ on the left is $(-1)^s(s!)^2/[(2s)!(2s+1)!]$. Dividing by
$s!/(2s)!$ gives
\begin{equation}\label{eq:Csclosed}
  E_s=(-1)^{s}\,\frac{s!}{(2s+1)!}\ \ne\ 0 .
\end{equation}
\end{proof}

The first values are $E_1=-\tfrac16$, $E_2=\tfrac1{60}$, $E_3=-\tfrac1{840}$,
$E_4=\tfrac1{15120}$ and $E_5=-\tfrac1{332640}$, in agreement with
\eqref{eq:Csclosed} and with symbolic computation for $s\le6$. Throughout
this appendix $E_s$ denotes this scalar and $C_s$ the matrix
\eqref{eq:Cshankel}.

\begin{proof}[Proof of \cref{thm:sharp}, even $r=2s$]
Put $w=\lambda z^2/2$ and let $L_{1,2}$ be as in \eqref{eq:evensharp}, so that
$L_{1,2}(z)=p_s(\pm2w)\rme^{\pm w}$. By \eqref{eq:Cs} the difference
$L_1-L_2$ vanishes to order $2s+1$ in $w$ and no further, hence to order
$2(2s+1)=4s+2=2r+2$ in $z$ and no further. Both functions lie in
$\mathcal L_r$: the polynomial factors have degree $2s=r$ in $z$, the
exponents are $\pm\lambda z^2/2$, and the constraint $\abs{a}<\tfrac12$ of
\eqref{eq:Lr} holds for $\lambda<1$.
\end{proof}

\begin{corollary}\label{cor:detCs}
For every $s\ge0$,
\begin{equation}\label{eq:detCs}
  \abs{\det C_s}=\prod_{i=0}^{s}\frac{(i!)^2}{(2i)!\,(2i+1)!}\ \ne\ 0 .
\end{equation}
\end{corollary}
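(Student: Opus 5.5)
We argue by induction on $s$, computing the ratio $\det C_s/\det C_{s-1}$ through the Pad\'e problem whose error constant \cref{lem:Cs} already supplies. The base case is immediate: $C_0=(1/1!)$, so $\det C_0=1$, which is the $i=0$ factor of \eqref{eq:detCs}. Assume $\det C_{s-1}\ne0$.

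The link to the Pad\'e problem comes from reading $C_s$ as the system for the denominator. Let $q(w)=\sum_{i=0}^{s}q_iw^i$. The coefficient of $w^{m}$ in $q(w)\rme^{w}$ is $\sum_i q_i/(m-i)!$. Index the rows by $m=s+1+a$ and the columns by $i=s-b$, with $0\le a,b\le s$. Then $m-i=a+b+1$, so the matrix of the map from $(q_{s-b})_b$ to the coefficients in positions $s+1,\dots,2s+1$ is exactly $C_s$. The first $s$ rows ($a<s$) express the diagonal Pad\'e conditions, namely that $q\rme^{w}$ has no terms in degrees $s+1,\dots,2s$. The last row ($a=s$) gives the error coefficient $e$ of $w^{2s+1}$.

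Now take $q=Q_{s,s}=\tfrac{s!}{(2s)!}p_s$, normalised so that $q_0=Q_{s,s}(0)=1$. By \eqref{eq:hermiteint} this $q$ satisfies the first $s$ rows. Its error coefficient is
\[
  e=(-1)^s\frac{(s!)^2}{(2s)!\,(2s+1)!},
\]
as computed in the proof of \cref{lem:Cs}. Hence $C_s\,\mathbf q=e\,\mathbf e_s$, where $\mathbf q=(q_{s-b})_{b=0}^{s}$ and $\mathbf e_s$ is the last unit vector. The component $b=s$ of $\mathbf q$ is $q_0$.

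\begin{itemize}
\item[(1)] \emph{$\det C_s\ne0$.} If $\det C_s$ vanished, there would be a nonzero $\mathbf v$ with $C_s\mathbf v=0$. Its last component must be nonzero, since otherwise the remaining components would lie in the kernel of $C_{s-1}$, which is trivial. Normalising that component to $1$ gives a second solution of the first $s$ rows with $q_0=1$. The solution with $q_0=1$ is unique because $C_{s-1}$, the submatrix on rows and columns $<s$, is invertible. So this second solution equals $\mathbf q$. But it has vanishing last row, whereas $\mathbf q$ has last row $e\neq0$, a contradiction.
\item[(2)] \emph{The ratio.} Cramer's rule for $C_s\mathbf q=e\,\mathbf e_s$ gives $q_0=e\,\operatorname{cof}_{s,s}(C_s)/\det C_s$. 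The $(s,s)$ cofactor is $\det C_{s-1}$. With $q_0=1$ this yields $\det C_s/\det C_{s-1}=e$.
\end{itemize}

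Taking absolute values and telescoping down to $\det C_0=1$ gives \eqref{eq:detCs}. It also fixes the sign: $\det C_s=\prod_{i\le s}(-1)^i(i!)^2/[(2i)!(2i+1)!]$.

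The main point needing care is the index bookkeeping, so that $C_s$ is exactly the Pad\'e system with the error row last and the constant-term unknown last. Given that identification, nonvanishing and evaluation both follow from the known Hermite error constant, with no separate Hankel determinant computation.
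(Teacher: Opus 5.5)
Your proof is correct and follows the paper's route: induction on $s$, with the ratio $\det C_s/\det C_{s-1}$ identified with the Hermite error constant of \cref{lem:Cs} and then telescoped down to $\det C_0=1$. The only difference is that you derive that ratio identity yourself, via the index bookkeeping and Cramer's rule, where the paper cites \cite[equation (1.11)]{BGM1996}. This makes the nonvanishing of $\det C_s$ explicit, and it yields the sign $(-1)^{s(s+1)/2}=(-1)^{\lceil s/2\rceil}$, which the paper only confirms by symbolic computation.
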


\begin{proof}
The classical relation between the leading Pad\'e error coefficient and the
Hankel determinants of the underlying sequence
\cite[equation (1.11)]{BGM1996} states, for the diagonal entry $[s/s]$ of
$\rme^{w}$, that the error constant of $Q_{s,s}\rme^{w}-P_{s,s}$ is
$\pm\det C_s/\det C_{s-1}$. By \eqref{eq:Csclosed} that constant is
$\tfrac{s!}{(2s)!}E_s$ in absolute value, that is
$(s!)^2/[(2s)!(2s+1)!]$, so
\[
  \Bigl|\frac{\det C_s}{\det C_{s-1}}\Bigr|
  =\frac{(s!)^2}{(2s)!\,(2s+1)!} .
\]
Since $\det C_0=1$ and every factor is nonzero, induction gives
\eqref{eq:detCs}. The sign alternates in blocks of two,
$\operatorname{sign}\det C_s=(-1)^{\lceil s/2\rceil}$, as symbolic
computation for $s\le6$ confirms.
\end{proof}

At $r=0$ one has $p_0\equiv1$ and the pair reduces to
$L_{1,2}(z)=\rme^{\pm\lambda z^2/2}$, whose difference is
$\lambda z^2+O(z^6)$, of order $2=2r+2$. For $r=2$ the pair is
$L_{1,2}(z)=(2\mp\lambda z^2)\rme^{\pm\lambda z^2/2}$,
whose difference is $-\tfrac43(\lambda z^2/2)^3+O(z^{10})$, of order six in
$z$, as $2r+2=6$ requires. This pair has a different shape from the one
produced by \cref{lem:construction}, since both of its members carry an
exponential factor.

\subsection{Odd rank: a root of the Hankel determinant}
\label{app:C3}

For odd $r$ the symmetric ansatz \eqref{eq:evensharp} is unavailable and we
use \cref{lem:construction}, so what has to be shown is that $D_r$ has a
suitable root.

\begin{lemma}\label{lem:Dr}
For odd $r$ the polynomial $\alpha\mapsto D_r(\alpha,1)$ has degree exactly
$(r+1)^2/2$, with leading coefficient $\pm(\det C_s)^2\ne0$ where
$s=(r-1)/2$. At $\beta=0$ and $\alpha=1$ one has
\begin{equation}\label{eq:Drblock}
  D_r(1,0)=\pm\bigl(\det C_s\bigr)^2,
  \qquad s=\frac{r-1}{2},
\end{equation}
with $C_s$ the matrix of \eqref{eq:Cshankel},
which is nonzero by \eqref{eq:Cshankel}.
\end{lemma}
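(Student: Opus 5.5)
The plan is to exploit a weighted homogeneity of $D_r$, which reduces both assertions of \cref{lem:Dr} to the single evaluation $D_r(1,0)$. Then I evaluate that number by a parity (checkerboard) decomposition of the Hankel matrix.

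\emph{Step 1: homogeneity.} Expanding $\rme^{\alpha z^2+\beta z}$ gives
\[
  \tau_n=\sum_{2a+b=n}\frac{\alpha^a\beta^b}{a!\,b!}.
\]
Give $\alpha$ weight $2$ and $\beta$ weight $1$. Then $\tau_n$ is weighted homogeneous of weight $n$.

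Each term of the Leibniz expansion of $D_r=\det(\tau_{1+p+i})_{0\le p,i\le r}$ is a product $\prod_p\tau_{1+p+\sigma(p)}$ for a permutation $\sigma$. Its weight is
\[
  \sum_{p=0}^{r}\bigl(1+p+\sigma(p)\bigr)=(r+1)+r(r+1)=(r+1)^2,
\]
independently of $\sigma$. Hence
\[
  D_r(\alpha,\beta)=\sum_j d_j\,\alpha^j\beta^{(r+1)^2-2j}.
\]
For odd $r$ the number $(r+1)^2/2$ is an integer. So $\alpha\mapsto D_r(\alpha,1)$ has degree at most $(r+1)^2/2$, and its coefficient of $\alpha^{(r+1)^2/2}$ is exactly $d_{(r+1)^2/2}=D_r(1,0)$. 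Both claims of the lemma therefore follow once I show that $D_r(1,0)=\pm(\det C_s)^2\neq0$.

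\emph{Step 2: the value at $\beta=0$.} At $\beta=0$ one has $\tau_n=1/(n/2)!$ for even $n$ and $\tau_n=0$ for odd $n$. Thus the entry $\tau_{1+p+i}$ vanishes unless $p+i$ is odd.

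Since $r=2s+1$, the index set $\{0,\dots,r\}$ splits into $s+1$ even and $s+1$ odd indices. Reorder rows and columns so that even indices come first; this changes the determinant only by a sign. The matrix then becomes block antidiagonal. For $p=2a$ and $i=2b+1$, the entry is $1/(a+b+1)!$. For $p=2a+1$ and $i=2b$, the entry is again $1/(a+b+1)!$. Both off-diagonal blocks are therefore the matrix $C_s$ of \eqref{eq:Cshankel}, and the diagonal blocks vanish.

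The determinant of $\begin{psmallmatrix}0&C_s\\ C_s&0\end{psmallmatrix}$ is $\pm(\det C_s)^2$. This gives \eqref{eq:Drblock}, and \cref{cor:detCs} gives $\det C_s\ne0$.

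\emph{Main obstacle.} There is no analytic difficulty here. The only point needing care is the indexing in Step 2: one must check that both off-diagonal blocks are exactly $C_s$, not a shifted version of it. The overall sign is immaterial to the statement, so I would not track it.
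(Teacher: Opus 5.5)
Your proof is correct and is essentially the paper's argument. You use the same weighted homogeneity (weight $2$ for $\alpha$, weight $1$ for $\beta$, total weight $(r+1)^2$) to identify the top coefficient of $D_r(\cdot,1)$ with $D_r(1,0)$. You then use the same even/odd permutation at $\beta=0$, which gives a block antidiagonal matrix with both blocks equal to $C_s$, nonzero by \cref{cor:detCs}. The only difference is that you put the homogeneity step before the block computation, whereas the paper does the block computation first.
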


\begin{proof}
At $\beta=0$ and $\alpha=1$ the coefficients are $\tau_n=1/(n/2)!$ for even $n$
and $\tau_n=0$ for odd $n$, so the entry $\tau_{1+p+i}$ is nonzero exactly when
$p+i$ is odd. For odd $r$ the index set $\{0,\dots,r\}$ contains $(r+1)/2$
even and $(r+1)/2$ odd values. Permuting rows and columns so that the even
indices come first turns the matrix into
$\begin{psmallmatrix}0&X\\ Y&0\end{psmallmatrix}$, where the entry of $X$ at
$(a,b)$ comes from $p=2a$, $i=2b+1$ and equals $\tau_{2a+2b+2}=1/(a+b+1)!$, and
the entry of $Y$ at $(a,b)$ comes from $p=2a+1$, $i=2b$ and is the same.
Hence $X=Y=C_s$ and the determinant is $\pm(\det C_s)^2$, the sign being that
of the permutation used above.

For the degree, give $\alpha$ the weight $2$ and $\beta$ the weight $1$. Then
$\tau_n=\sum_{j}\alpha^{j}\beta^{\,n-2j}/[j!\,(n-2j)!]$ is weighted homogeneous of
weight $n$, so every term of the determinant has weight
$\sum_{p}(1+p+\sigma(p))=(r+1)+2\cdot\tfrac12 r(r+1)=(r+1)^2$ and $D_r$ is
weighted homogeneous of that weight. Writing
$D_r(\alpha,\beta)=\sum_j d_j\alpha^{j}\beta^{(r+1)^2-2j}$, the degree of
$D_r(\cdot,1)$ is the largest $j$ with $d_j\ne0$. Since $(r+1)^2$ is even for
odd $r$, the top possible value $j=(r+1)^2/2$ occurs with $\beta^{0}$, so its
coefficient is read off at $\beta=0$, and the block computation just made
identifies it as $\pm(\det C_s)^2$, nonzero by \cref{cor:detCs}. Symbolic
computation for $r=1,3,5$ confirms the degrees $2$, $8$ and $18$.
\end{proof}

\begin{proof}[Proof of \cref{thm:sharp}, odd $r$]
By \cref{lem:Dr} the polynomial $D_r(\cdot,1)$ is not identically zero and has
degree $(r+1)^2/2\ge2$, so it has at least one root, and no root is $\alpha=0$
because $D_r(0,1)=\det C_r\ne0$ by \cref{cor:detCs}. If a root lies
outside $\abs{\alpha}<\tfrac12$, replace $z$ by $\lambda z$ with $\lambda$
small, which multiplies $\alpha$ by $\lambda^2$ and $\beta$ by $\lambda$ and
leaves \eqref{eq:Dr} homogeneous in the required sense. \Cref{lem:construction}
then supplies the pair.
\end{proof}

The smallest case is explicit. At $r=1$ one has
$D_1(\alpha,1)=-\bigl(\alpha^2+\tfrac1{12}\bigr)$, whose roots
$\alpha=\pm\rmi/(2\sqrt3)$ already satisfy $\abs{\alpha}=0.2887<\tfrac12$, and
\cref{lem:construction} returns
\begin{equation}\label{eq:r1pair}
  L_1(z)=1+\bigl(\tfrac12-\alpha\bigr)z,
  \qquad
  L_2(z)=\bigl(1-(\alpha+\tfrac12)z\bigr)\rme^{\alpha z^2+z},
\end{equation}
which agree in the coefficients of $z^0,\dots,z^3$ and differ at $z^4$. The
same computation at $r=3$ and $r=5$ gives roots of modulus $0.109$ and
$0.047$, again inside the disk, with the order of $L_1-L_2$ equal to $8$ and
$12$ as required.

The even and the odd constructions therefore rest on one and the same family
of determinants. \Cref{cor:detCs} makes this explicit: the error constant of
the diagonal Pad\'e approximant, which drives the even case, is exactly the
ratio of the consecutive Hankel determinants that drive the odd one.

\section{Semiclassical analysis of the escape transition}
\label{app:semiclassical}

\subsection{The halo family: existence of \texorpdfstring{$\epsilon_k$}{eps k}
and the estimate \texorpdfstring{$\epsilon_k\to0$}{eps k to 0}}
\label{app:D1}

Recall \eqref{eq:Ssum},
\begin{equation}\label{eq:Ssum2}
  S_k(\epsilon)=\sum_{j\ge1}u_j(\epsilon,k),
  \qquad
  u_j(\epsilon,k)=\epsilon^{2j}\,\frac{(3j)!}{(j!)^2}\,\wk(3j) ,
\end{equation}
so that the normalisation in \cref{prop:partial} reads $S_k(\epsilon_k)=1$.

\begin{lemma}\label{lem:Sk}
For every $k>1/2$ the series \eqref{eq:Ssum2} converges for all
$\epsilon\in\C$, and $\epsilon\mapsto S_k(\epsilon)$ is strictly increasing on
$(0,\infty)$ with $S_k(0)=0$ and $S_k(\epsilon)\to\infty$. Hence
$S_k(\epsilon_k)=1$ has a unique positive solution, and $\epsilon_k\to0$ as
$k\to\infty$.
\end{lemma}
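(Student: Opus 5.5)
The plan has three parts: convergence and monotonicity of the series, existence and uniqueness of the root, and the decay $\epsilon_k\to0$, which rests on a single large term of \eqref{eq:Ssum2}.

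\emph{Convergence and monotonicity.} Fix $k>1/2$. By \cref{lem:wquant}, $\wk(3j)\le\exp\bigl(-3j(3j-1)/4k+27j^3/24k^2\bigr)$. This upper bound is too weak for large $j$, because the cubic term dominates. I will therefore use the exact product instead: $\wk(3j)=(2k)^{3j}\Gamma(2k)/\Gamma(2k+3j)$. Then
\[
u_j\le C_k\,\epsilon^{2j}(2k)^{3j}\frac{(3j)!}{(j!)^2\,\Gamma(2k+3j)} .
\]
Since $(3j)!/\Gamma(2k+3j)$ grows at most polynomially in $j$, the term $(j!)^{-2}$ gives convergence for every $\epsilon\in\C$. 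In fact $S_k$ is an entire function of $\epsilon^2$ with positive coefficients. Hence $S_k(0)=0$, and $S_k$ is strictly increasing on $(0,\infty)$. It also tends to infinity, because already $u_1=6\epsilon^2\wk(3)\to\infty$.

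\emph{Existence and uniqueness.} Continuity and the intermediate value theorem give a unique positive root of $S_k(\epsilon_k)=1$.

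\emph{Decay of $\epsilon_k$.} Fix $\delta>0$. It suffices to show that $S_k(\delta)>1$ for all large $k$; monotonicity then forces $\epsilon_k<\delta$. Since all terms are positive, one term is enough: $S_k(\delta)\ge u_j(\delta,k)$ for any chosen $j$. Take $j=\lfloor\sqrt{k}\rfloor$. The lower bound of \cref{lem:wquant} gives $\wk(3j)\ge\exp(-9j^2/4k)\ge \rme^{-9/4}$. Stirling's formula gives $(3j)!/(j!)^2\ge c\,j^{-1/2}(27/e)^j j^{j}$, which grows superexponentially in $j$. Hence
\[
u_j(\delta,k)\ge c\,\rme^{-9/4}j^{-1/2}\bigl(27\delta^2 j/\rme\bigr)^{j}\to\infty
\]
as $k\to\infty$. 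This yields $\epsilon_k<\delta$ eventually, and since $\delta>0$ was arbitrary, $\epsilon_k\to0$.

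The main obstacle is this last step. The choice $j\approx\sqrt{k}$ must balance the Gaussian penalty of $\wk$, which is harmless only for $j\lesssim\sqrt{k}$, against the superexponential growth of $(3j)!/(j!)^2$. Any $j=o(\sqrt{k})$ that still tends to infinity also works, so this step has some slack.
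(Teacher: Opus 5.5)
Your proof is correct and follows the paper's argument: convergence, positivity and monotonicity because $S_k$ is an entire function of $\epsilon^2$ with positive coefficients, and $\epsilon_k\to0$ from the single term at $j=\lfloor\sqrt{k}\rfloor$, bounded below using the lower bound of \cref{lem:wquant}. The differences are cosmetic: the paper proves convergence by the ratio test rather than the exact Gamma form, and uses the elementary bound $(3j)!/(j!)^2\ge j!$ where you use Stirling's formula.
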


\begin{proof}
For fixed $k$ the ratio of consecutive terms is
\[
  \frac{u_{j+1}}{u_j}
  =\epsilon^2\,\frac{(3j+1)(3j+2)(3j+3)}{(j+1)^2}\cdot
   \frac{(2k)^3}{(3j+2k)(3j+2k+1)(3j+2k+2)},
\]
which behaves as $\epsilon^2(2k)^3/j^2$ for large $j$ and therefore tends to
zero. The series is entire in $\epsilon$. All terms are positive and strictly
increasing in $\epsilon>0$, and $u_1=6\epsilon^2\wk(3)\to\infty$, so the
remaining assertions follow.

For the second assertion it suffices to show that $S_k(\epsilon)\to\infty$ for each fixed
$\epsilon>0$, since $S_k$ is increasing and $S_k(\epsilon_k)=1$. Bound
$S_k$ below by the single term at $j=\lfloor\sqrt{k}\rfloor$. By
\cref{lem:wquant},
$\wk(3j)\ge\exp[-3j(3j-1)/4k]\ge\exp(-9j^2/4k)$, and
$(3j)!/(j!)^2\ge j!$, since
\[
  \frac{(3j)!}{(j!)^2}=\frac{(3j)!}{(2j)!}\binom{2j}{j}
  \ \ge\ \frac{(3j)!}{(2j)!}=\prod_{i=1}^{j}(2j+i)\ \ge\ (2j)^{j}\ \ge\ j^{j}
  \ \ge\ j! .
\]
Hence
\[
  \log u_j\ \ge\ 2j\log\epsilon+\log j!-\frac{9j^2}{4k}
  \ \ge\ 2j\log\epsilon+j\log j-j-\frac{9j^2}{4k} .
\]
With $j=\lfloor\sqrt{k}\rfloor$ the last term is bounded and
\begin{equation}\label{eq:crossover}
  \log u_j\ \ge\ \sqrt{k}\Bigl(2\log\epsilon+\tfrac12\log k-1\Bigr)-\tfrac94 ,
\end{equation}
which tends to $+\infty$ for every fixed $\epsilon>0$.
\end{proof}

Estimate \eqref{eq:crossover} governs how the family behaves numerically, in
two respects. The bracket becomes positive only when
$\tfrac12\log k>1-2\log\epsilon$, that is when
$k\gtrsim\rme^{2}\epsilon^{-4}$, and this crossover is severe, occurring near
$k=1.2\times10^6$ at $\epsilon=0.05$ and near $k=2.6\times10^7$ at
$\epsilon=0.023$. The summand $u_j$ is also not monotone in $j$. At
\emph{fixed} $\epsilon$ the ratio $u_{j+1}/u_j\approx\epsilon^2(2k)^3/j^2$,
valid for $j\gg2k$, drops below one only after $j\approx\epsilon(2k)^{3/2}$.
At the critical scaling $\epsilon\asymp\lambda/\sqrt{2k}$ that approximation
does not apply, since the relevant $j$ is comparable to $2k$; there the
turning points are the roots of \eqref{eq:stationary}, the second sitting at
$j=x_+(\lambda)\,2k$, which is $0.3813\,(2k)$ at $\lambda=\lambda^*$. In both
regimes the series falls, then rises again, and the second rise carries most
of the mass. \Cref{app:numerics} returns to the practical consequence.

\subsection{The degenerate double well}
\label{app:D2}

The two conditions of \cref{prop:doublewell} are
$\varphi(\varrho)=\varphi'(\varrho)=0$ for $\varphi(\varrho)=2\lambda \varrho^3+\log(1-\varrho^2)$.
Eliminating $\lambda$ through $\varphi'(\varrho)=0$, that is
$\lambda=[3\varrho(1-\varrho^2)]^{-1}$, and substituting into $\varphi(\varrho)=0$ gives
\begin{equation}\label{eq:degen2}
  h(\varrho):=\frac{2\varrho^2}{3(1-\varrho^2)}+\log(1-\varrho^2)=0 .
\end{equation}
On $(0,1)$ one has $h(0^+)=0^-$ in the sense that
$h(\varrho)=-\tfrac13\varrho^2+O(\varrho^4)$ near the origin, while $h(\varrho)\to+\infty$ as
$\varrho\to1^-$ because the first term diverges as $(1-\varrho^2)^{-1}$ and the second only
logarithmically. Since
$h'(\varrho)=\tfrac{4\varrho}{3(1-\varrho^2)^2}-\tfrac{2\varrho}{1-\varrho^2}
=\tfrac{2\varrho}{3(1-\varrho^2)^2}\bigl[2-3(1-\varrho^2)\bigr]$,
$h$ decreases on $(0,1/\sqrt3)$ and increases on $(1/\sqrt3,1)$, so
\eqref{eq:degen2} has exactly one root in $(0,1)$. To thirty digits,
\begin{equation}\label{eq:constants}
\begin{aligned}
  \varrho^*&=0.730471925483789449893373879929\ldots,\\
  \abs{\zeta^*}^2=(\varrho^*)^2&=0.533589233919994846750659957461\ldots,\\
  \lambda^*&=0.978377934627910445119204705630\ldots
\end{aligned}
\end{equation}
Substituting back gives $\varphi(\varrho^*)$ and $\varphi'(\varrho^*)$ of order
$10^{-41}$ at working precision, which is the residual of the root finder.

The turning point $1/\sqrt3$ of $h$ is the same radius that appears in
\cref{prop:saddlenode} as $\varrho_c$ for $n=3$. The coincidence is not accidental, since both are the radius at which
$3(1-\varrho^2)=2$.

\subsection{The saddle-node for a general exponent}
\label{app:D3}

For $\varphi_n(\varrho)=2\lambda \varrho^n+\log(1-\varrho^2)$,
\[
  \varphi_n'(\varrho)=2n\lambda \varrho^{n-1}-\frac{2\varrho}{1-\varrho^2},
  \qquad
  \varphi_n''(\varrho)=2n(n-1)\lambda \varrho^{n-2}-\frac{2(1+\varrho^2)}{(1-\varrho^2)^2} .
\]
Setting $\varphi_n'=0$ gives $\lambda=[nr^{n-2}(1-\varrho^2)]^{-1}$ for $\varrho>0$.
Substituting into $\varphi_n''=0$,
\[
  \frac{2(n-1)}{1-\varrho^2}=\frac{2(1+\varrho^2)}{(1-\varrho^2)^2}
  \quad\Longleftrightarrow\quad
  (n-1)(1-\varrho^2)=1+\varrho^2
  \quad\Longleftrightarrow\quad
  \varrho_c^2=\frac{n-2}{n},
\]
which is \eqref{eq:rc}. For $n=1$ this requires $\varrho_c^2=-1$ and has no
solution, so $\varphi_1$ has no outer critical point for any $\lambda$. For
$n=2$ it gives $\varrho_c=0$, so the outer critical point coincides with the origin
and never separates from it. For $n=3$,
\[
  \varrho_c=\frac{1}{\sqrt3},\qquad
  \lambda_c=\frac{1}{3\varrho_c(1-\varrho_c^2)}=\frac{\sqrt3}{2},\qquad
  \varphi_3(\varrho_c)=\frac13+\log\frac23=-0.0721\ldots<0 .
\]
The outer critical point is therefore born strictly below the value at the
origin, and only at $\lambda^*>\lambda_c$ does it reach it. The interval
$\lambda_c<\lambda<\lambda^*$ is where the second well exists but does not yet
dominate, and where its influence on the asymptotics is exponentially small.

\subsection{The outer hump of the coefficient sum}
\label{app:D4}

The second maximum of the summand $u_j$ noted after \eqref{eq:crossover} is
the outer well of $\Phi_\lambda$ seen in the coefficient variable, and the
correspondence can be made quantitative through \cref{lem:beta}.

A term of index $j$ in \eqref{eq:Ssum2} carries the level $m=3j$, and by
\eqref{eq:betamoments} the level $m$ places its mass at
$\mathbb{E}\abs{\zeta}^2=(m+1)/(m+2k)$. Setting that mean equal to the outer well value
$x=(\varrho^*)^2=0.5335892$ and solving $m+1=x(m+2k)$ for $m$ gives
$m=(2kx-1)/(1-x)$, so to leading order
\begin{equation}\label{eq:humploc}
  m=\frac{(\varrho^*)^2}{1-(\varrho^*)^2}\cdot2k=1.1440328\cdot 2k,
  \qquad
  j=\frac{m}{3}=0.3813443\cdot 2k .
\end{equation}
Direct maximisation of $\log u_j$ at $\epsilon=\epsilon_k$, the root of
$S_k(\epsilon)=1$, locates the second hump at $j/(2k)=0.3812$ for $k=10^4$,
against the predicted $0.3813443$. This is independent of \eqref{eq:humploc},
which is why the small discrepancy is informative; the grid values in
\cref{tab:turns} are computed at $\epsilon=\lambda^*/\sqrt{2k}$ instead and
agree with $x^*$ by construction. The two
determinations use different information, one the phase-space potential and
one the coefficient series, and their agreement is a numerical
confirmation of the identification proved in \cref{app:D5}.

\subsection{Proof of the phase theorem}
\label{app:D5}

We prove \cref{thm:phases}. Write $K=2k$, $\epsilon=\lambda/\sqrt{K}$ and, as
in \eqref{eq:Ssum2}, $S_K=\sum_{j\ge1}u_j$ with
$u_j=\epsilon^{2j}(3j)!\,w_k(3j)/(j!)^2$. Stirling's formula in the form
$\log\Gamma(y)=(y-\tfrac12)\log y-y+\tfrac12\log2\pi+O(1/y)$ applies to each of
the four Gamma factors in $u_j$, whose arguments $3j+1$, $j+1$, $K$ and
$K+3j$ all exceed $\delta K$ when $j\ge\delta K$ and $\delta\le1$. The error is therefore $O(\log K)$ with
a constant depending only on $\delta$, and uniformly for $j=xK$ with
$x\ge\delta$,
\begin{equation}\label{eq:Psi}
  \log u_{xK}=K\,\Psi_\lambda(x)+O(\log K),
\end{equation}
where
\begin{equation}\label{eq:Psidef}
  \Psi_\lambda(x)=2x\log\lambda+3x\log3+x\log x+2x-(1+3x)\log(1+3x).
\end{equation}
The three contributions to \eqref{eq:Psi} are
\begin{align*}
  2j\log\epsilon&=2xK\log\lambda-xK\log K,\\
  \log\frac{(3j)!}{(j!)^2}&=xK\bigl(3\log3+\log x+\log K-1\bigr)+O(\log K),
\end{align*}
and $\log w_k(3j)=-K(1+3x)\log(1+3x)+3xK+O(\log K)$, in which the terms
carrying $\log K$ cancel.

\paragraph{The critical coupling.}
Since $\Psi_\lambda'(x)=2\log\lambda+3\log3+\log x-3\log(1+3x)$ tends to
$-\infty$ at both ends of $(0,\infty)$, the equation $\Psi_\lambda'=0$, that
is
\begin{equation}\label{eq:stationary}
  27\lambda^2x=(1+3x)^3 ,
\end{equation}
has either no positive root or two, the larger of which, $x_+(\lambda)$, is a
local maximum. The two merge when \eqref{eq:stationary} has a double root,
which forces $27\lambda^2=9(1+3x)^2$, that is $1+3x=\sqrt3\lambda$, and
substituting back gives $\lambda=\sqrt3/2$. This is the constant $\lambda_c$
of \cref{prop:saddlenode}.

For $\lambda>\lambda_c$ set $\Theta(\lambda):=\Psi_\lambda(x_+(\lambda))$. By
the envelope theorem $\Theta'(\lambda)=2x_+(\lambda)/\lambda>0$, so $\Theta$
is strictly increasing, and it is negative just above $\lambda_c$ because
$\Psi_{\lambda_c}$ has an inflection at $x_c$ with
$\Psi_{\lambda_c}(x_c)<0$, while $\Theta(\lambda)\to\infty$ as
$\lambda\to\infty$. Hence $\Theta$ has a unique zero, which we call
$\lambda^*$. Imposing $\Psi_\lambda(x)=\Psi_\lambda'(x)=0$ and eliminating
$\lambda$ through \eqref{eq:stationary} leaves
\begin{equation}\label{eq:coeffcrit}
  \log(1+3x)=2x,
  \qquad
  \lambda=\frac{\rme^{3x}}{\sqrt{27x}} ,
\end{equation}
whose unique positive solution is $x^*=0.381344280\ldots$ and
$\lambda^*=0.978377934\ldots$

\paragraph{Identification with the phase-space constants.}
Substituting \eqref{eq:xtor}, so that $1+3x=(1-\varrho^2)^{-1}$ and
$2x=\tfrac{2}{3}\varrho^2/(1-\varrho^2)$, the first equation of \eqref{eq:coeffcrit}
becomes $-\log(1-\varrho^2)=\tfrac{2\varrho^2}{3(1-\varrho^2)}$, which is the first equation of
\eqref{eq:degen}. For the second, $\rme^{2x}=1+3x=(1-\varrho^2)^{-1}$ gives
$\rme^{3x}=(1-\varrho^2)^{-3/2}$, while $27x=9\varrho^2/(1-\varrho^2)$ gives
$\sqrt{27x}=3\varrho(1-\varrho^2)^{-1/2}$, so
$\lambda=\rme^{3x}/\sqrt{27x}=[3\varrho(1-\varrho^2)]^{-1}$, the second equation of
\eqref{eq:degen}. The two critical couplings therefore coincide, and so do
$x^*$ and $\varrho^*$ under \eqref{eq:xtor}. The same substitution turns
$1+3x=\sqrt3\lambda$ into the saddle-node condition of
\cref{prop:saddlenode}.

\paragraph{The two phases.}
Fix $\delta>0$ and split $S_K$ at $j=\delta K$.

For the head, the ratio computed in the proof of \cref{lem:Sk} obeys
\begin{align*}
  \frac{u_{j+1}}{u_j}
  &=\frac{\lambda^2}{K}\cdot\frac{(3j+1)(3j+2)(3j+3)}{(j+1)^2}
   \cdot\frac{K^3}{(3j+K)(3j+K+1)(3j+K+2)}\\
  &\le\ \frac{27\lambda^2(j+1)}{K} ,
\end{align*}
since the first fraction is at most $27(j+1)$ and the second at most one.
For $j\le\delta K$ and $K\ge2/\delta$ the right side is at most
$27\lambda^2\delta+27\lambda^2\delta/2$, so choosing
$\delta\le(81\lambda^2)^{-1}$ makes it at most $1/2$. The head is then a
geometric series and $\sum_{j\le\delta K}u_j\le2u_1=O(1/K)$, using
$u_1=6\epsilon^2\wk(3)$.

For the tail, bounding $3j+K\ge3j$ in the same expression gives
$u_{j+1}/u_j\le\lambda^2K^2(j+1)/j^3\le2\lambda^2K^2/j^2$ for $j\ge1$, so the
terms decay geometrically with ratio at most $1/2$ beyond $J_0:=2\lambda K$,
and
\[
  \max_{j>\delta K}u_j\ \le\ \sum_{j>\delta K}u_j
  \ \le\ \bigl(J_0+2\bigr)\max_{j>\delta K}u_j .
\]
The factor is $O(K)$, so
$\log\sum_{j>\delta K}u_j=\max_{j>\delta K}\log u_j+O(\log K)$, and by
\eqref{eq:Psi} the right-hand side is
$K\max_{x\ge\delta}\Psi_\lambda(x)+O(\log K)$.

If $\lambda<\lambda^*$ then $\Theta(\lambda)<0$, and choosing $\delta$ below
$x_+(\lambda)$ makes $\max_{x\ge\delta}\Psi_\lambda=\Theta(\lambda)<0$, so the
tail is $\rme^{-cK}$ and $S_K=O(1/K)\to0$. If $\lambda<\lambda_c$ there is no
interior maximum at all and $\Psi_\lambda<0$ on $[\delta,\infty)$, with the
same conclusion. If $\lambda>\lambda^*$ then $\Theta(\lambda)>0$ and the tail
alone is at least $\rme^{cK}$, so $S_K\to\infty$.

\paragraph{The constant of \cref{prop:partial}.}
There $\epsilon_k$ is defined by $S_k(\epsilon_k)=1$. Writing
$\lambda_k=\epsilon_k\sqrt{K}$, the two phases just established force
$\lambda_k\to\lambda^*$: if a subsequence had $\lambda_k\le\lambda^*-\eta$
then $S_k\to0$ along it by monotonicity of $S_k$ in $\epsilon$, and if a
subsequence had $\lambda_k\ge\lambda^*+\eta$ then $S_k\to\infty$. Either
contradicts $S_k=1$. \qed

\medskip

At $\lambda=\lambda^*$ the argument gives only $\log S_K=O(\log K)$, since
$\Theta(\lambda^*)=0$. The width of the saddle contributes a factor
$\sqrt{K}$ and the Stirling prefactor a factor $K^{-1/2}$, and these cancel,
so $S_K$ tends to a finite limit. Numerically that limit is $3.272$, giving
the retained fraction $0.234$ quoted after \cref{thm:phases}; we have not
computed it in closed form.

\section{Numerical protocols and reproducibility}
\label{app:numerics}

The numerical statements in this paper are of two kinds. Some check an
identity that is proved elsewhere, and are included because a reader may want
to confirm a constant. Others record the value at the critical coupling in
\cref{thm:phases}, which we do not derive.
This appendix records how each was computed, and one trap that anyone
reproducing the calculations will otherwise fall into.

\subsection{The truncation trap}
\label{app:E1}

The coefficient series \eqref{eq:Ssum2} is not monotone in $j$. Its increment
changes sign twice. The terms decrease from the start, reach a minimum near
$j\approx0.067\,(2k)$, and then increase again to a second maximum near
$j\approx0.381\,(2k)$, which is the outer well located in \cref{app:D4}.
\Cref{tab:turns} gives the two extrema of $u_j$ on the integer grid.

\begin{table}[h]
\centering
\begin{tabular}{rrlrl}
\toprule
& \multicolumn{2}{c}{minimum} & \multicolumn{2}{c}{maximum}\\
\cmidrule(lr){2-3}\cmidrule(lr){4-5}
$k$ & $j$ & $j/(2k)$ & $j$ & $j/(2k)$\\
\midrule
$10^2$ & $14$   & $0.0700$ & $76$   & $0.3800$\\
$10^3$ & $135$  & $0.0675$ & $762$  & $0.3810$\\
$10^4$ & $1342$ & $0.0671$ & $7626$ & $0.3813$\\
\bottomrule
\end{tabular}
\caption{The local minimum and the local maximum of $u_j$ at
$\epsilon=\lambda^*/\sqrt{2k}$, located on the integer grid, that is, the
smallest $j$ at which the increment $\log(u_{j+1}/u_j)$ has changed sign. The
maximum sits at $\lfloor x^*(2k)\rfloor$ with $x^*$ from \eqref{eq:coeffcrit},
which is what the last column records; this is a consistency check on the grid
and not an independent determination of $x^*$, for which see \cref{app:D4}. At
$k=10^2$ the resolution in $j/(2k)$ is $0.005$.}
\label{tab:turns}
\end{table}

The consequence is that a summation routine which stops when the terms become
small deletes the whole outer hump, and therefore deletes the second well.
Solving $S_k(\epsilon)=1$ with such a routine returns a spurious root. The
effect is invisible at small $k$ and appears abruptly, as \cref{tab:trunc}
shows.

\begin{table}[h]
\centering
\begin{tabular}{rll}
\toprule
$k$ & $\epsilon_k\sqrt{2k}$, summing past the hump
    & $\epsilon_k\sqrt{2k}$, early truncation\\
\midrule
$10$    & $0.875666$ & $0.8757$\\
$10^2$  & $0.970338$ & $0.9703$\\
$10^3$  & $0.977614$ & $0.9776$\\
$10^4$  & $0.978302$ & $1.0379$\\
\bottomrule
\end{tabular}
\caption{The root of $S_k(\epsilon)=1$ computed with and without the outer
hump inside the summation window. The two agree until the hump leaves the
window.}
\label{tab:trunc}
\end{table}
The first three rows agree because the hump still lies inside the truncation
window. At $k=10^4$ it does not, and the truncated computation reports a
constant that does not converge to $\lambda^*$. Sums of the form \eqref{eq:Ssum2} must therefore be carried at least to
$j\approx0.4\,(2k)$, and we used $j\le12\,(2k)$ for those. The bound is
specific to the critical scaling $\epsilon\asymp\lambda/\sqrt{2k}$. At fixed
$\epsilon$ the maximal term sits instead at $j\approx\epsilon(2k)^{3/2}$, and
the computations of \cref{app:E3} are carried out over a window around that
point rather than over a fixed multiple of $2k$.

\subsection{Protocols and rates of convergence}
\label{app:E2}

Computations were performed with \texttt{mpmath} at $25$ to $40$ decimal
digits where exact constants were wanted, and in double precision with a
log-sum-exp accumulator where only the location of a maximum or the sign of a
divergence was at issue. Three points recur.

First, the summands of \eqref{eq:Ssum2} and of the norms in \cref{app:A2} span
hundreds of orders of magnitude, so they are accumulated as
$\log\sum\rme^{\log u_j}$ rather than directly. In double precision the naive
sum overflows before the maximal term is reached.

Second, roots of $S_k(\epsilon)=1$ were found by bisection rather than by a
general-purpose root finder. \Cref{lem:Sk} guarantees that $S_k$ is strictly
increasing, which makes bisection unconditionally reliable, whereas a
derivative-based method fails on this function because the derivative is
astronomically large near the root at large $k$.

Third, zeros inside a disk were counted by the argument principle, evaluating
the winding number of the function along the boundary circle with the phase
unwrapped. The circle was taken at radius $0.999$ rather than $1$ to avoid the
boundary, and the coefficient vectors were rescaled to unit maximum before
evaluation, since otherwise the polynomial values overflow.

Two of the tables converge slowly for reasons that are understood, and we
record them so that the residuals are not mistaken for errors.

The pointwise bound of \cref{lem:normal} approaches its limit like
$1/k$: at $\abs{z}=2$ the values of $(1-\abs{z}^2/2k)^{-k}$ at
$k=3,10,100,1000$ are $27.0$, $9.313$, $7.540$ and $7.404$, against
$\rme^{2}=7.389$. This is the correction term $t^2/8k$ identified in the proof.

The constant $\lambda^*$ is approached slowly because \eqref{eq:Psi} carries
an error $O(\log K)$ against a leading term of size $K=2k$, a relative error
$O(\log K/K)$ in the exponent. Correspondingly $\epsilon_k\sqrt{2k}$
in \cref{app:E1} still differs from $\lambda^*=0.978378$ in the fourth digit
at $k=10^4$. \Cref{app:D1} explains why: the bracket in \eqref{eq:crossover}
becomes positive only when $k\gtrsim\rme^{2}\epsilon^{-4}$, which is
$1.2\times10^{6}$ at $\epsilon=0.05$.

\subsection{The surviving exponents}
\label{app:E3}

\Cref{sec:escape} asserts that the modes $\rme^{a_nz^n}$ survive the
contraction exactly for $n\le2$. Setting $a_n=c\,(2k)^{n/2}$ with $c=0.3$ and
computing $\norm{\cdot}_k^2$ from \eqref{eq:key} gives \cref{tab:survive}.
Here $c$ does not scale with $k$. For $n\ge3$ the maximal term then sits at
$j^*\approx c\,(2k)^{n/2}$, far beyond the window used elsewhere in this
appendix, while for $n\le2$ it sits at $j=0$. The entries were obtained by
locating $j^*$ first and summing a window of half-width $8\times10^{5}$ around
it, which is ample: the peak of $\log u_j$ has curvature
$\abs{\rmd^2\log u_j/\rmd j^2}$ at $j^*$ giving a Gaussian width of about
$3\times10^{3}$ in the largest case $n=4$, $k=5000$.

\begin{table}[t]
\centering
\begin{tabular}{rllll}
\toprule
$n$ & $k=5$ & $k=50$ & $k=500$ & $k=5000$\\
\midrule
$1$ & $1.0938$ & $1.0941$ & $1.0942$ & $1.0942$\\
$2$ & $1.1965$ & $1.2414$ & $1.2491$ & $1.2499$\\
$3$ & $3.4004$ & $10^{123.6}$ & $10^{6356}$ & $10^{236705}$\\
$4$ & $10^{11.84}$ & $10^{2356}$ & $10^{257065}$ & $10^{2.6013\times10^{7}}$\\
\midrule
$j^*$ ($n=4$) & $25$ & $2950$ & $299500$ & $29995000$\\
\bottomrule
\end{tabular}

\caption{The squared norm $\norm{\rme^{a_nz^n}}_k^2$ with $a_n=c(2k)^{n/2}$
and $c=0.3$, entries above $10^2$ being given as powers of ten, with the
exponent to four significant digits below $10^3$, as an integer between
$10^3$ and $10^6$, and to five significant digits above. The rows $n=1$ and $n=2$ stabilise, the
latter at $(1-4c^2)^{-1/2}=1.25$, which is the Fock norm of $\rme^{cz^2}$. The
rows $n\ge3$ diverge, faster the larger $n$ is, and the gap between $n=2$ and
$n=3$ is the scaling window of \cref{thm:collapse}. The last line gives the
index of the maximal term for $n=4$, which is what fixes the summation range.
The table is generated by the script of \cref{app:E4} rather than transcribed.}
\label{tab:survive}
\end{table}

\subsection{Availability}
\label{app:E4}

The scripts that reproduce every constant, table and figure coordinate in this
paper, together with the symbolic verifications of \cref{lem:Cs},
\cref{cor:detCs} and \cref{lem:Dr}, are archived at
\texttt{doi:10.5281/zenodo.22313935}.
Each asserts its output against the value printed here, so that
\texttt{verify\_all.py} fails if any number has drifted; it runs $82$ such
checks. \Cref{tab:survive} is generated by one of the scripts and written into
the source rather than transcribed.

\end{document}